\theoremstyle{thmstyleone}%
\newtheorem{theorem}{Theorem}
\newtheorem{proposition}[theorem]{Proposition}%
\theoremstyle{thmstyletwo}%
\theoremstyle{thmstylethree}%
\newtheorem{definition}{Definition}%
\begin{document}

\title[Channel Polarization of 2D-input Quantum Symmetric Channels]{Channel Polarization of Two-dimensional-input Quantum Symmetric Channels}


\author[1]{\fnm{Zhengzhong} \sur{Yi}}\email{zhengzhongyi@cs.hitsz.edu.cn}

\author[1]{\fnm{Zhipeng} \sur{Liang}}\email{liangzhipenghitsz@163.com}

\author*[1]{\fnm{Xuan} \sur{Wang}}\email{wangxuan@cs.hitsz.edu.cn}

\affil[1]{\orgdiv{Harbin Institute of Technology, Shenzhen}. \orgaddress{\city{Shenzhen}, \postcode{518055}, \country{China}}}




\abstract{Being attracted by the property of classical polar code, researchers are trying to find its analogue in quantum fields, which is called quantum polar code. The first step and the key to design quantum polar code is to find out for the quantity which can measure the quality of quantum channels, whether there is a polarization phenomenon which is similar to classical channel polarization. Coherent information is believed to be the quantum analogue of classical mutual information and the quantity to measure the capacity of quantum channel. In this paper, we define a class of quantum channels called quantum symmetric channels, and prove that for quantum symmetric channels, under the similar channel combining and splitting process as in the classical channel polarization, the maximum single letter coherent information of the coordinate channels will polarize. That is to say, there is a channel polarization phenomenon in quantum symmetric channels.}

\keywords{Quantum symmetric channels, Coherent information, Basis transition probability matrix, Channel polarization}



\maketitle

\section{Introduction}

The potential to solve different problems more efficiently than the state-of-the-art classical computing makes quantum computing attract worldwide attention. To give full play to this potential, quantum computers should have sufficient reliable qubits. However, at present, physical qubits are quite vulnerable, which restricts the development of large-scale fault-tolerant quantum computing and exploiting the advantages of quantum computing. Fortunately, quantum error correcting codes (QECCs) discovered by Shor and Steane provide us with a solution to this problem\cite{PhysRevA.52.R2493,PhysRevLett.77.793}.

Similar to classical error correcting codes (CECCs), QECCs encoding $n$ (which is called \textbf{code length}) less reliable physical qubits (with error rate $p_{0}$) in a certain way to obtain $k$ ($k<n$) more reliable logic qubits (with error rate $p_{L}<p_{0}$ after decoding and recovery). The ratio $k/n$ is called coding rate. The higher it is, the more efficient the QECC is. No matter for CECCs or QECCs, to improve the reliability of the logic bits/qubits, we often need to increase the code length. Good CECCs have constant or increasing coding rate with code length increasing, some\cite{1057683,748992,mackay1996near,5075875} can even asymptotically achieve the channel capacity which is a quantity measures the upper limit of coding rate. However, for most QECC schemes, the larger the code length $n$ is, the lower the coding rate will be, which will results in excessive physical qubits overhead. This makes reliable large-scale fault-tolerant quantum computing needs millions of physical qubits, which is very difficult to realize for the current technology. For Surface Code\cite{bravyi1998quantum,PhysRevA.89.022321,bullock2007qudit,zemor2009cayley,wang2009threshold,PhysRevA.80.052312,bravyi2012subsystem,ghosh2012surface,PhysRevLett.109.180502,PhysRevLett.109.160503,PhysRevA.86.042313,PhysRevA.86.032324,fowler2013optimal,barends2014superconducting,hill2015surface,delfosse2016linear,PhysRevApplied.8.034021,huang2020alibaba} which is the most promising QECC at present, and the concatenated QECCs\cite{doi:10.1137/S0097539799359385,knill1996concatenated,knill2005quantum} which is the earliest and also a promising method to realize fault-tolerant quantum computing, their coding rate tends to 0 with the increase of its code length. For quantum low-density parity check (QLDPC) codes\cite{gottesman2013fault,tillich2013quantum,freedman2013quantum,guth2014quantum,kovalev2013fault,hastings2013decoding,breuckmann2016constructions,breuckmann2017hyperbolic,breuckmann2021single,grospellier2021combining}, though their coding rate is constant with code length increasing, whether their coding rate can achieve the channel capacity has not been proven. In some cases,  such as hyperbolic codes\cite{hastings2013decoding,breuckmann2016constructions,breuckmann2017hyperbolic,grospellier2021combining,breuckmann2021single}, which is a family of QLDPC codes, have a constant coding rate, but their coding rate does not seem to achieve the quantum channel capacity (we measure this capacity by maximum single letter coherent information of the quantum channel, which is explained in Subsection \ref{2.4}). For instance, in ref \cite{breuckmann2021single}, the  asymptotic coding rate of 4D-hyperbolic code is 0.18, but the  quantum channel capacity of the independent $X/Z$-flip noise channel considered by the authors with error rate $p=0.04$ (i.e. a qubit undergoes independently an $X$ error with probability $p$ or a $Z$ error with probability $p$) is 0.5178, which is rather larger than its asymptotic coding rate.

Classical polar code (CPC) is the only error correcting code whose coding rate has been proven that it can reach the classical channel capacity\cite{5075875}. The high coding rate has attracted researchers' attention. In the past decade, researchers are trying to apply the channel polarization idea of CPC to quantum channels and find the analogue of CPC in quantum fields, which is called quantum polar code (QPC)\cite{guo2013polar,renes2012efficient,wilde2013polar,hirche2015polar,7208851,7370934,8989387,8815775}.

The first step and also the key to design QPC is to figure out whether quantum channels will polarize which is similar to classical channel polarization discovered in \cite{5075875}. Some previous studies\cite{6302198,guo2013polar,renes2012efficient,wilde2013polar,hirche2015polar,7208851,7370934,8815775,6284203,goswami2021quantum} have proved some quantities of classical-quantum channels whose inputs are classical bits and outputs are qubits, such as classical symmetric capacity\cite{guo2013polar}, Bhattacharyya parameter\cite{guo2013polar,goswami2021quantum}, and classical symmetric Holevo information\cite{6302198} will polarize. Some studies\cite{wilde2013polar,renes2012efficient,6284203,7208851,goswami2021quantum} has referred to coherent information, which is a quantum quantity of quantum channels and is believed to be a quantity to measure the channel capacity of pure quantum channels\cite{9241807,8242350,holevo2020quantum,5592851,holevo2010mutual,bennett2004quantum,PhysRevA.57.4153,PhysRevA.55.1613,kretschmann2004tema,shor2003capacities}, but the coherent information of the classical-quantum channels is just the classical mutual information. Based on the polarization of classical-quantum channels, researchers have proposed some quantum polar coding schemes\cite{6302198,guo2013polar,renes2012efficient,wilde2013polar,hirche2015polar,7208851,7370934,8815775,6284203,goswami2021quantum}. Unfortunately, they cannot be applied to quantum computing whose quantum channels are pure quantum channels. In 2019, Dupuis\cite{8989387} prove that the symmetric coherent information(the coherent information of quantum channel evaluated for Bell-state input\cite{renes2012efficient}) of pure quantum channels will polarize. However, unlike the classical symmetric capacity having been proved that it is the channel capacity of classical symmetric channels, no one has proved that the symmetric coherent information is the maximum single letter coherent information of pure quantum channels.

In this paper, we focus on proving the polarization of pure quantum channels. We first define a class of quantum channels called quantum symmetric channels (QSCs, this term has been used in \cite{javidian2021quantum}, but in this paper, it has different meaning), and prove some basic properties of them. For QSC, we prove that its maximum single letter coherent information (MSLCI) equals to its symmetric coherent information. Then we prove the MSLCI will polarize in two-dimensional-input QSC under the quantum channel combining and splitting process. Unlike the proof method used by Dupuis\cite{8989387}, our proof uses the basis transition probability matrix proposed by us.

The rest of this paper is organized as follows.  Some preliminary knowledge, including coherent information, quantum symmetric channels, quantum channel combing and splitting, will be introduced in Sect. \ref{2}. In Sect. \ref{3}, we will prove that the combined channel is quantum symmetric channel and all the coordinate channels are two-dimensional-input quasi quantum symmetric channels. In Sect. \ref{4}, we will prove the MSLCI of the coordinate channels will polarize. In Sect. \ref{5}, we conclude our work.

\section {Preliminaries}
\label{2}
\subsection{Coherent information}
\label{2.1}
	Coherent information is proposed by Schumacher and Nielsen  to measure the amount of quantum information conveyed in the noisy channel\cite{PhysRevA.54.2629}. It is believed to be the analogue  of classical mutual information in quantum information theory\cite{nielsen2002quantum}.

\begin{figure}[!t]
\centering
\includegraphics[width=0.85\textwidth]{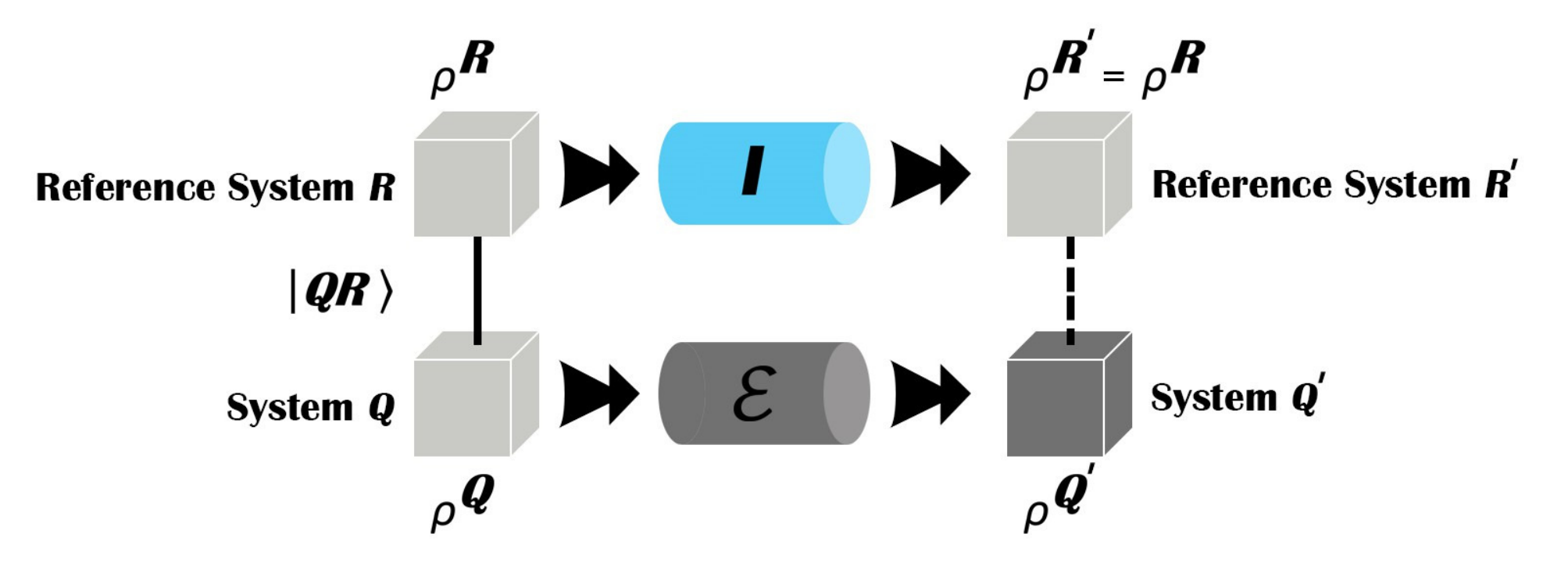}
\caption{System $Q$ and its reference system $R$. System $Q$ is subjected to a quantum channel $\mathcal{E}$. Notice that the reference system $R$ is only subjected to an identity operator $I$, namely, $R^{'}=R$. The solid line between system $Q$ and its reference system $R$ indicates $Q$ and $R$ are in a maximally entangled state, which means in a certain basis, the measurement results of $Q$ and $R$ have an one-to-one relationship. Once you get the measurement results of $Q$, you know the state of $R$, and vice versa. Hence, we use solid line to represent this “strong” relationship. The dashed line indicates there might be still some entanglement between $Q^{'}$ and $R^{'}$ and the one-to-one relationship might not exist.}
\label{fig:Quantum_channel}
\end{figure}
As shown in Fig. \ref{fig:Quantum_channel}, suppose the state of a quantum system $Q$ is $\rho^{Q}$,
\begin{equation}
\rho^{Q}= \sum_{i}p_{i}\ket{i^{Q}}\bra{i^{Q}}
\label{eq:rhoQ}
\end{equation}
where $\ket{i^{Q}}$ is the basis for $Q$. Suppose $Q$ is subjected to a quantum channel $\mathcal{E}$ which change system $Q$ to $Q^{'}$ and maps the state to $\rho^{Q^{'}}$, namely,
\begin{equation}
\rho^{Q^{'}}= \mathcal{E}(\rho^{Q})
\label{eq:rhoQ'}
\end{equation}

For system $Q$, we can always introduce a reference system $R$ which has the same state space as $Q$ to purify $Q$, namely, map the mixed state $\rho^{Q}$ to a pure state $\ket{QR}$. The state of system $Q$ and $R$ can be expressed as
\begin{equation}
\ket{QR}=\sum_{i} \sqrt{p_{i}}\ket{i^{Q}}\ket{i^{R}}
\label{eq:pure state QR}
\end{equation}
where $\ket{i^{R}}$ is the basis for $R$, which is the same as $\ket{i^{Q}}$.
	
Schumacher  defined an intrinsic quantity to $Q$ called entropy exchange $S_{e}$\cite{PhysRevA.54.2614},
\begin{equation}
S_{e}\equiv S(RQ^{'})
\label{eq:entropy exchange}
\end{equation}
where $S(RQ^{'})$ is the von Neumann entropy of system $RQ^{'}$.

Coherent information in the process shown in Fig. \ref{fig:Quantum_channel} is defined as
\begin{equation}
I(Q;Q^{'})\equiv S(Q^{'})-S_{e}=S(Q^{'})-S(RQ^{'})
\label{eq:coherent information}
\end{equation}
where $S(Q^{'})$ is the von Neumann entropy of system $Q^{'}$. It's obvious that once $Q$ and $\mathcal{E}$ are given, $Q^{'}$ is determined. Hence, we can also write $I(Q;Q^{'})$ as $I(\rho^{Q},\mathcal{E})$.

Assuming the operation elements of $\mathcal{E}$ are $\{E_{k}\}$, then $S_{e}$ can be calculated by
\begin{equation}
S_{e}=S(W)
\label{eq:Se=S(W)}
\end{equation}
where $W_{ij}=tr\left(E_{i}\rho^{Q}E_{j}^\dagger\right)$.

It should be emphasized that in this paper, the coherent information which we consider is the single letter coherent information (SLCI). Due to the superadditivity\cite{hastings2009superadditivity} of quantum channel, single letter coherent information is the lower bound of quantum channel capacity. Researchers\cite{cubitt2015unbounded,smith2008quantum} believes the quantum channel capacity should be more accurately measured by $I\left(\rho^{Q}, \mathcal{E}^{\otimes n}\right)$ which is defined by 
\begin{equation}
I\left(\rho^{Q}, \mathcal{E}^{\otimes n}\right) \equiv \lim _{n \rightarrow \infty} \frac{1}{n} I\left(\rho^{Q}, \mathcal{E}\right)
\end{equation}
Whether will $I\left(\rho^{Q}, \mathcal{E}^{\otimes n}\right)$ of the coordinate channels polarize has not been proven in this paper.

\subsection{Quantum symmetric channels}
\label{2.2}
	In classical information theory, there is a class of channels called classical symmetrical channels (CSCs) whose properties have been well-studied, such as binary symmetric channel (BSC). The behavior of a classical channel can be depicted by a transition probability matrix (TPM). Assume the input variable is $A$, which takes value from $\{a_{1},a_{2},\cdots,a_{K}\}$, and the output variable is $B$, which takes value from $\{b_{1},b_{2},\cdots,b_{L}\}$, then we can write out its TRM as follows.

\begin{equation}
\bordermatrix{
		& B=b_{1}             & \cdots & B=b_{L}           \cr
A=a_{1} & p(B=b_{1}\lvert A=a_{1})  & \cdots &p(B=b_{L}\lvert A=a_{1}) \cr
A=a_{2} & p(B=b_{1}\lvert A=a_{2})  & \cdots &p(B=b_{L}\lvert A=a_{2}) \cr
\vdots  &\vdots               & \ddots &\vdots             \cr
A=a_{K} & p(B=b_{1}\lvert A=a_{K})  & \cdots &p(B=b_{L}\lvert A=a_{K})
}
\label{eq:transition probability matrix}
\end{equation}

	If each row of the TPM is a permutation of the first row, then this channel is symmetric with respect to its input. If each column of the TPM is a permutation of the first column, then this channel is symmetric with respect to its output. If a channel is symmetric with respect to both of its input and output, then this channel is called a symmetric channel. If a channel is symmetric with respect to its input but might not to its output, and its TPM can be divided into several submatrices by column, each of which satisfies that each column of it is a permutation of the first column of it, then this channel is called a quasi symmetric channel. 

For some quantum channels, given certain basis of the input space and the output space, we may also find a probability matrix similar to TRM of classical channels. For example, for bit flip channel, if the input state $\ket{Q}$ is $\ket{0}$ (with probability $q$) or $\ket{1}$ (with probability $1-q$), then the output state $\ket{Q^{'}}$ will also take value from $\ket{0}$ or $\ket{1}$, and we can figure out $p(\ket{Q^{'}}=\ket{0}\lvert \ket{Q}=\ket{0})$, $p(\ket{Q^{'}}=\ket{1}\lvert \ket{Q}=\ket{0})$, $p(\ket{Q^{'}}=\ket{0}\lvert \ket{Q}=\ket{1})$, $p(\ket{Q^{'}}=\ket{1}\lvert \ket{Q}=\ket{1})$. Then we can write out a probability matrix as follows.

\begin{equation}
\bordermatrix{
		        & \ket{Q^{'}}=\ket{0}                    & \ket{Q^{'}}=\ket{1}                    \cr
\ket{Q}=\ket{0} & p(\ket{Q^{'}}=\ket{0}\lvert \ket{Q}=\ket{0}) & p(\ket{Q^{'}}=\ket{1}\lvert \ket{Q}=\ket{0}) \cr
\ket{Q}=\ket{1} & p(\ket{Q^{'}}=\ket{0}\lvert \ket{Q}=\ket{1}) & p(\ket{Q^{'}}=\ket{1}\lvert \ket{Q}=\ket{1})
}
\label{eq:transition probability matrix of bit flip channel}
\end{equation}

Here, we name matrix (\ref{eq:transition probability matrix of bit flip channel}) basis transition probability matrix (BTPM), for it shows the transition relationship between the basis of input and output spaces. Different from TPM of classical channels, the above BTPM doesn’t seem to fully depicted the behavior of bit flip channel, because quantum mechanics allow the input state to be a superposition state, such as $\frac{1}{\sqrt{2}}\ket{0}+\frac{1}{\sqrt{2}}\ket{1}$. That is to say, the input state and the output state may take value outside $\{\ket{0},\ket{1}\}$, which cannot be depicted by BTPM. However, using BTPM (\ref{eq:transition probability matrix of bit flip channel}), given arbitrary input state, we can always determine the output. This is because we can write out the operator elements by matrix (\ref{eq:transition probability matrix of bit flip channel}), which will be proved later. And once the operator elements are determined, the behavior of the quantum channel is determined. Hence, for the quantum channels which have a BTPM, its behavior can be fully depicted by its BTPM. 

There is a necessary and sufficient condition for a quantum channel having a BTPM.

\begin{theorem}[\textbf{Necessary and sufficient condition for a quantum channel having a BTPM}]
\label{theorem:Necessary and sufficient condition for a quantum channel having a BTPM}
Given a quantum channel $\mathcal{E}$, it has a BTPM if and only if there is a certain basis $B_{in}=\{\ket{1},\ket{2},\cdots,\ket{N}\}$ of the input space, any two basis vectors $\ket{i}$ and $\ket{j}$ in $B_{in}$ satisfy that $\mathcal{E}(\ket{i}\bra{i})$ commutes with $\mathcal{E}(\ket{j}\bra{j})$, namely,
\begin{equation}
\begin{aligned}
&\left[\mathcal{E}(\ket{i}\bra{i}),\mathcal{E}(\ket{j}\bra{j})\right]=\mathcal{E}(\ket{i}\bra{i})\mathcal{E}(\ket{j}\bra{j})-\mathcal{E}(\ket{j}\bra{j})\mathcal{E}(\ket{i}\bra{i})=0
\end{aligned}
\end{equation}
\end{theorem}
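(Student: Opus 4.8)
The plan is to reduce the statement to the standard linear-algebra fact that a finite family of Hermitian operators is simultaneously diagonalizable by a single orthonormal basis if and only if its members pairwise commute. First I would pin down what it means for $\mathcal{E}$ to \emph{have a BTPM}: there must exist an input basis $B_{in}=\{\ket{1},\dots,\ket{N}\}$ and a \emph{single} output basis $\{\ket{1'},\dots,\ket{M'}\}$ such that each basis input $\ket{i}\bra{i}$ is mapped to a classical mixture of output basis states, i.e. $\mathcal{E}(\ket{i}\bra{i})=\sum_{j}p(j|i)\ket{j'}\bra{j'}$ with $p(j|i)\ge 0$ and $\sum_{j}p(j|i)=1$; the numbers $p(j|i)$ are precisely the entries of the BTPM. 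The essential content is that the same output basis must diagonalize all of the images $\mathcal{E}(\ket{i}\bra{i})$ at once.

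For the ``only if'' direction (necessity of commuting), suppose $\mathcal{E}$ has a BTPM. Then every $\mathcal{E}(\ket{i}\bra{i})$ is diagonal in the common output basis $\{\ket{j'}\}$, and matrices that are diagonal in a shared basis commute. Hence $[\mathcal{E}(\ket{i}\bra{i}),\mathcal{E}(\ket{j}\bra{j})]=0$ for every pair, so this direction is immediate.

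For the ``if'' direction (sufficiency), I would first note that since $\mathcal{E}$ is a quantum channel (a completely positive trace-preserving map), each $\mathcal{E}(\ket{i}\bra{i})$ is a density operator, hence Hermitian and positive semidefinite. By hypothesis the operators $\{\mathcal{E}(\ket{i}\bra{i})\}_{i=1}^{N}$ pairwise commute, so by the simultaneous-diagonalization theorem there exists one orthonormal output basis $\{\ket{j'}\}$ diagonalizing all of them; call the resulting eigenvalues $p(j|i)$. Positive semidefiniteness forces $p(j|i)\ge 0$, and trace preservation gives $\sum_{j}p(j|i)=\mathrm{tr}\,\mathcal{E}(\ket{i}\bra{i})=\mathrm{tr}\,\ket{i}\bra{i}=1$, so each row $(p(j|i))_{j}$ is a genuine probability distribution. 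Reading off these numbers yields the BTPM. I would supply the simultaneous-diagonalization fact by the usual induction: diagonalize the first operator, observe that every other operator preserves each of its eigenspaces (because it commutes with the first), restrict the remaining operators to each such eigenspace, and recurse; since the family is finite and the space finite-dimensional, the process terminates with a single common eigenbasis.

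I expect the main obstacle to lie in the sufficiency direction, specifically in the careful treatment of \emph{degenerate} eigenvalues during simultaneous diagonalization. When some $\mathcal{E}(\ket{i}\bra{i})$ has repeated eigenvalues, that single operator does not by itself fix the output basis, and one must descend into its common eigenspaces and diagonalize the remaining operators there. Verifying that this recursion indeed produces one basis that diagonalizes the whole family, and only then invoking positivity and trace preservation to pin the row entries down to a probability distribution, is where the genuine work sits; the two commuting-implies-codiagonalizable and codiagonalizable-implies-BTPM implications on either side of it are routine.
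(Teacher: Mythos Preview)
Your proposal is correct and follows essentially the same approach as the paper: both directions reduce to the standard fact that a family of Hermitian operators is simultaneously diagonalizable if and only if its members pairwise commute, with the common diagonalizing output basis furnishing the BTPM entries. Your write-up is slightly more detailed than the paper's (you explicitly check that positivity and trace preservation make each row a probability distribution, and you sketch the inductive proof of simultaneous diagonalization), but the underlying argument is the same.
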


\begin{proof}

\noindent\textbf{(1)Sufficiency:}
If there is a certain basis $B_{in}=\{\ket{1},\ket{2},\cdots,\ket{N}\}$ of the input space, any two basis vectors $\ket{i}$ and $\ket{j}$ in $B_{in}$ satisfy $[\mathcal{E}(\ket{i}\bra{i}),\mathcal{E}(\ket{j}\bra{j})]=0$, then for all $\mathcal{E}(\ket{i}\bra{i})$, they can be simultaneously diagonalized in a certain basis $B_{out}=\{\ket{1^{'}},\ket{2^{'}},\cdots,\ket{M^{'}}\}$ of the output space. The result of diagonalization is
\begin{equation}
\mathcal{E}(\ket{i}\bra{i})=\sum_{k=1}^{M}p_{ik}\ket{k^{'}}\bra{k^{'}}
\end{equation}
It is obvious that $p_{ik}$ forms the BPTM.

\noindent\textbf{(2)Necessity:}
Assume quantum channel $\mathcal{E}$ has a BTPM whose elements are $A_{ik}\ (1\leq i\leq N, 1\leq k\leq M)$, and the corresponding basis for the input and output space are $B_{in}=\{\ket{1},\ket{2},\cdots,\ket{N}\}$ and $B_{out}=\{\ket{1^{'}},\ket{2^{'}},\cdots,\ket{M^{'}}\}$, respectively, then $\mathcal{E}(\ket{i}\bra{i})$ can be expressed as
\begin{equation}
\mathcal{E}(\ket{i}\bra{i})=\sum_{k=1}^{M}A_{ik}\ket{k^{'}}\bra{k^{'}}
\end{equation}
which means that all $\mathcal{E}(\ket{i}\bra{i})$ can be simultaneously diagonalized in $B_{out}=\{\ket{1^{'}},\ket{2^{'}},\cdots,\ket{M^{'}}\}$. Hence, any two basis vectors $\ket{i}$ and $\ket{j}$ in $B_{in}$ satisfy $[\mathcal{E}(\ket{i}\bra{i}),\mathcal{E}(\ket{j}\bra{j})]=0$. The proof is completed. 
\end{proof}

Next, we are going to prove that one can derive the channel operation elements by BTPM.
\begin{theorem}[\textbf{Derive the channel operation elements from BTPM}]
\label{theorem2-2:Derive the channel operation elements by BTPM}
For a quantum channel which has BTPM, its BTPM determine a set of quantum operations.
\end{theorem}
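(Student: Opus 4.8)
The plan is to prove the statement by exhibiting an explicit set of Kraus operators constructed directly from the BTPM entries, and then verifying that they form a legitimate quantum operation. By Theorem \ref{theorem:Necessary and sufficient condition for a quantum channel having a BTPM}, a channel $\mathcal{E}$ possessing a BTPM comes equipped with an input basis $B_{in}=\{\ket{1},\ldots,\ket{N}\}$, an output basis $B_{out}=\{\ket{1^{'}},\ldots,\ket{M^{'}}\}$, and nonnegative entries $A_{ik}$ obeying $\mathcal{E}(\ket{i}\bra{i})=\sum_{k=1}^{M}A_{ik}\ket{k^{'}}\bra{k^{'}}$. First I would define, for each pair $(i,k)$ with $A_{ik}>0$, the operator
\begin{equation}
E_{ik}=\sqrt{A_{ik}}\,\ket{k^{'}}\bra{i},
\end{equation}
each of which maps the input space into the output space, and I would claim that $\{E_{ik}\}$ is the desired set of quantum operations.

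Second, I would verify the trace-preserving (completeness) condition. A direct computation gives $\sum_{i,k}E_{ik}^{\dagger}E_{ik}=\sum_{i,k}A_{ik}\ket{i}\bra{i}=\sum_{i}\big(\sum_{k}A_{ik}\big)\ket{i}\bra{i}$, so completeness reduces to showing that each row of the BTPM sums to one. This in turn follows from the trace preservation of $\mathcal{E}$: taking the trace of $\mathcal{E}(\ket{i}\bra{i})=\sum_{k}A_{ik}\ket{k^{'}}\bra{k^{'}}$ yields $\sum_{k}A_{ik}=\mathrm{tr}\,\mathcal{E}(\ket{i}\bra{i})=\mathrm{tr}\,\ket{i}\bra{i}=1$, whence $\sum_{i,k}E_{ik}^{\dagger}E_{ik}=\sum_{i}\ket{i}\bra{i}=I$. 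Third, I would confirm that the operation $\rho\mapsto\sum_{i,k}E_{ik}\,\rho\,E_{ik}^{\dagger}$ is consistent with the given BTPM, namely that feeding in $\ket{j}\bra{j}$ reproduces $\sum_{k}A_{jk}\ket{k^{'}}\bra{k^{'}}$; this is immediate from the orthonormality $\langle i|j\rangle=\delta_{ij}$. Together these steps establish that the BTPM always determines at least one valid set of quantum operations, which is exactly what the statement asserts.

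The main difficulty I anticipate is conceptual rather than computational. The operators $\{E_{ik}\}$ built this way are not unique, and the channel they generate need not coincide with the original $\mathcal{E}$ on the off-diagonal (coherence) terms: the construction pins down the action on the diagonal inputs $\ket{i}\bra{i}$ exactly, but distinct channels can share an identical BTPM while acting differently on superposition inputs. Hence the theorem is best read as an existence statement — every BTPM corresponds to a physically realizable quantum operation — and the completeness relation, which hinges on the row-stochasticity forced by trace preservation, is the single place where the argument would break down were the BTPM not drawn from a genuine channel. Since the hypothesis already supplies such a channel, this potential obstacle does not arise, and the construction goes through cleanly.
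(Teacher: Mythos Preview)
Your construction is correct, but it differs from the paper's. The paper defines only $M$ Kraus operators, one per output basis vector, by setting $E_{k}\ket{i}=\sqrt{A_{ik}}\,\ket{k^{'}}$ (equivalently $E_{k}=\sum_{i}\sqrt{A_{ik}}\,\ket{k^{'}}\bra{i}$), and then observes that $\mathcal{E}(\rho)=\sum_{k}E_{k}\rho E_{k}^{\dagger}$ reproduces the BTPM on diagonal inputs. Your decomposition instead uses up to $NM$ rank-one operators $E_{ik}=\sqrt{A_{ik}}\,\ket{k^{'}}\bra{i}$ and verifies completeness via row-stochasticity, which you derive cleanly from trace preservation. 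Both are legitimate Kraus representations compatible with the same BTPM; your channel is the measure-and-prepare (entanglement-breaking) one that kills all off-diagonal input terms, whereas the paper's $E_{k}$ sends $\ket{i}\bra{j}$ to $\sqrt{A_{ik}A_{jk}}\,\ket{k^{'}}\bra{k^{'}}$, so the two operations genuinely disagree on coherences while agreeing on the BTPM --- exactly the non-uniqueness you flagged. The practical advantage of the paper's choice is that it yields precisely $M$ operators satisfying $E_{k}\ket{0}=\sqrt{p_{k}}\ket{k^{'}}$, $E_{k}\ket{1}=\sqrt{p_{k}}\ket{\pi(k)^{'}}$, which is the exact form exploited immediately afterward in Theorem~\ref{theorem2-3:Operation elements of two-dimensional-input QQSC} and in the coherent-information computation of Theorem~\ref{theorem2-4:The maximum coherent information of two-dimensional-input QQSC}; your larger set would require an extra regrouping step before those results could be invoked.
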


\begin{proof}
Assume quantum channel $\mathcal{E}$ has a BTPM $A$, for arbitrary input $\rho=\sum_{i=1}^{N}q_{i}\ket{i}\bra{i}$, the corresponding output $\mathcal{E}(\rho)$ is
\begin{equation}
\begin{aligned}
\mathcal{E}(\rho)&=\sum_{i=1}^{N}q_{i}\mathcal{E}(\ket{i}\bra{i})=\sum_{i=1}^{N}q_{i}\sum_{k=1}^{M}A_{ik}\ket{k^{'}}\bra{k^{'}}=\sum_{k=1}^{M}E_{k}\left(\sum_{i=1}^{N}q_{i}\ket{i}\bra{i}\right)E_{k}^{\dagger}
\end{aligned}
\end{equation}
where $\{E_{k}\}$ are the operation elements of channel $\mathcal{E}$, and $E_{k}\ket{i}=\sqrt{A_{ik}}\ket{k^{'}}$, according to which one can easily write out the matrix representation of $E_{k}$. This completes the proof.
\end{proof}

According to the above proof, one can see that the number of independent operation elements equals to the number of dimensions of the output space.

Similar to classical symmetric channels, we can define quantum symmetric channels by BTPM.

\begin{definition}[\textbf{Quantum symmetric channels}]
\label{Quantum symmetric channels}
	For the quantum channels which have BTPM, if each row of the BTPM is a permutation of the first row, then this quantum channel is symmetric with respect to its input. If each column of the BTPM is a permutation of the first column, then this quantum channel is symmetric with respect to its output. If a quantum channel is symmetric with respect to both of its input and output, then this channel is called quantum symmetric channel (QSC). If a channel is symmetric with respect to its input but might not to its output, and its BTPM can be divided into several submatrices by column, each of which satisfies that each column of it is a permutation of the first column of it, then this channel is called a quantum quasi symmetric channel (QQSC). Actually, a QSC can be regarded as a special QQSC.
\end{definition}

\begin{theorem}[\textbf{Operation elements of two-dimensional-input QQSC}]
\label{theorem2-3:Operation elements of two-dimensional-input QQSC}
For a two-dimensional-input QQSC whose output space is $M$-dimensional, there is always a set of operation elements $\{E_{k}\}$ , $1\leq k\leq M$, which satisfies
\begin{equation}
\label{Ek}
E_{k}\ket{0}=\sqrt p_{k}\ket{k^{'}},E_{k}\ket{1}=\sqrt p_{k}\ket{\pi(k)^{'}}
\end{equation}
where $\pi$ is a certain permutation, $\{\ket{k^{'}}\}$ is a basis of the output space, $\sum_{k=1}^{M} p_{k}=1$. Notice that $\ket{0}$ and  $\ket{1}$ are only basis vectors of the input space, they are not necessary to be the computational basis vectors $\left(\begin{array}{l}1 \\ 0\end{array}\right)$ and $\left(\begin{array}{l}0 \\ 1\end{array}\right)$.
\end{theorem}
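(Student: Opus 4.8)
The plan is to read the permutation structure directly off the BTPM and then write down the operation elements in closed form. First I would apply Theorem~\ref{theorem:Necessary and sufficient condition for a quantum channel having a BTPM}: because the channel is a two-dimensional-input QQSC it has a BTPM, so there is an output basis $\{\ket{1^{'}},\dots,\ket{M^{'}}\}$ in which $\mathcal{E}(\ket{0}\bra{0})$ and $\mathcal{E}(\ket{1}\bra{1})$ are simultaneously diagonal. Denoting the first row of the BTPM by $(p_{1},\dots,p_{M})$, this gives $\mathcal{E}(\ket{0}\bra{0})=\sum_{k}p_{k}\ket{k^{'}}\bra{k^{'}}$ with $p_{k}\ge 0$ and $\sum_{k}p_{k}=1$. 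By Definition~\ref{Quantum symmetric channels} the channel is symmetric with respect to its input, so the second row of the BTPM is a permutation of the first; let $\pi$ be a permutation of $\{1,\dots,M\}$ realizing it, whence $\mathcal{E}(\ket{1}\bra{1})=\sum_{k}p_{k}\ket{\pi(k)^{'}}\bra{\pi(k)^{'}}$.

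With $\pi$ fixed I would propose the operation elements
\begin{equation}
E_{k}=\sqrt{p_{k}}\left(\ket{k^{'}}\bra{0}+\ket{\pi(k)^{'}}\bra{1}\right),\qquad 1\le k\le M .
\end{equation}
These manifestly obey $E_{k}\ket{0}=\sqrt{p_{k}}\ket{k^{'}}$ and $E_{k}\ket{1}=\sqrt{p_{k}}\ket{\pi(k)^{'}}$, the exact form asserted, and a direct computation of $\sum_{k}E_{k}\ket{0}\bra{0}E_{k}^{\dagger}$ and $\sum_{k}E_{k}\ket{1}\bra{1}E_{k}^{\dagger}$ recovers the two rows of the BTPM, so they reproduce the channel on the diagonal input states.

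The crux is to confirm that $\{E_{k}\}$ is a legitimate set of operation elements, i.e. that the completeness relation $\sum_{k}E_{k}^{\dagger}E_{k}=I$ holds. Expanding one term gives $E_{k}^{\dagger}E_{k}=p_{k}\left(\ket{0}\bra{0}+\ket{1}\bra{1}+\braket{k^{'}|\pi(k)^{'}}\ket{0}\bra{1}+\braket{\pi(k)^{'}|k^{'}}\ket{1}\bra{0}\right)$, which collapses to $p_{k}I$ exactly when $\ket{k^{'}}\perp\ket{\pi(k)^{'}}$, i.e. when $\pi(k)\ne k$; summing then yields $\sum_{k}p_{k}I=I$. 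I therefore expect the main obstacle to be the fixed points of $\pi$, where the cross terms survive: a fixed point $\pi(k)=k$ is precisely a column of the BTPM with two equal entries. I would handle this by exploiting that $\pi$ is not unique when a value $p_{k}$ is repeated, choosing it to pair equal entries and thereby eliminate avoidable fixed points; showing that the input symmetry of a QQSC always permits a fixed-point-free choice (or isolating the degenerate channels where it cannot, such as a one-dimensional output) is the delicate point that must be settled to make the closed-form statement hold with exactly $M$ elements.
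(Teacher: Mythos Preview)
Your approach coincides with the paper's: both specify $E_{k}$ by its action on $\ket{0}$ and $\ket{1}$ (equivalently, by the matrix entries $E_{k_{j1}}=\sqrt{p_{k}}\,\delta_{jk}$ and $E_{k_{j2}}=\sqrt{p_{k}}\,\delta_{j\pi(k)}$) and then check the completeness relation by computing the $2\times 2$ matrix $\sum_{k}E_{k}^{\dagger}E_{k}$. The paper simply writes
\[
\sum_{k}E_{k}^{\dagger}E_{k}=\begin{pmatrix}\sum_{k}p_{k}&0\\0&\sum_{k}p_{k}\end{pmatrix}=I,
\]
so it does \emph{not} discuss the off-diagonal contribution $\sum_{k}p_{k}\,\delta_{k,\pi(k)}$ that you isolate; it tacitly assumes $\pi$ has no fixed points of positive weight (and later, in the proof of Theorem~\ref{theorem2-4:The maximum coherent information of two-dimensional-input QQSC}, it additionally uses $\pi\circ\pi=\mathrm{id}$). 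Your flagging of the fixed-point issue is thus a point of greater care than the paper's own argument rather than a departure from it; the paper offers no mechanism beyond what you already suggest (freedom in choosing $\pi$ when values repeat) to resolve it.
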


\begin{proof}
We only need to determine the matrix representation of each $E_{k}$ and prove that $\sum_{k} E_{k}^{\dagger} E_{k}=I$. Notice that the matrix representation of $E_{k}$ can be calculated by 

\begin{equation}
\label{Ekij}
E_{k_{j i}}=\bra{j}E_{k}\ket{i-1}
\end{equation}

Here, the ket vector is $\ket{i-1}$ rather than $\ket{i}$. This is because we use $\ket{0}$ and $\ket{1}$ to represent the input basis rather than $\ket{1}$ and $\ket{2}$, so the index of the column of the matrix starts from 1. Through Eq. (\ref{Ek}) and Eq. (\ref{Ekij}), it’s easy to obtain
\begin{equation}
E_{k_{j 1}}=\sqrt{p_{k}} \delta_{j k}
\end{equation}
\begin{equation}
E_{k_{j 2}}=\sqrt{p_{k}} \delta_{j \pi(k)}
\end{equation}
where $\delta$ is the Kronecker Delta. Hence,
\begin{equation}
\sum_{k} E_{k}^{\dagger} E_{k}=\left(\begin{array}{cc}
\sum_{k=1}^{M} p_{k} & 0 \\
0 & \sum_{k=1}^{M} p_{k}
\end{array}\right)=I
\end{equation}
which completes the proof.
\end{proof}

\subsection{Two examples of QSC}
\label{2.3}
Bit flip channel and phase flip channel are two typical QSCs, as shown in Fig. \ref{fig:Bit_flip_channel} and Fig. \ref{fig:Phase_flip_channel}, respectively.

\begin{figure}[!t]
\centering
\includegraphics[width=0.5\textwidth]{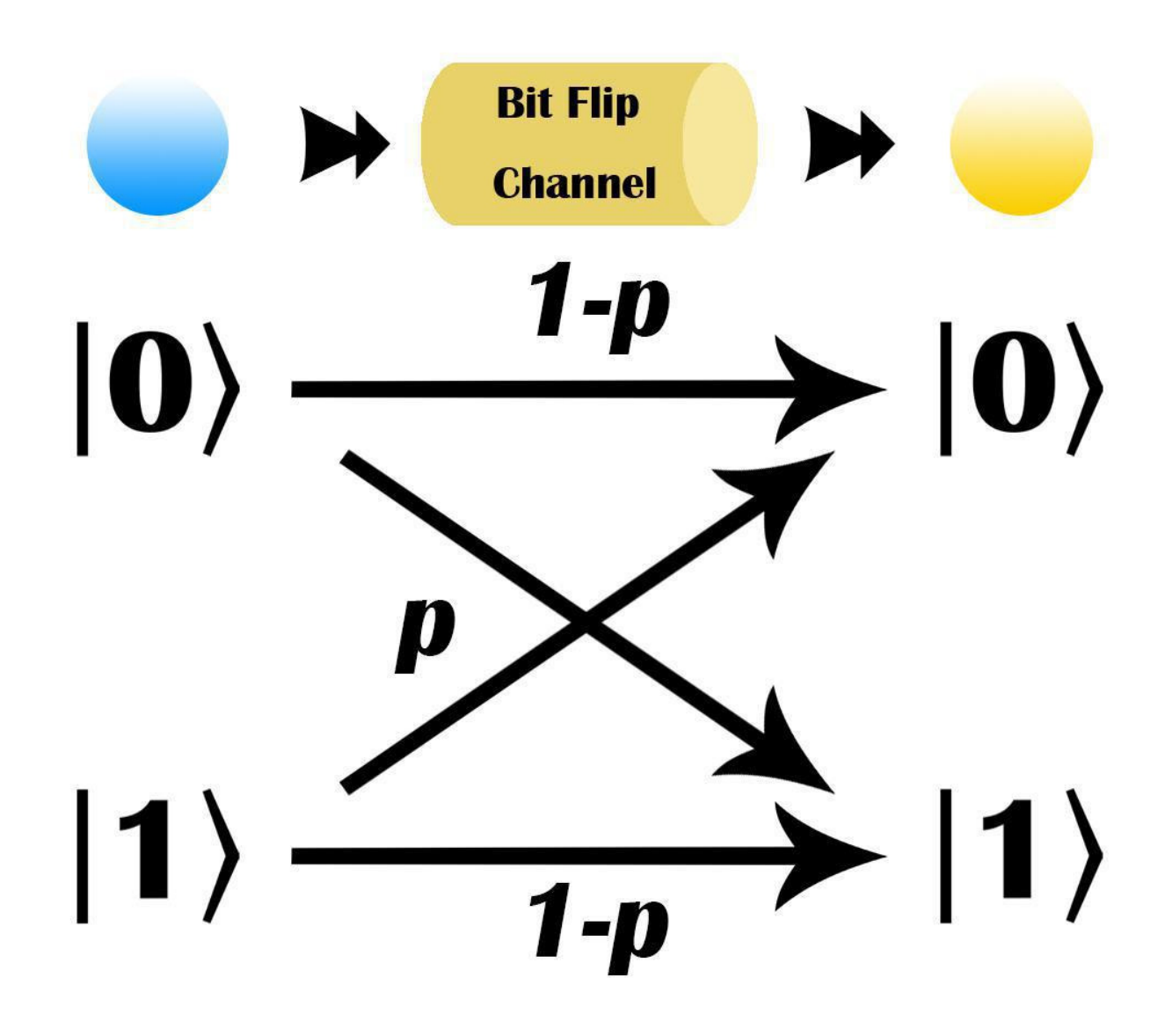}
\caption{Bit flip channel}
\label{fig:Bit_flip_channel}
\end{figure}

\begin{figure}[!t]
\centering
\includegraphics[width=0.5\textwidth]{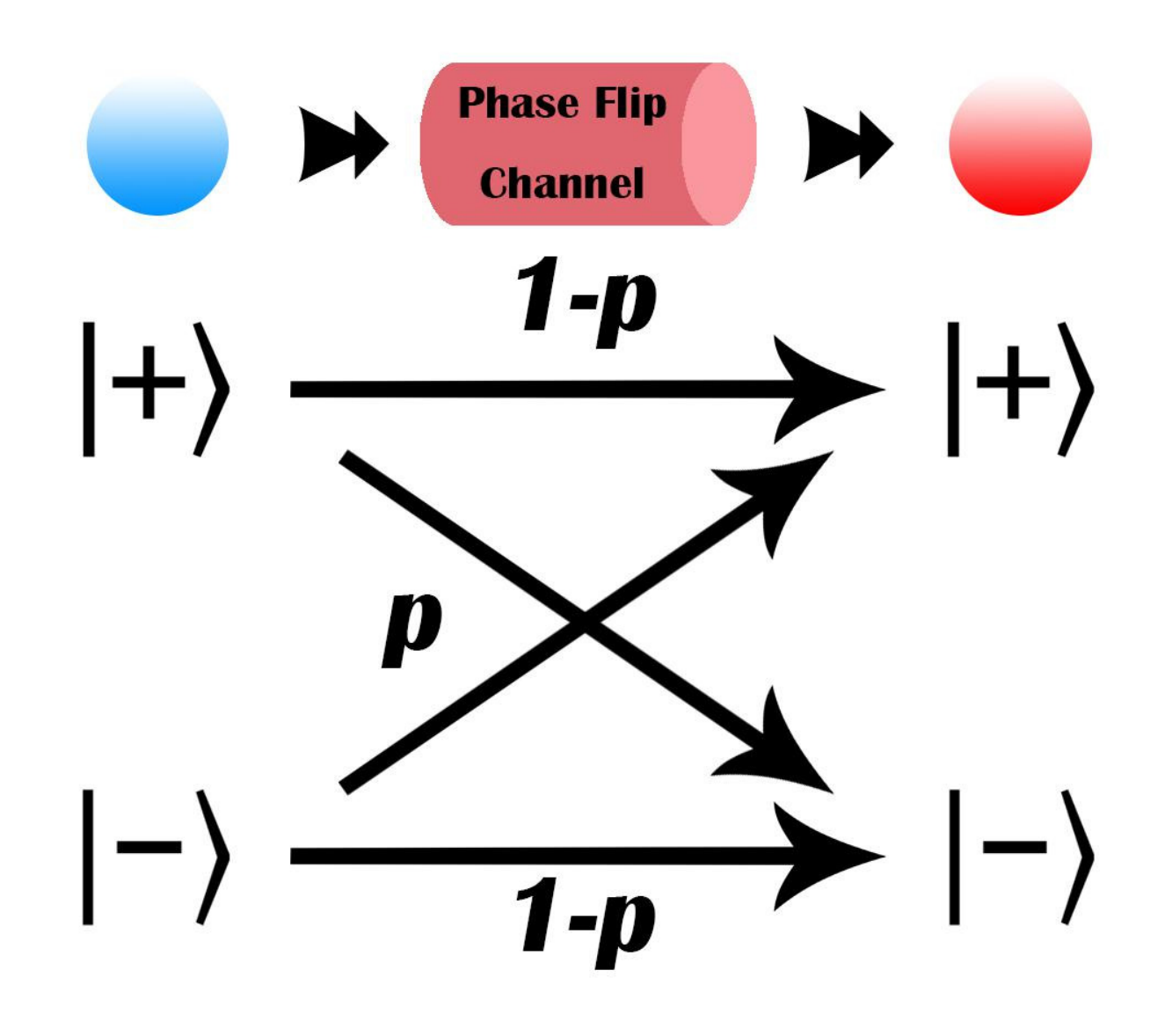}
\caption{Phase flip channel}
\label{fig:Phase_flip_channel}
\end{figure}

Bit flip channel flips $\ket{0}$ and $\ket{1}$ with the same probability $p$, and phase flip channel flips $\ket{+}$ and $\ket{-}$ with the same probability $p$. It’s easy to write out the operation elements\cite{nielsen2002quantum} for them. The operation elements  for bit flip channel is
\begin{equation}
E_{0}=\sqrt{p} I, E_{1}=\sqrt{1-p} X
\end{equation}
where $X$ is the pauli $X$ operator. And the operation elements for phase flip channel is
\begin{equation}
\widetilde{E_{0}}=\sqrt{p} I, \widetilde{E_{1}}=\sqrt{1-p} Z
\end{equation}
where $Z$ is the pauli $Z$ operator.
\subsection{Symmetric coherent information and the MSLCI of two-dimensional-input QQSC}
\label{2.4}
The symmetric coherent information was first proposed in \cite{wilde2013polar}, which is similar to the definition of symmetric capacity used by Arikan\cite{5075875}.

\begin{definition}[\textbf{Symmetric coherent information}]
\label{Uniform coherent information}
For a quantum channel $\mathcal{E}$, the number of whose input qubits is $n$, its input state can be represented by $\rho=\sum_{i=1}^{2^{n}} q_{i}\ket{i}\bra{i}$, its symmetric coherent information $I_{U}$ is defined as the coherent information $I(\rho,\mathcal{E})$ when $q_{1}=q_{2}=\cdots=q_{2^{n}}=1/2^{n}$, namely,
\begin{equation}
I_{U} \equiv I\left(\rho=\sum_{i=1}^{2^{n}} \frac{1}{2^{n}}\ket{i}\bra{i}, \mathcal{E}\right)
\end{equation}
\end{definition}

Arikan has proved that for a classical symmetric channel (actually, the classical symmetric channel mentioned by Arikan in the Part A of Sec. I of \cite{5075875} means a classical binary quasi symmetric channel), the symmetric capacity is its Shannon capacity. However, up to the present, none of the previous studies\cite{6302198,guo2013polar,renes2012efficient,wilde2013polar,hirche2015polar,7208851,7370934,8989387,8815775,6284203,goswami2021quantum} has proved that the symmetric coherent information of a pure quantum channel is its MSLCI. Next, we will prove this theorem for two-dimensional-input QQSC.
\begin{theorem}[\textbf{The MSLCI of two-dimensional-input QQSC}]
\label{theorem2-4:The maximum coherent information of two-dimensional-input QQSC}
The MSLCI of two-dimensional-input QQSC is its symmetric coherent information.
\end{theorem}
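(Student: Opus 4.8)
The plan is to parametrise an arbitrary input state in the input basis $\{\ket{0},\ket{1}\}$ as $\rho=q\ket{0}\bra{0}+(1-q)\ket{1}\bra{1}+c\ket{0}\bra{1}+c^{*}\ket{1}\bra{0}$ and to show that the coherent information $I(\rho,\mathcal{E})=S(\mathcal{E}(\rho))-S(W)$ is maximised at $q=1/2,\ c=0$, i.e. at the maximally mixed input $\rho=I/2$, which is exactly the uniform input of Definition~\ref{Uniform coherent information}. First I would use Theorem~\ref{theorem2-3:Operation elements of two-dimensional-input QQSC} to fix the Kraus operators $E_{k}$, and record that the quasi-symmetric structure forces the permutation $\pi$ to pair up the output basis vectors (a fixed-point-free involution), so the indices split into pairs $\{a,b\}$ with $\pi(a)=b$. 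The operators mapping into the block spanned by $\ket{a'},\ket{b'}$ are then $E_{a}=\sqrt{p_{a}}\,U_{ab}$ and $E_{b}=\sqrt{p_{b}}\,U_{ab}X$, where $U_{ab}=\ket{a'}\bra{0}+\ket{b'}\bra{1}$ is an isometry.

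The key structural step is to observe that both $\mathcal{E}(\rho)$ and the entropy-exchange matrix $W$ are block diagonal over these pairs, since the coherence $c$ only couples $\ket{k'}$ to $\ket{\pi(k)'}$. Writing the entropy of a block-diagonal density operator as the entropy of its block weights plus the weighted entropies of its normalised blocks, and noting that $\mathcal{E}(\rho)$ and $W$ share the same input-independent block weights $p_{a}+p_{b}$, the weight-entropy terms cancel in $I(\rho,\mathcal{E})$ and I obtain the decomposition
\begin{equation}
I(\rho,\mathcal{E})=\sum_{\{a,b\}}(p_{a}+p_{b})\,I\!\left(\rho,\mathcal{B}_{ab}\right),
\end{equation}
where $\mathcal{B}_{ab}$ is the single-qubit bit-flip channel with flip probability $p_{b}/(p_{a}+p_{b})$. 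Here I would verify explicitly that the normalised $\mathcal{E}(\rho)$-block equals $U_{ab}\mathcal{B}_{ab}(\rho)U_{ab}^{\dagger}$ and that the normalised $W$-block coincides with the entropy-exchange matrix of $\mathcal{B}_{ab}$ on input $\rho$, so that every summand is genuinely a bit-flip coherent information evaluated on the same $\rho$.

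Once this decomposition is in place the maximisation follows termwise from two standard facts about the bit-flip channel. Each $\mathcal{B}_{ab}$ is degradable (it is unitarily equivalent to a dephasing channel), hence $\rho\mapsto I(\rho,\mathcal{B}_{ab})$ is concave; and $\mathcal{B}_{ab}$ is covariant under Pauli conjugation, so averaging over the twirl $\{I,X,Z,XZ\}$ together with concavity shows that each term is maximised at $\rho=I/2$. Because the weights $p_{a}+p_{b}$ are non-negative, the whole sum is maximised there as well, and $I(I/2,\mathcal{E})=I_{U}$ by definition, which proves the claim.

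I expect the main obstacle to be the structural decomposition rather than the final optimisation. One must first confirm that $\pi$ is a fixed-point-free involution, since otherwise the pairing and with it the block diagonality of $W$ break down; and then match the normalised $W$-block to the complementary output of $\mathcal{B}_{ab}$ so that $S(\sigma^{(ab)})-S(\tau^{(ab)})$ is exactly $I(\rho,\mathcal{B}_{ab})$. A more computational alternative avoids degradability altogether: fixing $q=1/2$, one reduces each block term to the single real coherence variable $x=\mathrm{Re}(c)$ and shows that $x=0$ is optimal via the elementary inequality $2t\ln t\le t^{2}-1$, which guarantees that the entropy-exchange gain produced by nonzero coherence can never outweigh the accompanying output-entropy loss.
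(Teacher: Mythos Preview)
Your proposal is correct and follows a genuinely different route from the paper. The paper's proof restricts from the outset to inputs that are diagonal in the BTPM basis, $\rho=q\ket{0}\bra{0}+(1-q)\ket{1}\bra{1}$, shows that $S(W)=H(p_1,\ldots,p_M)$ is independent of $q$, expresses $S(\mathcal{E}(\rho))$ as the Shannon entropy $H\bigl(qp_k+(1-q)p_{\pi(k)}\bigr)$, and then differentiates in $q$, pairing the derivative terms $t_k+t_{\pi(k)}$ to locate the maximum at $q=1/2$. No off-diagonal coherence $c$ is ever considered.

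Your argument is more structural and more complete. By recognising the pairing $\{a,\pi(a)\}$ and the resulting block-diagonality of both $\mathcal{E}(\rho)$ and $W$, you reduce the problem to a convex combination of bit-flip coherent informations and then invoke degradability plus Pauli covariance to optimise each term at $\rho=I/2$. This buys two things the paper's proof does not: it treats arbitrary input states (including nonzero $c$), and it avoids the explicit derivative computation, replacing it with standard concavity/covariance facts. The price is that you must verify the block decomposition carefully, in particular that the normalised $W$-block really is the entropy-exchange matrix of the corresponding bit-flip channel; your outline covers this.

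Both arguments rely on $\pi$ being an involution; the paper simply asserts $\pi(\pi(k))=k$ in its derivative pairing, while you flag it as something to confirm. Your additional requirement that $\pi$ be fixed-point-free is not strictly necessary: a fixed point $\pi(k)=k$ produces a one-dimensional block whose contribution to $I(\rho,\mathcal{E})$ is input-independent once the block weight is handled correctly, so it can be absorbed harmlessly rather than excluded.
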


\begin{proof}
Assume the input state of a two-dimensional-input QQSC $\mathcal{E}$ is $\rho=q\ket{0}\bra{0}+(1-q)\ket{1}\bra{1}$. According to Theorem \ref{theorem2-3:Operation elements of two-dimensional-input QQSC}, there is a set of operation elements $\left\{E_{k}\right\}$, $1 \leq k \leq M$. By Eq. (\ref{eq:coherent information}) and Eq. (\ref{eq:Se=S(W)}), the coherent information of $\mathcal{E}$ is
\begin{equation}
I(\rho, \mathcal{E})=S(\mathcal{E}(\rho))-S_{e}=S(\mathcal{E}(\rho))-S(W)
\end{equation}
Using Eq. (\ref{Ekij}), one can easily obtain
\begin{equation}
\begin{aligned}
&W_{i j}=tr\left(E_{i} \rho E_{j}^{\dagger}\right)=tr\left(E_{i}(q\ket{0}\bra{0}+(1-q)\ket{1}\bra{1}) E_{j}^{\dagger}\right)\\
&=q\times tr\left(\sqrt{p_{i} p_{j}}\ket{i^{\prime}}\bra{j^{\prime}}\right)+(1-q)\times tr\left(\sqrt{p_{i} p_{j}}\ket{\pi(i)^{\prime}}\bra{\pi(j)^{\prime}}\right)\\
&=p_{i} \delta_{i j}
\end{aligned}
\end{equation}
where $\delta$ is the Kronecker Delta and $\pi$ is a certain permutation.

Hence, $S(W)=H\left(p_{i}\right)$, where $H\left(p_{i}\right)$ is the Shannon entropy of the probability distribution $\left\{p_{1}, \ldots, p_{M}\right\}$. It’s obviously that $S(W)$ has nothing to do with $q$.

Next, we analyze the first term $S(\mathcal{E}(\rho))$. Using Eq. (\ref{Ek}), we get 
\begin{equation}
\begin{aligned}
\label{Maximum of S(E(rho))}
S(\mathcal{E}(\rho))&=S\left(\sum_{k=1}^{M} E_{k} \rho E_{k}^{\dagger}\right)\\
&=S\left(\sum_{k=1}^{M} q p_{k}\ket{k^{'}}\bra{k^{'}}+\sum_{k=1}^{M}(1-q) p_{k}\ket{\pi(k)^{'}}\bra{\pi(k)^{'}}\right)\\
&=S\left(\sum_{k=1}^{M} q p_{k}\ket{k^{'}}\bra{k^{'}}+\sum_{m=1}^{M}(1-q) p_{\pi(m)}\ket{m^{'}}\bra{m^{'}}\right)\\
&=S\left(\sum_{k=1}^{M} q p_{k}\ket{k^{'}}\bra{k^{'}}+\sum_{k=1}^{M}(1-q) p_{\pi(k)}\ket{k^{'}}\bra{k^{'}}\right)
\end{aligned}
\end{equation}

The third equality in Eq.  (\ref{Maximum of S(E(rho))} )holds because $\pi(\pi(k))=k$. If one let $\pi(k)=m$, then $\pi(m)=k$. The last equality is obtained simply by renaming $m$.

Notice that von Neumann entropy has a property  which states that when $\rho_{i}$ have support on orthogonal subspaces, the following equation holds.
\begin{equation}
\label{property}
S\left(\sum_{i} p_{i} \rho_{i}\right)=\sum_{i} p_{i} S\left(\rho_{i}\right)+H\left(p_{i}\right)
\end{equation}
Using Eq.  (\ref{property}), we can further simplify Eq.  (\ref{Maximum of S(E(rho))}).

\begin{equation}
\begin{aligned}
S(\mathcal{E}(\rho))&=\sum_{k=1}^{M}\left[q p_{k}+(1-q) p_{\pi(k)}\right] S\left(\ket{k^{\prime}}\bra{k^{\prime}}\right)+H\left(q p_{k}+(1-q) p_{\pi(k)}\right)\\
&=H\left(q p_{k}+(1-q) p_{\pi(k)}\right)
\end{aligned}
\end{equation}
Taking the derivative with respect to $q$, we obtain
\begin{equation}
\begin{aligned}
\frac{d S(\mathcal{E}(\rho))}{d q}&=-\sum_{k=1}^{M}\left \{ \left(p_{k}-p_{\pi(k)}\right) \log _{2}\left[\left(p_{k}-p_{\pi(k)}\right) q+p_{\pi(k)}\right]+\frac{\left(p_{k}-p_{\pi(k)}\right)}{ \ln 2}\right \} \\
&=-\sum_{k=1}^{M} t_{k}
\end{aligned}
\end{equation}
where $t_{k}=-\left(p_{k}-p_{\pi(k)}\right) \log _{2}\left[\left(p_{k}-p_{\pi(k)}\right) q+p_{\pi(k)}\right]
+\left(p_{k}-p_{\pi(k)}\right)/\ln 2$. Notice that there are $M$ terms in the summation sign, which can be divided into $M/2$ pairs, each of which can be represented by
\begin{equation}
y_{k}=t_{k}+t_{\pi(k)}
\end{equation}
It’s easy to prove that for all $y_{k}$, when $q \in\left[0,\frac{1}{2}\right)$, $y_{k}<0$, when $q\in\left(\frac{1}{2}, 1\right]$, $y_{k}>0$, and when $q=\frac{1}{2}$, $y_{k}=0$. Hence, when $q \in\left[0,\frac{1}{2}\right), \frac{d S(\mathcal{E}(\rho))}{d q}>0$, when $q \in\left(\frac{1}{2},1\right], \frac{d S(\mathcal{E}(\rho))}{d q}<0$, and $q=\frac{1}{2}$, $\frac{d S(\mathcal{E}(\rho))}{d q}=0$. Therefore, $q=\frac{1}{2}$ is the maximum point of $S(\mathcal{E}(\rho))$, which completes the proof.
\end{proof}

\subsection{Quantum channel combining and splitting}
\label{2.5}
\subsubsection{Quantum channel combining}
\label{2.5.1}
Quantum channel combing and splitting are two steps to polarize quantum channels. The quantum channel combing is similar to classical channel combing. Assume the primal channel is $\mathcal{E}: \rho^{Q} \rightarrow \rho^{V}$, which maps the state of a qubit to another state. We denote the input by $Q$, and the output by $V$. As shown in Fig. \ref{fig:Channel_combining_and_splitting}, we use the same recursive manner as in classical channel combining  to combine $N$ primal quantum channels. The difference is that we replace XOR gates in classical channel combining by quantum CNOT gates, and we use quantum SWAP gates to realize the reverse shuffle operator\cite{5075875}. The channel combing process produces the channel $\mathcal{E}_{N}: \rho^{Q_{1} \cdots Q_{N}} \rightarrow \rho^{V_{1}\cdots V_{N}}$, where the subscript $i\ (1\leq i\leq N)$ means the $i$th qubit. This paper follows the Arikan’s rule to denote a row vector, namely, we use $a_{1}^{i}$ as a shorthand for denoting $\left(a_{1}, \cdots, a_{i}\right)$, and the notation $0_{1}^{N}$ is used to denote the all-zero vector. According to this rule, $\mathcal{E}_{N}$ can be rewritten as $\mathcal{E}_{N}: \rho^{Q_{1}^{N}} \rightarrow \rho^{V_{1}^{N}}$.

\begin{figure*}[!t]
\centering
\includegraphics[width=1\textwidth]{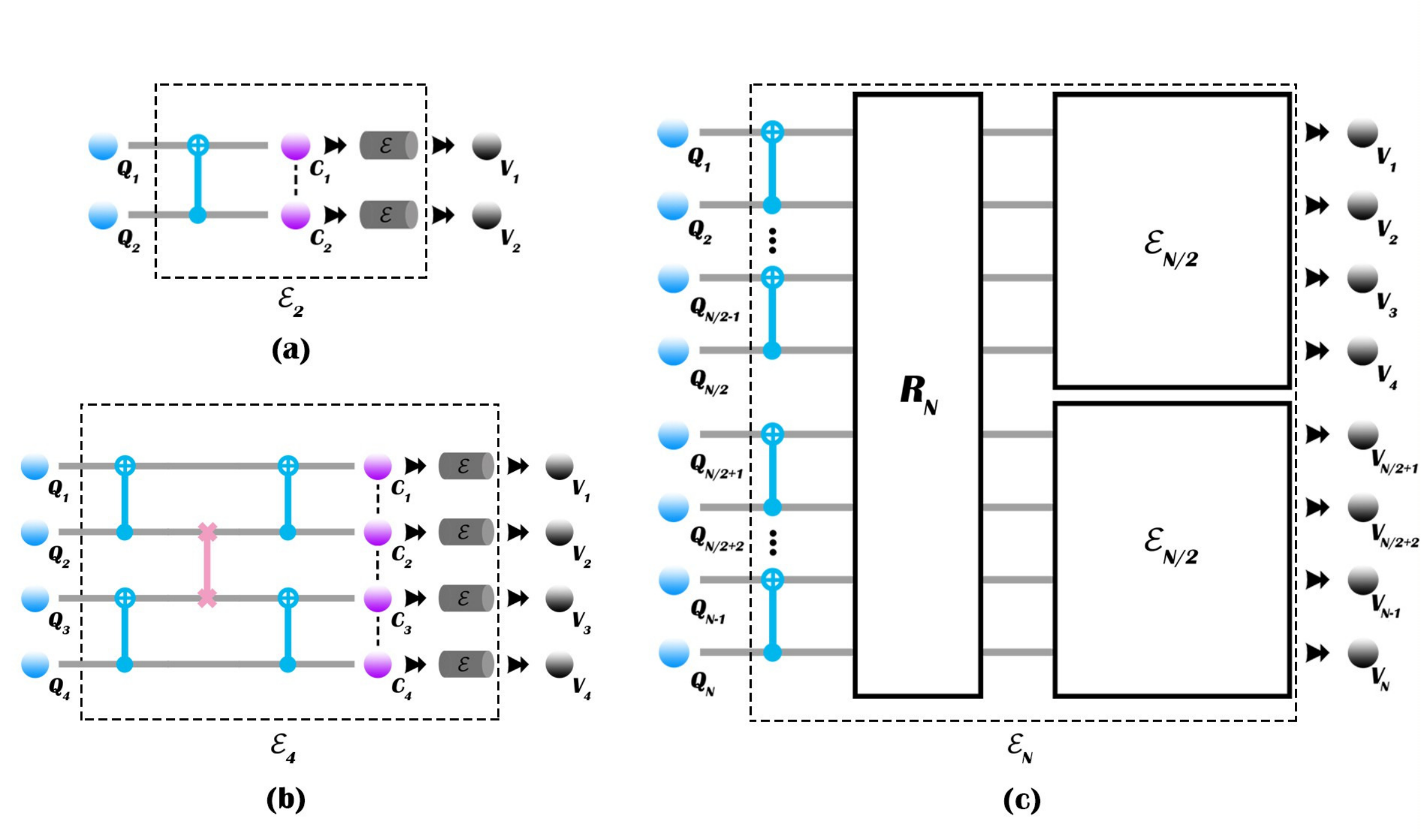}
\caption{(a) Two primal channel $\mathcal{E}$ combines to form channel $\mathcal{E}_{2}$. (b) Two primal $\mathcal{E}_{2}$ combines to form channel $\mathcal{E}_{4}$. (c)Two primal $\mathcal{E}_{N/2}$ combines to form channel $\mathcal{E}_{N}$, $R_{N}$ is the reverse shuffle operator\cite{5075875}. The blue gates are quantum CNOT gates and the pink gates are quantum SWAP gates.}
\label{fig:Channel_combining_and_splitting}
\end{figure*}

\subsubsection{Quantum channel splitting}
\label{2.5.2}
Having combing $N$ quantum channels $\mathcal{E}$ to $\mathcal{E}_{N}$, the next step to polarize quantum channels is splitting $\mathcal{E}_{N}$ to $N$ quantum coordinate channels $\mathcal{E}_{N}^{(i)}: \rho^{Q_{i}} \rightarrow \rho^{V_{1}^{N}, R_{1}^{i-1}}$, namely, $\mathcal{E}_{N}^{(i)}: \rho^{Q_{i}} \rightarrow \rho^{V_{1} \cdots V_{N}, R_{1} \cdots R_{i-1}}$, where $R_{i}$ is the reference system of $Q_{i}$, and $1\leq i\leq N$. The quantum coordinate channels we define is a little bit different from the classical coordinate channels. If we follow the classical definition, the quantum coordinate channels should be $\mathcal{E}_{N}^{(i)}: \rho^{Q_{i}} \rightarrow \rho^{V_{1}^{N}, Q_{1}^{i-1}}$. However, due to quantum no-cloning theorem, $\rho^{Q_{1}^{i-1}}$ and $\rho^{V_{1}^{i-1}}$ cannot appear at the same side. Moreover, according to the manner of Eq. (\ref{eq:pure state QR}) we introduce the reference systems, $R_{i}$ and $Q_{i}$ are in maximally entangled states, which means that the state of $R_{i}$ is the same as $Q_{i}$. Hence, $\mathcal{E}_{N}^{(i)}: \rho^{Q_{i}} \rightarrow \rho^{V_{1}^{N}, R_{1}^{i-1}}$ is a more reasonable definition.

According to Theorem \ref{theorem2-4:The maximum coherent information of two-dimensional-input QQSC}, for a two-dimensional-input QSC, when the input state is a completely mixed state, its SLCI takes maximum.

\section {Symmetry of the quantum combined channel and coordinate channels}
\label{3}
In Sect. \ref{2.5}, quantum combined channel $\mathcal{E}_{N}$ and coordinate channels $\{ \mathcal{E}_{N}^{(i)}\}$ have been defined. The main goal of this section is to prove that if the primal channel $\mathcal{E}$ is a two-dimensional-input QSC with two-dimensional output, then $\mathcal{E}_{N}$ is a QSC and $\{ \mathcal{E}_{N}^{(i)}\}$ are two-dimensional-input QQSCs. We will refer to the proof method which Arikan used  to prove that if the primal binary-input discrete memoryless channel $W: \mathcal{X} \rightarrow \mathcal{Y}$ is symmetric with input alphabet $\mathcal{X}=\{0,1\}$ and output alphabet $\mathcal{Y}=\left\{0^{\prime}, 1^{\prime}\right\}$, classical combined channel $W_{N}: \mathcal{X}^{N} \rightarrow \mathcal{Y}^{N}$ and classical coordinate channels $W_{N}^{(i)}: \mathcal{X} \rightarrow \mathcal{Y}^{N} \times \mathcal{X}^{i-1}$, $1\leq i\leq N$, are symmetric.

\subsection{Symmetry of the quantum combined channel \textbf{$\mathcal{E}_{N}$}}
\label{3.1}
If $\mathcal{E}$ is a two-dimensional-input QSC with two-dimensional output, the BTPM of the channel $\mathcal{E}$ can be expressed as
\begin{equation}
\label{eq:BTPM of Channel E}
\bordermatrix{
		& \ket{0^{'}}                           & \ket{1^{'}}                      \cr
\ket{0} & Pr\left(\ket{0^{'}}\lvert \ket{0}\right)    & Pr\left(\ket{1^{'}}\lvert \ket{0}\right) \cr
\ket{1} & Pr\left(\ket{0^{'}}\lvert \ket{1}\right)    & Pr\left(\ket{1^{'}}\lvert \ket{1}\right)
}
\end{equation}
where $Pr\left(\ket{0^{'}}\lvert \ket{0}\right)=Pr\left(\ket{1^{'}}\lvert \ket{1}\right)$ and $Pr\left(\ket{1^{'}}\lvert \ket{0}\right)=Pr\left(\ket{0^{'}}\lvert \ket{1}\right)$. According to Theorem \ref{theorem2-2:Derive the channel operation elements by BTPM}, we can derive a set of quantum operations $\left\{E_{0}, E_{1}\right\}$ of this channel $\mathcal{E}$, which satisfy $E_{0}\ket{0}=\sqrt{Pr\left(\ket{0^{'}}\lvert \ket{0}\right)}\ket{0^{'}}$, $E_{1}\ket{0}=\sqrt{Pr\left(\ket{1^{'}}\lvert \ket{0}\right)}\ket{0^{'}}$, $E_{0}\ket{1}=\sqrt{Pr\left(\ket{1^{'}}\lvert \ket{1}\right)}\ket{1^{'}}$ and $E_{1}\ket{1}=\sqrt{Pr\left(\ket{0^{'}}\lvert \ket{1}\right)}\ket{0^{'}}$.

\begin{definition}[\textbf{$N$-copy channel $\mathcal{E}^{\otimes N}$ of the primal QSC $\mathcal{E}$}]
\label{N-copy channel}
We define a $N$-copy channel $\mathcal{E}^{\otimes N}: \rho^{Q_{1}} \otimes \ldots \otimes \rho^{Q_{N}} \rightarrow \rho^{V_{1}} \otimes \ldots \otimes \rho^{V_{N}}$ which is simply composed by $N$ independent copies of the primal $\mathcal{E}: \rho^{Q} \rightarrow \rho^{V}$. The operation elements $\left\{F_{k}\right\}$ of $\mathcal{E}^{\otimes N}$ is

\begin{equation}
\label{Fk}
F_{k}=E_{b_{1}}^{1} \otimes E_{b_{2}}^{2} \otimes \cdots \otimes E_{b_{N}}^{N}
\end{equation}
where the subscript $b_{j} \in\{0,1\}$, $1 \leq j \leq N$. The superscript $i$ of $E_{b_{j}}^{i}$ means the operation element $E_{b_{j}}^{i}$ only acts on the $i$th input state $\rho^{Q_{i}}$, and the subscript $k\ (0 \leq k \leq 2^{N}-1)$ of operation elements $F_{k}$ is the decimal number of the binary sequence $b_{1} b_{2} \cdots b_{N}$.
\end{definition}

Assume that $N$ uncorrelated pure input states of the channel $\mathcal{E}^{\otimes N}$ is $\ket{Q_{1}^{N}}=\ket{Q_{1}}\otimes \cdots \otimes\ket{Q_{N}}$, we have
\begin{equation}
\begin{aligned}
F_{k}\ket{Q_{1}^{N}} &=E_{b 1}^{1} \otimes \cdots \otimes E_{b N}^{N}\left(\ket{Q_{1}} \otimes \cdots \otimes \ket{Q_{N}}\right)\\
&=\prod_{i=1}^{N}\sqrt{Pr\left(\ket{V_{i}}\lvert \ket{Q_{i}}\right)}\left(\ket{V_{1}}\otimes\cdots\otimes\ket{V_{N}}\right)\\
&=\sqrt{Pr_{N}\left(\ket{V_{1}^{N}} \lvert \ket{ Q_{1}^{N}}\right)}\ket{V_{1}^{N}}
\end{aligned}
\end{equation}
where we let 
\begin{equation}
\ket{V_{1}}\otimes\cdots\otimes\ket{V_{N}}=\ket{V_{1}^{N}}
\end{equation}
and
\begin{equation}
\label{BTPM of N-copy}
Pr_{N}\left(\ket{V_{1}^{N}}\lvert \ket{Q_{1}^{N}}\right)=\prod_{i=1}^{N} Pr\left( \ket{V_{i}} \lvert \ket{Q_{i}}\right)
\end{equation}
for all $V_{1}^{N} \in \mathcal{Y}^{N}$, $Q_{1}^{N} \in \mathcal{X}^{N}$. $\mathcal{X}^{N}$ is the $N$-power extension alphabet of $\mathcal{X}$ and $\mathcal{Y}^{N}$ is the $N$-power extension alphabet of $\mathcal{Y}$. Eq. (\ref{BTPM of N-copy}) means $Pr_{N}\left(\ket{V_{1}^{N}}\lvert \ket{Q_{1}^{N}}\right)$ is the transition probability when the input state of $\mathcal{E}^{\otimes N}$ is $\ket{Q_{1}^{N}}$ and the output state of $\mathcal{E}^{\otimes N}$ is $\ket{V_{1}^{N}}$. 

In Fig. \ref{fig:Channel_combining_and_splitting}, one can see that $\mathcal{E}^{\otimes N}$ is just the last layer of $\mathcal{E}_{N}$, which is to say, if the recursive combining circuits are omitted, $\mathcal{E}_{N}$ will become $\mathcal{E}^{\otimes N}$. Intuitively, it seems that the BTPM of $\mathcal{E}_{N}$ should have some connections with that of $\mathcal{E}^{\otimes N}$. Next, we prove this intuition. 

\begin{proposition}[\textbf{The BTPM of quantum combined channel $\mathcal{E}_{N}$}]
\label{The BTPM of quantum combined channel}
If each input state of the channel $\mathcal{E}_{N}$ is uncorrelated, the basis transition probabilities of the channel $\mathcal{E}_{N}$ can be obtained by the following equation
\begin{equation}
Pr_{N}\left(\ket{V_{1}^{N}}\lvert \ket{Q_{1}^{N}}\right)=\prod_{i=1}^{N} Pr\left( \ket{V_{i}} \lvert \ket{C_{i}}\right)
\end{equation}
for all $C_{i} \in \mathcal{X}$, $V_{i} \in \mathcal{Y}$, $V_{1}^{N} \in \mathcal{Y}^{N}$, $Q_{1}^{N} \in \mathcal{X}^{N}$, where $\ket{Q_{1}^{N}}$ and $\ket{V_{1}^{N}}$ are the input basis vector and the output basis vector of channel $\mathcal{E}_{N}$ respectively. $\ket{C_{i}}$ and $\ket{V_{i}}$ are the $i$th input basis vector and the $i$th output basis vector of the channel $\mathcal{E}^{\otimes N}$ respectively, as shown in Fig. \ref{fig:Channel_combining_and_splitting}.
\end{proposition}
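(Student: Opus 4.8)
The plan is to reduce the statement to the product formula for the $N$-copy channel, Eq.~(\ref{BTPM of N-copy}), by showing that the combining circuit acts as a \emph{deterministic permutation} on basis-state inputs. The combining circuit of $\mathcal{E}_{N}$ in Fig.~\ref{fig:Channel_combining_and_splitting} is built solely from quantum CNOT gates and quantum SWAP gates (the reverse shuffle $R_{N}$ being a network of SWAPs). On computational basis vectors one has $\mathrm{CNOT}\ket{a}\ket{b}=\ket{a}\ket{a\oplus b}$, while SWAP merely exchanges two registers; each therefore maps a product of basis vectors to another product of basis vectors, deterministically and without producing any superposition. First I would argue, by induction on the recursive layers of Fig.~\ref{fig:Channel_combining_and_splitting} (equivalently on the number of gates), that the full combining unitary $U_{N}$ sends the uncorrelated input $\ket{Q_{1}^{N}}=\ket{Q_{1}}\otimes\cdots\otimes\ket{Q_{N}}$ to another product basis vector $\ket{C_{1}^{N}}=\ket{C_{1}}\otimes\cdots\otimes\ket{C_{N}}$, where each $C_{i}\in\{0,1\}$ is a fixed function of $Q_{1}^{N}$. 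This is the quantum counterpart of Arikan's identity $x_{1}^{N}=u_{1}^{N}G_{N}$; since the proposition asserts only the \emph{existence} of the $\ket{C_{i}}$, I would not need to compute the generator matrix explicitly, so the permutation property alone suffices.

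Next I would write the combined channel as $\mathcal{E}_{N}=\mathcal{E}^{\otimes N}\circ U_{N}$, that is, the deterministic front end $U_{N}$ followed by the $N$ independent copies of $\mathcal{E}$ that form the last layer of Fig.~\ref{fig:Channel_combining_and_splitting}. Because $U_{N}$ sends $\ket{Q_{1}^{N}}$ to $\ket{C_{1}^{N}}$ with certainty, no probability is redistributed, and the basis transition probability of $\mathcal{E}_{N}$ from $\ket{Q_{1}^{N}}$ to $\ket{V_{1}^{N}}$ equals that of $\mathcal{E}^{\otimes N}$ from $\ket{C_{1}^{N}}$ to $\ket{V_{1}^{N}}$. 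The latter is given by Eq.~(\ref{BTPM of N-copy}) of Definition~\ref{N-copy channel}, now evaluated at the input $\ket{C_{1}^{N}}$, namely $\prod_{i=1}^{N}Pr(\ket{V_{i}}|\ket{C_{i}})$. Chaining the two equalities gives
\begin{equation}
Pr_{N}\!\left(\ket{V_{1}^{N}}\,|\,\ket{Q_{1}^{N}}\right)=\prod_{i=1}^{N}Pr\!\left(\ket{V_{i}}\,|\,\ket{C_{i}}\right),
\end{equation}
which is precisely the asserted identity.

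The step I expect to be the main obstacle is the first one: establishing rigorously that the combining circuit keeps a basis-state input in product form and acts on it as a deterministic permutation, so that pre-composing it with a channel possessing a BTPM yields a channel whose transition probabilities are merely those of $\mathcal{E}^{\otimes N}$ re-indexed by the permuted input. This is exactly where the hypothesis of uncorrelated basis-state inputs is indispensable: for a genuine superposition the CNOT layers generate entanglement, the output ceases to be a product state, and the factorization into single-copy probabilities would break down.
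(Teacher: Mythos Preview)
Your proposal is correct and follows essentially the same approach as the paper: both arguments observe that the combining circuit, being built from CNOT and SWAP gates, maps a computational basis vector $\ket{Q_{1}^{N}}$ deterministically to another product basis vector $\ket{C_{1}^{N}}$ (the paper writes this explicitly as $C_{1}^{N}=Q_{1}^{N}G_{N}$ and remarks that CNOT produces no entanglement on basis inputs), and then invoke the product formula~(\ref{BTPM of N-copy}) for $\mathcal{E}^{\otimes N}$ at the re-indexed input. Your identification of the ``main obstacle'' matches precisely the step the paper justifies with its no-entanglement remark.
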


\begin{proof}
Assume that each uncorrelated input state $\rho^{Q_{i}}$ of the quantum combined channel $\mathcal{E}_{N}$ is $\rho^{Q_{i}}=q\ket{0}\bra{0}+(1-q)\ket{1}\bra{1}$. Then we have
\begin{equation}
\begin{aligned}
\rho^{Q_{1}^{N}}&=\rho^{Q_{1}} \otimes \cdots \otimes \rho^{Q_{N}} \\
&=\left(q\ket{0}\bra{0}+(1-q)\ket{1}\bra{1}\right)^{\otimes N}\\
&=\sum_{Q_{1}^{N} \in \mathcal{X}^{N}} Pr\left(\ket{Q_{1}^{N}}\bra{Q_{1}^{N}}\right)\ket{Q_{1}^{N}}\bra{Q_{1}^{N}}
\end{aligned}
\end{equation}
where alphabet $\mathcal{X}=\{0,1\}$ and $\mathcal{X}^{N}$ is the $N$-power extension alphabet of $\mathcal{X}$, $Pr\left(\ket{Q_{1}^{N}}\bra{Q_{1}^{N}}\right)$  is the probability of $\ket{Q_{1}^{N}}\bra{Q_{1}^{N}}$. Since each input state $\rho^{Q_{i}}$ is uncorrelated with other input states, we have
\begin{equation}
Pr\left(\ket{Q_{1}^{N}}\bra{Q_{1}^{N}}\right)=\prod_{i=1}^{N}Pr\left(\ket{Q_{i}}\bra{Q_{i}}\right)
\end{equation}

Since the process $\ket{Q_{1}^{N}} \rightarrow \ket{C_{1}^{N}}$ which can be seen as an encoding process only includes quantum CNOT gates and quantum SWAP gates, this process must be unitary. As shown in Fig. \ref{fig:Channel_combining_and_splitting}, we use unitary operator $U_{N}$ to denote this encoding process, and obtain
\begin{equation}
\begin{aligned}
\rho^{C_{1}^{N}}&=U_{N} \rho^{Q_{1}^{N}} U_{N}^{\dagger}\\
&=U_{N}\left(\sum_{Q_{1}^{N} \in \mathcal{X}^{N}} Pr\left(\ket{Q_{1}^{N}}\bra{Q_{1}^{N}}\right)\ket{Q_{1}^{N}}\bra{Q_{1}^{N}}\right) U_{N}^{\dagger} \\
&=\sum_{Q_{1}^{N} \in \mathcal{X}^{N}} Pr\left(\ket{Q_{1}^{N}}\bra{Q_{1}^{N}}\right)\ket{Q_{1}^{N}G_{N}}\bra{Q_{1}^{N}G_{N}} \\
&=\sum_{C_{1}^{N} \in \mathcal{X}^{N}} Pr\left(\ket{Q_{1}^{N}}\bra{Q_{1}^{N}}\right)\ket{C_{1}^{N}}\bra{C_{1}^{N}}
\end{aligned}
\end{equation}
where $C_{1}^{N}=Q_{1}^{N}G_{N}$, and $G_{N}$ is generator matrix\cite{5075875}. 

CNOT gates can produce entanglement between its two input qubits while the control qubit is in a superposition state in the computational basis. However, since the input states $\ket{Q_{i}}$ will only take value from $\ket{0}$ or $\ket{1}$, CNOT gates will not produce entanglement\cite{PhysRevLett.113.140401,PhysRevLett.115.020403}. Besides, SWAP gates will not produce entanglement between its two inputs. Hence, all $\ket{C_{i}}$ are uncorrelated. By Eq. (\ref{BTPM of N-copy}), we have
\begin{equation}
\label{BTPM of combined channel}
Pr_{N}\left(\ket{V_{1}^{N}}\lvert \ket{C_{1}^{N}}\right)=\prod_{i=1}^{N} Pr\left( \ket{V_{i}} \lvert \ket{C_{i}}\right)
\end{equation}
Since $\ket{C_{1}^{N}}=\ket{Q_{1}^{N}G_{N}}$, once $Q_{1}^{N}$ is determined, $C_{1}^{N}$ will be determined. Thus,
\begin{equation}
\begin{aligned}
\label{BTPM}
Pr_{N}\left(\ket{V_{1}^{N}}\lvert \ket{Q_{1}^{N}}\right)&=Pr_{N}\left(\ket{V_{1}^{N}}\lvert \ket{Q_{1}^{N}G_{N}}\right)\\
&=Pr_{N}\left(\ket{V_{1}^{N}}\lvert \ket{C_{1}^{N}}\right)\\
&=\prod_{i=1}^{N} Pr\left( \ket{V_{i}} \lvert \ket{C_{i}}\right)
\end{aligned}
\end{equation}
which completes the proof.
\end{proof}

Let $\mathcal{E}: \rho^{Q} \rightarrow \rho^{V}$ is a two-dimensional-input QSC with two-dimensional output. By definition, there is a permutation $\pi_{1}$ on $\mathcal{Y}$ such that 1) $\pi_{1}^{-1}=\pi_{1}$ and 2) $Pr(\ket{V}\lvert \ket{1})=Pr(\ket{\pi_{1}(V)}\lvert \ket{0})$ for all $V \in \mathcal{Y}=\left\{0^{\prime}, 1^{\prime}\right\}$. Let $\pi_{0}$ be the identity permutation on $\mathcal{Y}$. Using the compact notation mentioned by Arikan, we denote $\pi_{Q}(V)$ by $Q \cdot V$, for all $Q \in \mathcal{X}=\{0,1\}$ and $V \in \mathcal{Y}=\left\{0^{\prime}, 1^{\prime}\right\}$.

Observe that $Pr(\ket{V}\lvert \ket{Q \oplus a})=Pr(\ket{a \cdot V}\lvert \ket{Q})$ for all $a, Q \in \mathcal{X}=\{0,1\}$ and $V \in \mathcal{Y}=\left\{0^{\prime}, 1^{\prime}\right\}$. It's easy to verify that $Pr(\ket{V}\lvert \ket{Q \oplus a})=Pr(\ket{(Q \oplus a)\cdot V}\lvert \ket{0})=Pr(\ket{Q \cdot (a\cdot V)}\lvert \ket{0})$ and $Pr(\ket{V}\lvert \ket{Q \oplus a})= Pr(\ket{Q\cdot V}\lvert \ket{a})$ since $\oplus$ is commutative operation on $\mathcal{X}$.

For $Q_{1}^{N} \in \mathcal{X}^{N}$, $V_{1}^{N} \in \mathcal{Y}^{N}$, let
\begin{equation}
Q_{1}^{N} \cdot V_{1}^{N} \triangleq \left(Q_{1} \cdot V_{1}, \cdots, Q_{N} \cdot V_{N}\right)
\end{equation}

Next, we will prove the quantum combined channel $\mathcal{E}_{N}$  is symmetric.

\begin{theorem}[\textbf{the quantum combined channel $\mathcal{E}_{N}$ is a QSC}]
\label{theorem3-1:the quantum combined channel is a QSC}
If the primal channel $\mathcal{E}$ is a two-dimensional-input QSC with two-dimensional output, then the quantum combined channel $\mathcal{E}_{N}$ is QSC in the sense that
\begin{equation}
\begin{aligned}
\label{proof of quantum combined channel}
Pr_{N}\left(\ket{V_{1}^{N}}\lvert \ket{Q_{1}^{N}}\right)=Pr_{N}\left(\ket{a_{1}^{N} G_{N} \cdot V_{1}^{N}}\lvert \ket{Q_{1}^{N} \oplus a_{1}^{N}}\right)
\end{aligned}
\end{equation}
for all $Q_{1}^{N}, a_{1}^{N} \in \mathcal{X}^{N}$ and $V_{1}^{N} \in \mathcal{Y}^{N}$.
\end{theorem}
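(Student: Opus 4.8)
The plan is to reduce the combined-channel identity to the single-letter symmetry of the primal channel $\mathcal{E}$, using Proposition \ref{The BTPM of quantum combined channel} to pass back and forth between $\mathcal{E}_{N}$ and the $N$-copy channel $\mathcal{E}^{\otimes N}$. I would start from the right-hand side and work toward the left. Applying Proposition \ref{The BTPM of quantum combined channel} with combined-channel input $Q_{1}^{N}\oplus a_{1}^{N}$, the corresponding $\mathcal{E}^{\otimes N}$-layer input is $(Q_{1}^{N}\oplus a_{1}^{N})G_{N}$, so that
\[
Pr_{N}\!\left(\ket{a_{1}^{N}G_{N}\cdot V_{1}^{N}}\,\middle|\,\ket{Q_{1}^{N}\oplus a_{1}^{N}}\right)
=\prod_{i=1}^{N}Pr\!\left(\ket{(a_{1}^{N}G_{N})_{i}\cdot V_{i}}\,\middle|\,\ket{((Q_{1}^{N}\oplus a_{1}^{N})G_{N})_{i}}\right),
\]
where $(\cdot)_{i}$ denotes the $i$th component and I have used the componentwise definition of $a_{1}^{N}G_{N}\cdot V_{1}^{N}$.

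The first key step is to invoke the $GF(2)$-linearity of the polar generator matrix $G_{N}$, which gives $(Q_{1}^{N}\oplus a_{1}^{N})G_{N}=Q_{1}^{N}G_{N}\oplus a_{1}^{N}G_{N}$. Writing $C_{1}^{N}\triangleq Q_{1}^{N}G_{N}$ and $d_{1}^{N}\triangleq a_{1}^{N}G_{N}$, each factor above becomes $Pr(\ket{d_{i}\cdot V_{i}}\,|\,\ket{C_{i}\oplus d_{i}})$. The second key step applies the single-letter identity $Pr(\ket{V}\,|\,\ket{Q\oplus a})=Pr(\ket{a\cdot V}\,|\,\ket{Q})$ already established just before the statement, with $a\to d_{i}$, $Q\to C_{i}$, $V\to d_{i}\cdot V_{i}$, turning each factor into $Pr(\ket{d_{i}\cdot(d_{i}\cdot V_{i})}\,|\,\ket{C_{i}})$. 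Now the involution property $\pi_{1}^{-1}=\pi_{1}$ together with $\pi_{0}=\mathrm{id}$ means $\pi_{d_{i}}$ is its own inverse for either value $d_{i}\in\{0,1\}$, hence $d_{i}\cdot(d_{i}\cdot V_{i})=\pi_{d_{i}}(\pi_{d_{i}}(V_{i}))=V_{i}$. Each factor therefore collapses to $Pr(\ket{V_{i}}\,|\,\ket{C_{i}})$, and reassembling the product via Proposition \ref{The BTPM of quantum combined channel} once more yields $\prod_{i=1}^{N}Pr(\ket{V_{i}}\,|\,\ket{C_{i}})=Pr_{N}(\ket{V_{1}^{N}}\,|\,\ket{Q_{1}^{N}})$, which is exactly the left-hand side.

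I expect the main subtlety to be the bookkeeping across the encoding layer: one must recognize that an input flip by $a_{1}^{N}$ applied \emph{before} encoding is equivalent to a flip by $a_{1}^{N}G_{N}$ at the $C$-layer, which is precisely why the output permutation in the statement is indexed by $a_{1}^{N}G_{N}$ rather than by $a_{1}^{N}$, and that the componentwise permutation action is compatible with this identification. Everything then hinges on the involutivity $\pi_{1}\circ\pi_{1}=\mathrm{id}$, the defining feature of the symmetric primal channel, which makes the double application of $\pi_{d_{i}}$ cancel. No convexity or entropy arguments enter; once Proposition \ref{The BTPM of quantum combined channel} and the single-letter identities are in hand, the proof is purely combinatorial.
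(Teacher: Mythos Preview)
Your proposal is correct and follows essentially the same route as the paper: factor through Proposition \ref{The BTPM of quantum combined channel}, use $GF(2)$-linearity of $G_{N}$ to identify the $C$-layer flip $d_{1}^{N}=a_{1}^{N}G_{N}$, and then collapse the double permutation via the involution $\pi_{1}^{-1}=\pi_{1}$. The only cosmetic difference is that the paper routes both sides through the common ``zero-input'' reference form $Pr_{N}(\ket{C_{1}^{N}\cdot V_{1}^{N}}\,|\,\ket{0_{1}^{N}})$ rather than transforming the right-hand side directly into the left, but the underlying identities and the logical structure are the same.
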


The Eq. (\ref{proof of quantum combined channel}) means arbitrary row of the BTPM of $\mathcal{E}_{N}$ is a permutation of the first row, and arbitrary column of the BTPM of $\mathcal{E}_{N}$ is a permutation of the first column.

\begin{proof}
By Proposition \ref{The BTPM of quantum combined channel}, we have
\begin{equation}
\begin{aligned}
\label{111}
Pr_{N}\left(\ket{V_{1}^{N}}\lvert \ket{Q_{1}^{N}}\right)&=\prod_{i=1}^{N} Pr\left( \ket{V_{i}} \lvert \ket{C_{i}}\right)\\
&=\prod_{i=1}^{N} Pr\left( \ket{C_{i}\cdot V_{i}} \lvert \ket{0}\right)\\
&=Pr_{N}\left(\ket{C_{1}^{N}\cdot V_{1}^{N}}\lvert \ket{0_{1}^{N}}\right)
\end{aligned}
\end{equation}
Let $b_{1}^{N}=a_{1}^{N} G_{N}$, we have

\begin{equation}
\begin{aligned}
\label{222}
Pr_{N}\left(\ket{b_{1}^{N}\cdot V_{1}^{N}}\lvert \ket{Q_{1}^{N}\oplus a_{1}^{N}}\right)&=Pr_{N}\left(\ket{(C_{1}^{N}\oplus b_{1}^{N})\cdot(b_{1}^{N}\cdot V_{1}^{N})}\lvert \ket{0_{1}^{N}}\right)\\
&=Pr_{N}\left(\ket{C_{1}^{N}\cdot V_{1}^{N}}\lvert \ket{0_{1}^{N}}\right)\\
&=Pr_{N}\left(\ket{V_{1}^{N}}\lvert \ket{Q_{1}^{N}}\right)
\end{aligned}
\end{equation}
which completes the proof.
\end{proof}

\subsection{Symmetry of the quantum coordinate channels \textbf{$\{\mathcal{E}_{N}^{(i)}:0\leq i\leq N\}$}}
\label{3.2}
In this part, we will prove that if the primal channel $\mathcal{E}$ is a two-dimensional-input QSC with two-dimensional output, the coordinate channels $\{\mathcal{E}_{N}^{(i)}:0\leq i\leq N\}$ are QQSCs. The key of the proof is to find out the BTPMs of $\{\mathcal{E}_{N}^{(i)}:0\leq i\leq N\}$, and prove their arbitrary row is a permutation of another row.

\begin{theorem}[\textbf{the quantum coordinate channels $\{\mathcal{E}_{N}^{(i)}:0\leq i\leq N\}$ are QQSCs}]
\label{the quantum coordinate channel is QQSC}
If the primal channel $\mathcal{E}$ is a two-dimensional-input QSC with two-dimensional output, and the input state $\rho^{Q_{i}}=q\ket{0}\bra{0}+(1-q)\ket{1}\bra{1}$, then the arbitrary quantum coordinate channel $\mathcal{E}_{N}^{(i)}: \rho^{Q_{i}} \rightarrow \rho^{V_{1}^{N}, R_{1}^{i-1}}$, $1 \leq i \leq N$, is QQSC. The density operator $\rho^{V_{1}^{N},R_{1}^{i-1}}$ of the joint system $V_{1}^{N},R_{1}^{i-1}$ can be written as
\begin{equation}
\begin{aligned}
\label{rho of VIN,R1i-1}
\rho^{V_{1}^{N},R_{1}^{i-1}}= \sum_{m=0}^{2^{N}-1} \left[q Pr_{N}^{(i)}\left(\ket{m^{'}}\lvert \ket{0}\right)\ket{m^{'}}\bra{m^{'}}+(1-q)  Pr_{N}^{(i)}\left(\ket{m^{'}}\lvert \ket{1}\right)\ket{m^{'}}\bra{m^{'}}\right]
\end{aligned}
\end{equation}
where $\ket{m^{'}}=\sum_{\substack{Q_{1}^{i-1}\\=R_{1}^{i-1} \\ \in \mathcal{X}^{i-1}}} \sqrt{Pr\left(\ket{Q_{1}^{i-1}}\bra{Q_{1}^{i-1}}\right)}\ket{\left(Q_{1}^{i-1}, 0,0_{i+1}^{N}\right) G_{N} \cdot V_{1}^{N}, R_{1}^{i-1}}$, $0 \leq m \leq 2^{N}-1$, form a set of basis $\{ \ket{m^{'}}\}_{m=0, \cdots, 2^{N}-1}$ which contains $2^{N}$ basis vectors. And the basis transition probabilities are
\begin{equation}
\label{BTPM of rho of VIN,R1i-1}
\begin{aligned}
&Pr_{N}^{(i)}\left(\ket{m^{'}}\lvert \ket{Q_{i}}\right)\\
&=Pr_{N}^{(i)}\left(\sum_{\substack{Q_{1}^{i-1}\\ =R_{1}^{i-1} \\ \in \mathcal{X}^{i-1}}} \sqrt{Pr\left(\ket{Q_{1}^{i-1}}\bra{Q_{1}^{i-1}}\right)}\ket{\left(Q_{1}^{i-1}, 0,0_{i+1}^{N}\right) G_{N} \cdot V_{1}^{N}, R_{1}^{i-1}}\lvert \ket{Q_{i}}\right)\\
&=\sum_{Q_{i+1}^{N}\in \mathcal{X}^{N-i}}Pr\left(\ket{Q_{i+1}^{N}}\bra{Q_{i+1}^{N}}\right)Pr_{N}\left(\ket{V_{1}^{N}}\lvert \ket{0_{1}^{i-1},Q_{i},Q_{i+1}^{N}}\right)\\
&=Pr_{N}^{(i)}\left(\sum_{Q_{1}^{i-1}=R_{1}^{i-1} \in \mathcal{X}^{i-1}} \sqrt{Pr\left(\ket{Q_{1}^{i-1}}\bra{Q_{1}^{i-1}}\right)}\right.\\
&\left.\ket{\left(a_{1}^{i-1}, 1,a_{i+1}^{N}\oplus Q_{1}^{i-1}, 0,0_{i+1}^{N}\right) G_{N} \cdot V_{1}^{N}, R_{1}^{i-1}}\lvert \ket{Q_{i}\oplus 1}\right)
\end{aligned}
\end{equation}
for all $V_{1}^{N} \in \mathcal{Y}^{N}$, $Q_{i} \in \mathcal{X}$, $(a_{1}^{i-1},1,a_{i+1}^{N})$, $(Q_{1}^{i-1},Q_{i},Q_{i+1}^{N})\in \mathcal{X}^{N}$, $N=2^{n}$, $n \geq 0$, $1 \leq i \leq N$, which means arbitrary row of the BTPM of $\mathcal{E}_{N}^{(i)}$ is a permutation of another row.
\end{theorem}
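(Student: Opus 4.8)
The plan is to push the purified input of $\mathcal{E}_N^{(i)}$ through the encoding unitary and the $N$-copy channel, read off the output density operator, and then use the combined-channel symmetry of Theorem \ref{theorem3-1:the quantum combined channel is a QSC} to recognize both the entangled output basis $\{|m'\rangle\}$ and the claimed permutation structure of the rows. First I would fix $Q_i$ to a basis vector and prepare the joint input on $Q_1^N,R_1^{i-1}$: the past coordinates $Q_1^{i-1}$ are purified by the references into the entangled state $\sum_{Q_1^{i-1}}\sqrt{Pr(|Q_1^{i-1}\rangle\langle Q_1^{i-1}|)}\,|Q_1^{i-1}\rangle_{Q}|Q_1^{i-1}\rangle_{R}$, while the future coordinates $Q_{i+1}^N$ enter as a classical mixture $\sum_{Q_{i+1}^N}Pr(|Q_{i+1}^N\rangle\langle Q_{i+1}^N|)\,|Q_{i+1}^N\rangle\langle Q_{i+1}^N|$ to be averaged over. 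Since $G_N$ is invertible over $\mathbb{F}_2$, the encoding $U_N$ acts as a permutation of the computational basis, $|Q_1^N\rangle\mapsto|Q_1^N G_N\rangle=|C_1^N\rangle$; by the entanglement-free observation used inside the proof of Proposition \ref{The BTPM of quantum combined channel}, it merely relabels the codeword and leaves $R_1^{i-1}$ entangled with the transformed coordinates.

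Next I would act with $\mathcal{E}^{\otimes N}$ on the $C$ systems using the operation elements of Theorem \ref{theorem2-3:Operation elements of two-dimensional-input QQSC}, and invoke Proposition \ref{The BTPM of quantum combined channel} to collapse the product of single-copy transition probabilities into $Pr_N(|V_1^N\rangle\,|\,|C_1^N\rangle)$. The crucial step is then to show that, after tracing out the channel environment and averaging over $Q_{i+1}^N$, the operator on $V_1^N,R_1^{i-1}$ is diagonal in the entangled basis $\{|m'\rangle\}$ of the statement, with the two-term weight $q\,Pr_N^{(i)}(|m'\rangle\,|\,|0\rangle)+(1-q)\,Pr_N^{(i)}(|m'\rangle\,|\,|1\rangle)$. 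Here I would apply the combined-channel symmetry with shift $a_1^N=(Q_1^{i-1},0,0_{i+1}^N)$, namely $Pr_N(|V_1^N\rangle\,|\,|Q_1^N\rangle)=Pr_N(|a_1^N G_N\cdot V_1^N\rangle\,|\,|Q_1^N\oplus a_1^N\rangle)$, to absorb the $Q_1^{i-1}$-dependence of the codeword into a relabeling of the output. This is exactly what makes $|m'\rangle$ a single fixed basis vector regardless of which term of the $Q_1^{i-1}$-superposition one is in, and it yields the averaged formula $Pr_N^{(i)}(|m'\rangle\,|\,|Q_i\rangle)=\sum_{Q_{i+1}^N}Pr(|Q_{i+1}^N\rangle\langle Q_{i+1}^N|)\,Pr_N(|V_1^N\rangle\,|\,|0_1^{i-1},Q_i,Q_{i+1}^N\rangle)$, with the first $i-1$ inputs set to zero precisely because their value has been moved into the output label.

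Finally, to establish the QQSC property I would show the $Q_i=0$ row is a permutation of the $Q_i=1$ row. Applying the symmetry relation once more with a shift vector $a_1^N$ whose $i$th coordinate equals $1$ flips $Q_i\mapsto Q_i\oplus1$ while inducing the output permutation $V_1^N\mapsto a_1^N G_N\cdot V_1^N$; because this permutation carries the basis $\{|m'\rangle\}$ bijectively onto itself, it matches the two rows of the BTPM up to reordering, which is the last equality of Eq. (\ref{BTPM of rho of VIN,R1i-1}) and hence the definition of a QQSC. I expect the main obstacle to lie in the second paragraph: rigorously tracking how the entanglement between $R_1^{i-1}$ and the output survives $U_N$ and $\mathcal{E}^{\otimes N}$, and proving that the coherences between distinct $|m'\rangle$ cancel so the output is genuinely the claimed diagonal mixture rather than merely having the stated diagonal. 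Orthogonality of the output basis together with the two-dimensional QSC structure should force these cross terms to vanish, but verifying this cleanly after the Kraus-index sum and the average over $Q_{i+1}^N$ is the delicate part.
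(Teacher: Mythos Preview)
Your plan is essentially the paper's own proof: purify, push through $U_N$ and $\mathcal{E}^{\otimes N}$, use Theorem \ref{theorem3-1:the quantum combined channel is a QSC} with the shift $a_1^N=(Q_1^{i-1},0,0_{i+1}^N)$ to recognize $|m'\rangle$, and then shift again with $a_i=1$ to get the row permutation. The only point worth flagging is the obstacle you identify at the end. The paper does not cancel cross terms; instead it sets up in advance a bijection between Kraus operators $F_k$ and output labels $V_1^N$ (fix the reference input $|0_1^N\rangle$ and declare $F_k|0_1^N G_N\rangle=\sqrt{Pr_N(|V_1^N\rangle\,|\,|0_1^N\rangle)}\,|V_1^N\rangle$, then by symmetry $F_k|Q_1^N G_N\rangle=\sqrt{Pr_N(|Q_1^N G_N\cdot V_1^N\rangle\,|\,|Q_1^N\rangle)}\,|Q_1^N G_N\cdot V_1^N\rangle$). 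With this device each single $F_k$ already sends the whole $Q_1^{i-1}$-superposition to one vector $|m'\rangle$, so the Kraus sum $\sum_k F_k(\cdot)F_k^\dagger$ is termwise a rank-one projector onto some $|m'\rangle\langle m'|$ and the diagonal form is immediate; there are no coherences to kill. Implementing that bijection is the missing half-line in your second paragraph.
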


The proof of Theorem \ref{the quantum coordinate channel is QQSC} is given in Appendix \ref{Proof of Theorem 6}.

Since $\mathcal{E}_{N}^{(i)}$ is two-dimensional-input QQSC, according to Theorem \ref{theorem2-4:The maximum coherent information of two-dimensional-input QQSC}, the MSLCI of $\mathcal{E}_{N}^{(i)}$ is equal to its symmetric coherent information, namely, the SLCI of $\mathcal{E}_{N}^{(i)}$ takes the maximum when the input state is $\rho^{Q_{i}}=\frac{1}{2}\ket{0}\bra{0}+\frac{1}{2}\ket{1}\bra{1}$, therefore Eq. (\ref{rho of VIN,R1i-1}) and Eq. (\ref{BTPM of rho of VIN,R1i-1}) are reduced to
\begin{equation}
\begin{aligned}
\label{new rho of VIN,R1i-1}
\rho^{V_{1}^{N}, R_{1}^{i-1}}&=\frac{1}{2}\sum_{m=0}^{2^{N}-1} Pr_{N}^{(i)}\left(\ket{m}\lvert \ket{0}\right)\ket{m}\bra{m}+\frac{1}{2} \sum_{m=0}^{2^{N}-1} Pr_{N}^{(i)}\left(\ket{m}\lvert \ket{1}\right)\ket{m}\bra{m}
\end{aligned}
\end{equation}
and
\begin{equation}
\label{new BTPM of rho of VIN,R1i-1}
\begin{aligned}
&Pr_{N}^{(i)}\left(\ket{m}\lvert \ket{Q_{i}}\right)\\
&=Pr_{N}^{(i)}\left(\sum_{Q_{1}^{i-1}=R_{1}^{i-1} \in \mathcal{X}^{i-1}} \frac{1}{2^{\frac{i-1}{2}}}\ket{\left(Q_{1}^{i-1}, 0,0_{i+1}^{N}\right) G_{N} \cdot V_{1}^{N}, R_{1}^{i-1}}\lvert \ket{Q_{i}}\right)\\
&= \frac{2^{i-1}}{2^{N-1}}\sum_{Q_{i+1}^{N}\in \mathcal{X}^{N-i}}Pr_{N}\left(\ket{V_{1}^{N}}\lvert \ket{0_{1}^{i-1},0,Q_{i+1}^{N}}\right)\\
&=Pr_{N}^{(i)}\left(\sum_{\substack{Q_{1}^{i-1}\\=R_{1}^{i-1}\\ \in \mathcal{X}^{i-1}}} \frac{1}{2^{\frac{i-1}{2}}}\ket{\left(a_{1}^{i-1}, 1,a_{i+1}^{N}\oplus Q_{1}^{i-1}, 0,0_{i+1}^{N}\right) G_{N} \cdot V_{1}^{N}, R_{1}^{i-1}}\lvert \ket{Q_{i}\oplus 1}\right)
\end{aligned}
\end{equation}
where $\ket{m}=\sum_{Q_{1}^{i-1}=R_{1}^{i-1} \in \mathcal{X}^{i-1}} \frac{1}{2^{\frac{i-1}{2}}}\left\lvert \left(Q_{1}^{i-1}, 0,0_{i+1}^{N}\right) G_{N}\cdot V_{1}^{N}, R_{1}^{i-1}\right\rangle$, $0 \leq m \leq 2^{N}-1$.

\section {Polarization of two-dimensional-input QSC}
\label{4}
The goal of this section is to prove the MSLCI of coordinate channels $\{ \mathcal{E}_{N}^{(i)}\}$ will polarize.

One can see that the quantum combined channel $\mathcal{E}_{N}$ corresponds to a classical combined channel $W_{N}$, which is obtained by simply replace the quantum circuits in Fig. \ref{fig:Channel_combining_and_splitting} to a classical ones, and the primal channel $\mathcal{E}$ to a classical channel $W$. Our proof in this section makes use of the connection between $\mathcal{E}_{N}$ and  $W_{N}$.

If the BTPM of the primal QSC $\mathcal{E}$ and the TPM of classical primal BSC $W$ are the same, first of all, we prove that  the BTPM of the quantum combined channel $\mathcal{E}_{N}$ and the TPM of classical combined channel $W_{N}$ are the same; secondly, we prove the BTPM of quantum coordinate channel $\mathcal{E}_{N}^{(i)}$ can be derived from the TPM of classical coordinate channel $W_{N}^{(i)}$ which reveals the relationship between the BTPM of $\mathcal{E}_{N}^{(i)}$ and the TPM of $W_{N}^{(i)}$; finally we use this relationship to prove that the MSLCI of $\mathcal{E}_{N}^{(i)}$ numerically equals to the Shannon capacity of $W_{N}^{(i)}$. Since the Shannon capacity of $\{W_{N}^{(i)}\}$ will polarize, the MSLCI of $\{ \mathcal{E}_{N}^{(i)}\}$ will polarize as well. Moreover, due to the MSLCI of the primal channel $\mathcal{E}$ being equal to the Shannon capacity of the classical primal channel $W$, the polarization rate of $\{ \mathcal{E}_{N}^{(i)}\}$ equals to the MSLCI of $\mathcal{E}$, which is referred to Arikan's method\cite{5075875}.

\begin{proposition}[\textbf{Relationship between the BTPM of $\mathcal{E}_{N}$ and the TPM of $W_{N}$}]
\label{BTPM and TPM are the same}
Assume that the BTPM of a two-dimensional-input QSC with two-dimensional output $\mathcal{E}$ is
\begin{equation}
\bordermatrix{
		& \ket{0^{'}}                           & \ket{1^{'}}                      \cr
\ket{0} & Pr\left(\ket{0^{'}}\lvert \ket{0}\right)    & Pr\left(\ket{1^{'}}\lvert \ket{0}\right) \cr
\ket{1} & Pr\left(\ket{0^{'}}\lvert \ket{1}\right)    & Pr\left(\ket{1^{'}}\lvert \ket{1}\right)
}
\end{equation}
where $Pr\left(\ket{0^{'}}\lvert \ket{0}\right)=Pr\left(\ket{1^{'}}\lvert \ket{1}\right)=W\left(0^{'}\lvert 0\right)=W\left(1^{'}\lvert 1\right)$ and $Pr\left(\ket{1^{'}}\lvert \ket{0}\right)=Pr\left(\ket{0^{'}}\lvert \ket{1}\right)=W\left(1^{'}\lvert 0\right)=W\left(0^{'}\lvert 1\right)$. Then the BTPM of quantum combined channel $\mathcal{E}_{N}$ and the TPM of classical combined channel $W_{N}$ are the same, that is to say
\begin{equation}
Pr_{N}\left(\ket{V_{1}^{N}}\lvert \ket{Q_{1}^{N}}\right)=W_{N}\left(y_{1}^{N} \lvert u_{1}^{N}\right)
\end{equation}
for all $V_{1}^{N}=y_{1}^{N} \in \mathcal{Y}^{N}$ and $Q_{1}^{N}=u_{1}^{N} \in \mathcal{X}^{N}$, where $y, V \in \mathcal{Y}=\left\{0^{\prime}, 1^{\prime}\right\}$ and $u, Q \in \mathcal{X}=\{0,1\}$.
\end{proposition}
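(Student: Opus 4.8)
The plan is to reduce both sides to a product over the primal channel and then match factor by factor. First I would invoke Proposition \ref{The BTPM of quantum combined channel} to write
\begin{equation}
Pr_{N}\left(\ket{V_{1}^{N}}| \ket{Q_{1}^{N}}\right)=\prod_{i=1}^{N} Pr\left( \ket{V_{i}} | \ket{C_{i}}\right),
\end{equation}
where the intermediate basis label is given by the same linear recoding $C_{1}^{N}=Q_{1}^{N}G_{N}$ that already appeared in the proof of that proposition, with $G_{N}$ the Arikan generator matrix.

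Next I would recall the classical combining identity from \cite{5075875}: because the classical combining circuit performs exactly the linear transformation $c_{1}^{N}=u_{1}^{N}G_{N}$ followed by $N$ independent copies of the memoryless channel $W$, its TPM factorizes as
\begin{equation}
W_{N}\left(y_{1}^{N} | u_{1}^{N}\right)=\prod_{i=1}^{N} W\left( y_{i} | c_{i}\right),\qquad c_{1}^{N}=u_{1}^{N}G_{N}.
\end{equation}
The key observation is that the generator matrix is identical in the two settings: the quantum combining circuit is obtained from the classical one by replacing each XOR gate with a CNOT and each shuffle wire with a SWAP, and, as noted in the proof of Proposition \ref{The BTPM of quantum combined channel}, on computational-basis inputs CNOT realizes the same $\oplus$-linear (modulo-$2$) map on basis labels as XOR, while SWAP realizes the same index permutation. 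Hence when $Q_{1}^{N}=u_{1}^{N}$ we have $C_{1}^{N}=c_{1}^{N}$ coordinate by coordinate.

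Finally I would apply the hypothesis $Pr(\ket{V_{i}}|\ket{C_{i}})=W(y_{i}|c_{i})$, valid for each $i$ because the single-letter BTPM of $\mathcal{E}$ and the TPM of $W$ agree entry for entry by assumption; comparing the two product expressions term by term then yields $Pr_{N}(\ket{V_{1}^{N}}|\ket{Q_{1}^{N}})=W_{N}(y_{1}^{N}|u_{1}^{N})$. The only nontrivial point — and the one I would state most carefully — is the equality of the two recodings $C_{1}^{N}=c_{1}^{N}$, i.e. that the quantum combining circuit induces \emph{precisely} the classical generator matrix $G_{N}$ on computational-basis labels. Once Proposition \ref{The BTPM of quantum combined channel} has been established this identification is immediate, so the remainder is a routine factor-by-factor comparison rather than a genuine obstacle.
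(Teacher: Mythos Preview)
Your proposal is correct and follows essentially the same route as the paper: invoke Proposition~\ref{The BTPM of quantum combined channel} to factor $Pr_{N}$ through $C_{1}^{N}=Q_{1}^{N}G_{N}$, invoke Arikan's classical factorization of $W_{N}$ through $x_{1}^{N}=u_{1}^{N}G_{N}$, observe that the two recodings coincide because the same $G_{N}$ is used, and match factors via the single-letter hypothesis. The only difference is cosmetic---you spell out why CNOT/SWAP on computational-basis labels reproduce $G_{N}$, whereas the paper takes this as already established in Proposition~\ref{The BTPM of quantum combined channel}.
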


\begin{proof}
By Proposition \ref{The BTPM of quantum combined channel}, we have
\begin{equation}
\begin{aligned}
Pr_{N}\left(\ket{V_{1}^{N}}\lvert \ket{Q_{1}^{N}}\right)&=Pr_{N}\left(\ket{V_{1}^{N}}\lvert \ket{Q_{1}^{N}G_{N}}\right)\\
&=Pr_{N}\left(\ket{V_{1}^{N}}\lvert \ket{C_{1}^{N}}\right)\\
&=\prod_{i=1}^{N} Pr\left( \ket{V_{i}} \lvert \ket{C_{i}}\right)
\end{aligned}
\end{equation}
According to Arikan’s method\cite{5075875}, we have
\begin{equation}
\begin{aligned}
W_{N}\left(y_{1}^{N}\lvert u_{1}^{N}\right)&=W_{N}\left(y_{1}^{N}\lvert u_{1}^{N}G_{N}\right)\\
&=W_{N}\left(y_{1}^{N}\lvert x_{1}^{N}\right)\\
&=\prod_{i=1}^{N} W\left(V_{i} \lvert x_{i}\right)
\end{aligned}
\end{equation}
where $u_{1}^{N} G_{N}=x_{1}^{N}$. Since $V_{1}^{N}=y_{1}^{N}$ and $Q_{1}^{N}=u_{1}^{N}$, then we have $Q_{1}^{N} G_{N}=u_{1}^{N} G_{N}=C_{1}^{N}=x_{1}^{N}$. Thus, we have $Pr\left( \ket{V_{i}} \lvert \ket{C_{i}}\right)=W\left(y_{i} \lvert x_{i}\right)$, and obtain
\begin{equation}
\prod_{i=1}^{N} Pr\left( \ket{V_{i}} \lvert \ket{C_{i}}\right)=\prod_{i=1}^{N} W\left(V_{i} \lvert x_{i}\right)
\end{equation}
which completes the proof.
\end{proof}

\begin{proposition}[\textbf{Relationship between the BTPM of $\mathcal{E}_{N}^{(i)}$ and the TPM of $W_{N}^{(i)}$}]
\label{BTMP can be derived from TPM}
According to Eq. (\ref{new rho of VIN,R1i-1}) and Eq. (\ref{new BTPM of rho of VIN,R1i-1}), when the input state $\rho^{Q_{i}}$ of the channel $\mathcal{E}_{N}^{(i)}$ is $\rho^{Q_{i}}=\frac{1}{2}\ket{0}\bra{0}+\frac{1}{2}\ket{1}\bra{1}$, the output state $\ket{m}$ of the channel $\mathcal{E}_{N}^{(i)}$ is
\begin{equation}
\ket{m}=\sum_{Q_{1}^{i-1}\\=R_{1}^{i-1} \in \mathcal{X}^{i-1}} \frac{1}{2^{\frac{i-1}{2}}}\ket{\left(Q_{1}^{i-1}, 0, 0_{i+1}^{N}\right) G_{N} \cdot V_{1}^{N}, R_{1}^{i-1}}
\end{equation}
and the basis transition probabilities are
\begin{equation}
\begin{aligned}
&Pr_{N}^{(i)}\left(\ket{m}\lvert \ket{Q_{i}}\right)=\\
&Pr_{N}^{(i)}\left(\sum_{Q_{1}^{i-1}=R_{1}^{i-1} \in \mathcal{X}^{i-1}} \frac{1}{2^{\frac{i-1}{2}}}\ket{\left(Q_{1}^{i-1}, 0, 0_{i+1}^{N}\right) G_{N} \cdot V_{1}^{N}, R_{1}^{i-1}}\lvert \ket{Q_{i}}\right)\\
&=\frac{2^{i-1}}{2^{N-1}}\sum_{Q_{1}^{i-1}\in \mathcal{X}^{i-1}}Pr_{N}\left(\ket{V_{1}^{N}}\lvert\ket{0_{1}^{i-1}, Q_{i}, Q_{i+1}^{N}}\right)
\end{aligned}
\end{equation}
We can derive $P r_{N}^{(i)}\left(\ket{m}\lvert \ket{Q_{i}}\right)$ from the TPM of classical  coordinate channels $W_{N}^{(i)}$
\begin{equation}
\begin{aligned}
&Pr_{N}^{(i)}\left(\ket{m}\lvert \ket{Q_{i}}\right)\\
&=\sum_{u_{1}^{i-1}\in \mathcal{X}^{i-1}}\frac{1}{2^{N-1}}\sum_{u_{i+1}^{N}\in \mathcal{X}^{N-i}}W_{N}\left( (u_{1}^{i-1},0,0_{i+1}^{N})G_{N}\cdot y_{1}^{N}\lvert u_{1}^{i-1},u_{i},u_{i+1}^{N}\right)\\
&=\frac{2^{i-1}}{2^{N-1}}\sum_{u_{i+1}^{N}\in \mathcal{X}^{N-i}}W_{N}\left(y_{1}^{N}\lvert 0_{1}^{i-1},u_{i},u_{i+1}^{N}\right)\\
&=2^{i-1}W_{N}^{(i)}\left(y_{1}^{N}, 0_{1}^{i-1}\lvert u_{i}\right)
\end{aligned}
\end{equation}
for all $V_{1}^{N}=y_{1}^{N} \in \mathcal{Y}^{N}$ and $Q_{1}^{N}=u_{1}^{N} \in \mathcal{X}^{N}$, where $y, V \in \mathcal{Y}=\left\{0^{\prime}, 1^{\prime}\right\}$ and $u, Q \in \mathcal{X}=\{0,1\}$.
\end{proposition}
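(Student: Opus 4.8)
The plan is to verify the claimed chain of equalities by pushing the quantum basis transition probability of $\mathcal{E}_{N}^{(i)}$ into the classical world through Proposition~\ref{BTPM and TPM are the same}, and then recognizing the resulting classical sum as Arikan's coordinate-channel transition probability. The starting point is already supplied by Eq.~(\ref{new BTPM of rho of VIN,R1i-1}) together with the first equality asserted in the proposition, which writes $Pr_{N}^{(i)}(\ket{m}|\ket{Q_{i}})$ purely in terms of the combined-channel probabilities as $\frac{2^{i-1}}{2^{N-1}}\sum_{Q_{i+1}^{N}}Pr_{N}(\ket{V_{1}^{N}}|\ket{0_{1}^{i-1},Q_{i},Q_{i+1}^{N}})$, so no further quantum manipulation of $\rho^{V_{1}^{N},R_{1}^{i-1}}$ is needed and the whole argument reduces to a bookkeeping identity on transition probabilities.

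First I would apply Proposition~\ref{BTPM and TPM are the same}, which identifies $Pr_{N}(\ket{V_{1}^{N}}|\ket{Q_{1}^{N}})$ with $W_{N}(y_{1}^{N}|u_{1}^{N})$ under the correspondence $V_{1}^{N}=y_{1}^{N}$, $Q_{1}^{N}=u_{1}^{N}$. Replacing each factor turns the expression above into $\frac{2^{i-1}}{2^{N-1}}\sum_{u_{i+1}^{N}}W_{N}(y_{1}^{N}|0_{1}^{i-1},u_{i},u_{i+1}^{N})$, which is exactly the middle line of the stated derivation. Comparing with Arikan's definition of the classical coordinate channel~\cite{5075875}, namely $W_{N}^{(i)}(y_{1}^{N},u_{1}^{i-1}|u_{i})=\frac{1}{2^{N-1}}\sum_{u_{i+1}^{N}}W_{N}(y_{1}^{N}|u_{1}^{N})$, and setting $u_{1}^{i-1}=0_{1}^{i-1}$, this line is immediately $2^{i-1}W_{N}^{(i)}(y_{1}^{N},0_{1}^{i-1}|u_{i})$, which is the last equality.

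It remains to justify the first displayed form, whose summand carries the shifted output $(u_{1}^{i-1},0,0_{i+1}^{N})G_{N}\cdot y_{1}^{N}$ inherited from the structure of the basis vectors $\ket{m}$. Here I would invoke the classical analogue of Theorem~\ref{theorem3-1:the quantum combined channel is a QSC}, i.e. the symmetry $W_{N}(a_{1}^{N}G_{N}\cdot y_{1}^{N}|u_{1}^{N}\oplus a_{1}^{N})=W_{N}(y_{1}^{N}|u_{1}^{N})$. Taking $a_{1}^{N}=(u_{1}^{i-1},0,0_{i+1}^{N})$ and $u_{1}^{N}=(0_{1}^{i-1},u_{i},u_{i+1}^{N})$, so that $u_{1}^{N}\oplus a_{1}^{N}=(u_{1}^{i-1},u_{i},u_{i+1}^{N})$, the shifted summand collapses to $W_{N}(y_{1}^{N}|0_{1}^{i-1},u_{i},u_{i+1}^{N})$, which no longer depends on $u_{1}^{i-1}$; summing this constant over the $2^{i-1}$ values of $u_{1}^{i-1}\in\mathcal{X}^{i-1}$ produces the factor $2^{i-1}$ and reproduces the middle line. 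I expect this shift-invariance step to be the main obstacle: one must carefully track the generator-matrix action on both the input shift $a_{1}^{N}$ and the output permutation $a_{1}^{N}G_{N}\cdot(\,\cdot\,)$, and confirm that it is exactly this symmetry that decouples the summand from $u_{1}^{i-1}$, thereby converting the sum over $u_{1}^{i-1}$ into the clean multiplicative factor $2^{i-1}$ that links the quantum BTPM of $\mathcal{E}_{N}^{(i)}$ to the classical coordinate-channel TPM of $W_{N}^{(i)}$.
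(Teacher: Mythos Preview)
Your proposal is correct and follows essentially the same route as the paper's proof: starting from Eq.~(\ref{new BTPM of rho of VIN,R1i-1}), invoking Proposition~\ref{BTPM and TPM are the same} to pass from $Pr_{N}$ to $W_{N}$, using the symmetry $W_{N}(a_{1}^{N}G_{N}\cdot y_{1}^{N}|u_{1}^{N}\oplus a_{1}^{N})=W_{N}(y_{1}^{N}|u_{1}^{N})$ with $a_{1}^{N}=(u_{1}^{i-1},0,0_{i+1}^{N})$ to collapse the $u_{1}^{i-1}$-sum into the factor $2^{i-1}$, and then recognising Arikan's definition of $W_{N}^{(i)}$. The paper merely orders these same ingredients slightly differently, first packaging the symmetry step into an intermediate identity $Pr_{N}(\ket{V_{1}^{N}}|\ket{0_{1}^{i-1},Q_{i},Q_{i+1}^{N}})=W_{N}((u_{1}^{i-1},0,0_{i+1}^{N})G_{N}\cdot y_{1}^{N}|u_{1}^{i-1},u_{i},u_{i+1}^{N})$ before substituting.
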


The proof of Propositon \ref{BTMP can be derived from TPM} is given in Appendix \ref{Proof of Proposition 3}.

The Proposition \ref{BTMP can be derived from TPM} means that arbitrary column of the BTPM of each $\mathcal{E}_{N}^{(i)}$ is the sum of some $2^{i-1}$ columns of the TPM whose corresponding elements are all equal, hence the TPM of each $W_{N}^{(i)}$ has $2^{N+i-1}$ columns while the BTPM of each $\mathcal{E}_{N}^{(i)}$ has $2^{N}$ columns.

\begin{theorem}[\textbf{the polarization of quantum coordinate channels $\{\mathcal{E}_{N}^{(i)}\}$}]
\label{the polarization of quantum coordinate channels}
If the BTPM of the primal QSC $\mathcal{E}$ and the TPM of classical primal BSC $W$ are the same, the MSLCI $I\left(\rho^{Q_{i}}, \mathcal{E}_{N}^{(i)}\right)$ of the quantum coordinate channel $\mathcal{E}_{N}^{(i)}$ is numerically equal to the Shannon capacity $I\left(W_{N}^{(i)}\right)$ of classical coordinate channel $W_{N}^{(i)}$, namely,
\begin{equation}
\begin{aligned}
&I\left(\rho^{Q_{i}}, \mathcal{E}_{N}^{(i)}\right)=S\left(\rho^{V_{1}^{N},R_{1}^{i-1}}\right)-S\left(\rho^{V_{1}^{N},R_{1}^{i}}\right)=I\left(W_{N}^{(i)}\right)\\
&=H\left(y_{1}^{N} u_{1}^{i-1}\right)-H\left(y_{1}^{N} u_{1}^{i}\right)+H\left(u_{i}\right)
\end{aligned}
\end{equation}
where $p\left(u_{i}=0\right)=p\left(u_{i}=1\right)=\frac{1}{2}$ and the density operator $\rho^{Q_{i}}=\frac{1}{2}\ket{0}\bra{0}+\frac{1}{2}\ket{1}\bra{1}$ is the input of the quantum coordinate channel $\mathcal{E}_{N}^{(i)}$, which is also the $i$th input of the quantum combined channel $\mathcal{E}_{N}$. $S(\cdot)$ is von Neumann entropy, and $H(\cdot)$ is Shannon entropy. Since classical coordinate channels $\{W_{N}^{(i)}\}$ polarize, quantum coordinate channels $\{ \mathcal{E}_{N}^{(i)}\}$ polarize as well.
\end{theorem}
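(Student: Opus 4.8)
The plan is to start from the definition of coherent information, Eq.~(\ref{eq:coherent information}), applied to the coordinate channel $\mathcal{E}_N^{(i)}$. Its input is $\rho^{Q_i}$ with reference $R_i$, and its output is the joint system $(V_1^N,R_1^{i-1})$, so $S(Q^{'})=S(\rho^{V_1^N,R_1^{i-1}})$ while the entropy exchange is $S_e=S(R_i Q^{'})=S(\rho^{V_1^N,R_1^i})$; this gives the first equality $I(\rho^{Q_i},\mathcal{E}_N^{(i)})=S(\rho^{V_1^N,R_1^{i-1}})-S(\rho^{V_1^N,R_1^i})$ directly. Since Theorem~\ref{the quantum coordinate channel is QQSC} guarantees $\mathcal{E}_N^{(i)}$ is a two-dimensional-input QQSC, I would then read off both entropies from its BTPM. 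By Eq.~(\ref{new rho of VIN,R1i-1}) the output $\rho^{V_1^N,R_1^{i-1}}$ is diagonal in the orthonormal basis $\{\ket{m}\}$ with eigenvalues $\lambda_m=\tfrac{1}{2}Pr_N^{(i)}(\ket{m}|\ket{0})+\tfrac{1}{2}Pr_N^{(i)}(\ket{m}|\ket{1})$, so $S(\rho^{V_1^N,R_1^{i-1}})=H(\{\lambda_m\})$; by Eq.~(\ref{eq:Se=S(W)}) together with the $W_{ij}=p_i\delta_{ij}$ computation in the proof of Theorem~\ref{theorem2-4:The maximum coherent information of two-dimensional-input QQSC}, the entropy exchange is $S_e=H(\{p_k\})=H(P_0)$, where $P_0=\{Pr_N^{(i)}(\ket{m}|\ket{0})\}$ is the first BTPM row.

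The key structural fact is that for a QQSC the second row $P_1=\{Pr_N^{(i)}(\ket{m}|\ket{1})\}$ is a permutation of $P_0$, hence $H(P_0)=H(P_1)$ and $S_e=\tfrac{1}{2}H(P_0)+\tfrac{1}{2}H(P_1)$. The coherent information then collapses to the single-variable mutual information $I(\rho^{Q_i},\mathcal{E}_N^{(i)})=H(\tfrac{1}{2}(P_0+P_1))-\tfrac{1}{2}H(P_0)-\tfrac{1}{2}H(P_1)=I(U;M)$ of the effective classical channel $U\mapsto M$ whose transition law is the BTPM column $Pr_N^{(i)}(\ket{m}|\ket{u_i})$. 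I would then invoke Proposition~\ref{BTMP can be derived from TPM}, which gives $Pr_N^{(i)}(\ket{m}|\ket{u_i})=2^{i-1}W_N^{(i)}(y_1^N,0_1^{i-1}|u_i)$, i.e.\ the BTPM column is precisely the conditional output law of $W_N^{(i)}$ with the frozen prefix fixed at $u_1^{i-1}=0_1^{i-1}$, normalized by $p(0_1^{i-1})=2^{-(i-1)}$. Because the inputs are i.i.d.\ uniform, $U_i$ is independent of $U_1^{i-1}$, so $I(W_N^{(i)})=I(U_i;Y_1^N U_1^{i-1})=I(U_i;Y_1^N\mid U_1^{i-1})$; the symmetry of $W_N^{(i)}$ inherited from the BSC $W$ makes this conditional mutual information identical for every value of $U_1^{i-1}$, and taking the representative $u_1^{i-1}=0_1^{i-1}$ identifies it with $I(U;M)$. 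This chains the middle equalities, and the final decomposition $I(W_N^{(i)})=H(y_1^N u_1^{i-1})-H(y_1^N u_1^i)+H(u_i)$ follows from the ordinary chain rule.

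With the numerical identity $I(\rho^{Q_i},\mathcal{E}_N^{(i)})=I(W_N^{(i)})$ established, the polarization conclusion is immediate: Arikan's theorem shows the symmetric capacities $\{I(W_N^{(i)})\}$ polarize to $0$ and $1$ with the good fraction tending to $I(W)$, so the MSLCIs $\{I(\rho^{Q_i},\mathcal{E}_N^{(i)})\}$ polarize identically, with polarization rate equal to $I(W)$, which is the MSLCI of the primal channel $\mathcal{E}$.

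The step I expect to be the main obstacle is the correct evaluation of the entropy exchange $S_e=S(\rho^{V_1^N,R_1^i})$. The naive temptation is to treat the maximally entangled reference $R_i$ as a classical flag and write $\rho^{V_1^N,R_1^i}=\tfrac{1}{2}\ket{0}\bra{0}_{R_i}\otimes\rho_0+\tfrac{1}{2}\ket{1}\bra{1}_{R_i}\otimes\rho_1$, which by Eq.~(\ref{property}) would add a spurious $H(u_i)=1$ and break the match with $I(W_N^{(i)})$. The resolution, and the crux of the argument, is that $R_i$ is coherently rather than classically correlated with the channel input, so the joint output is entangled and its spectrum is exactly $\{p_k\}$, not the doubled list $\{\tfrac{1}{2}p_k\}$; this is forced by the operation-element form of Theorem~\ref{theorem2-3:Operation elements of two-dimensional-input QQSC}, giving $S_e=H(\{p_k\})$ with no extra entropy. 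I would verify this explicitly on the single-copy bit-flip channel, where $S_e=H(p)$ rather than $1+H(p)$, as a consistency check before trusting the general reduction.
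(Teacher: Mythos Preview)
Your argument is correct and reaches the right identity, but the route differs from the paper's. The paper evaluates \emph{both} von Neumann entropies by direct diagonalization: it reads off the eigenvalues of $\rho^{V_1^N,R_1^{i-1}}$ from Eq.~(\ref{new rho of VIN,R1i-1}), uses Proposition~\ref{BTMP can be derived from TPM} to identify them as $2^{i-1}p(y_1^N,0_1^{i-1})$, and obtains $S(\rho^{V_1^N,R_1^{i-1}})=H(y_1^N u_1^{i-1})-(i-1)$; then it invokes ``the same method'' (i.e.\ the Appendix~\ref{Proof of Theorem 6} construction with $R_{i+1}^N$ traced out instead of $R_i^N$) to get $S(\rho^{V_1^N,R_1^{i}})=H(y_1^N u_1^{i})-i$, so the difference is $H(y_1^N u_1^{i-1})-H(y_1^N u_1^{i})+1$. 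You instead short-circuit the second entropy via the $W_{ij}=p_i\delta_{ij}$ computation from Theorem~\ref{theorem2-4:The maximum coherent information of two-dimensional-input QQSC}, recognise the whole coherent information as the classical mutual information $I(U;M)$ of the BTPM channel, and then match $I(U;M)$ to $I(W_N^{(i)})$ through the conditioning/symmetry argument $I(U_i;Y_1^N U_1^{i-1})=I(U_i;Y_1^N\mid U_1^{i-1}=0_1^{i-1})$. Your path is more conceptual and avoids the offset bookkeeping; the paper's is purely computational but self-contained given Appendix~\ref{Proof of Theorem 6}. One point to flag: your $S_e=H(P_0)$ step uses the Kraus operators of Theorem~\ref{theorem2-3:Operation elements of two-dimensional-input QQSC}, which are \emph{derived from the BTPM}, whereas the circuit definition of $\mathcal{E}_N^{(i)}$ has its own Kraus decomposition; you are implicitly assuming these represent the same CPTP map (not merely the same action on diagonal inputs). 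The paper never needs this assumption because it reads $S_e$ off the actual state $\rho^{V_1^N,R_1^{i}}$ built in Appendix~\ref{Proof of Theorem 6}. The numbers agree, so your conclusion stands, but if you keep your route you should note that the identification is justified a posteriori by that agreement (or, equivalently, by repeating the Appendix~\ref{Proof of Theorem 6} diagonalization at level $i$).
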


\begin{proof}
For $W_{N}^{(i)}$, its Shannon capacity is $I\left(W_{N}^{(i)}\right)=H\left(y_{1}^{N} u_{1}^{i-1}\right)-H\left(y_{1}^{N} u_{1}^{i}\right)+H\left(u_{i}\right)$. To calculate the Shannon capacity of $W_{N}^{(i)}$, we should first calculate $H\left(y_{1}^{N} u_{1}^{i-1}\right)$ which is the the Shannon entropy of the output $y_{1}^{N} u_{1}^{i-1}$ of $W_{N}^{(i)}$
\begin{equation}
\begin{aligned}
H\left(y_{1}^{N} u_{1}^{i-1}\right)=\sum_{y_{1}^{N} u_{1}^{i-1} \in \mathcal{Y}^{N} \times \mathcal{X}^{i-1}}-p\left(y_{1}^{N}, u_{1}^{i-1}\right) \log _{2} p\left(y_{1}^{N}, u_{1}^{i-1}\right)
\end{aligned}
\end{equation}
where $p\left(y_{1}^{N}, u_{1}^{i-1}\right)=\frac{1}{2} W_{N}^{(i)}\left(y_{1}^{N}, u_{1}^{i-1} \mid u_{i}=0\right)+\frac{1}{2} W_{N}^{(i)}\left(y_{1}^{N}, u_{1}^{i-1} \mid u_{i}=1\right)$.
Notice that  by Proposition \ref{BTMP can be derived from TPM}, we have $W_{N}^{(i)}\left(y_{1}^{N}, 0_{1}^{i-1} \mid u_{i}\right)=W_{N}^{(i)}\left(\left(u_{1}^{i-1},0,0_{i+1}^{N}\right) G_{N} \cdot y_{1}^{N}, 0_{1}^{i-1} \oplus u_{1}^{i-1} \mid u_{i}\right)$ for all $u_{1}^{i-1} \in \mathcal{X}^{i-1}$, hence
\begin{equation}
\begin{aligned}
H\left(y_{1}^{N} u_{1}^{i-1}\right)=2^{i-1} \sum_{y_{1}^{N} \in \mathcal{Y}^{N}}-p\left(y_{1}^{N}, 0_{1}^{i-1}\right) \log _{2} p\left(y_{1}^{N}, 0_{1}^{i-1}\right)
\end{aligned}
\end{equation}
and
\begin{equation}
\begin{aligned}
\sum_{y_{1}^{N} u_{1}^{i-1} \in \mathcal{Y}^{N} \times\mathcal{X}^{i-1}} p\left(y_{1}^{N}, u_{1}^{i-1}\right)=2^{i-1} \sum_{y_{1}^{N} \in \mathcal{Y}^{N}} p\left(y_{1}^{N}, 0_{1}^{i-1}\right)=1
\end{aligned}
\end{equation}
Now, we calculate $S\left(\rho^{V_{1}^{N},R_{1}^{i-1}}\right)$, the von Neumann entropy of the ouput state $\rho^{V_{1}^{N},R_{1}^{i-1}}$ of $ \mathcal{E}_{N}^{(i)}$. By Eq. (\ref{new rho of VIN,R1i-1}), we have
\begin{equation}
S\left(\rho^{V_{1}^{N},R_{1}^{i-1}}\right)=-\sum_{m} p\left( \ket{m} \right) \log _{2} p\left(\ket{m}\right)
\end{equation}
where $p\left( \ket{m} \right)=\frac{1}{2}Pr_{N}^{(i)}\left(\ket{m}\lvert \ket{0}\right)+\frac{1}{2}Pr_{N}^{(i)}\left(\ket{m}\lvert \ket{1}\right)$. By Proposition \ref{BTMP can be derived from TPM}, we have
\begin{equation}
\begin{aligned}
Pr_{N}^{(i)}\left(\ket{m}\lvert \ket{Q_{i}}\right)&=Pr_{N}^{(i)}\left(\sum_{\substack{Q_{1}^{i-1}\\=R_{1}^{i-1}\\ \in \mathcal{X}^{i-1}}} \frac{1}{2^{\frac{i-1}{2}}}\ket{\left(Q_{1}^{i-1}, 0, 0_{i+1}^{N}\right) G_{N} \cdot V_{1}^{N}, R_{1}^{i-1}}\lvert \ket{Q_{i}}\right)\\
&=2^{i-1} W_{N}^{(i)}\left(y_{1}^{N}, 0_{1}^{i-1} \mid u_{i}\right)
\end{aligned}
\end{equation}
for all $y_{1}^{N}=V_{1}^{N} \in \mathcal{Y}^{N}$ and $Q_{1}^{N}=u_{1}^{N} \in \mathcal{X}^{N}$. Thus $S\left(\rho^{V_{1}^{N},R_{1}^{i-1}}\right)$ can be rewritten as
\begin{equation}
\begin{aligned}
&S\left(\rho^{V_{1}^{N},R_{1}^{i-1}}\right)=-\sum_{y_{1}^{N} \in \mathcal{Y}^{N}} 2^{i-1} p\left(y_{1}^{N}, 0_{1}^{i-1}\right) \log _{2}\left[2^{i-1} p\left(y_{1}^{N}, 0_{1}^{i-1}\right)\right] \\
&=-2^{i-1} \sum_{y_{1}^{N} \in \mathcal{Y}^{N}} p\left(y_{1}^{N}, 0_{1}^{i-1}\right) \log _{2} p\left(y_{1}^{N}, 0_{1}^{i-1}\right)-(i-1) \times 2^{i-1} \sum_{y_{1}^{N} \in \mathcal{Y}^{N}} p\left(y_{1}^{N}, 0_{1}^{i-1}\right)\\
&=H\left(y_{1}^{N} u_{1}^{i-1}\right)-(i-1)
\end{aligned}
\end{equation}
Using the same method, we have $S\left(\rho^{V_{1}^{N},R_{1}^{i}}\right)=H\left(y_{1}^{N} u_{1}^{i}\right)-i$. Thus $I\left(\rho^{Q_{i}}, \mathcal{E}_{N}^{(i)}\right)=S\left(\rho^{V_{1}^{N},R_{1}^{i-1}}\right)-S\left(\rho^{V_{1}^{N} ,R_{1}^{i}}\right)=H\left(y_{1}^{N} u_{1}^{i-1}\right)-H\left(y_{1}^{N} u_{1}^{i}\right)+1$. Notice that $H\left(u_{i}\right)=H\left(\frac{1}{2}\right)=1$, thus we have 
\begin{equation}
\begin{aligned}
I\left(\rho^{Q_{i}},\mathcal{E}_{N}^{(i)}\right)&=H\left(y_{1}^{N} u_{1}^{i-1}\right)-H\left(y_{1}^{N} u_{1}^{i}\right)+H\left(u_{i}\right)=I\left(W_{N}^{(i)}\right)
\end{aligned}
\end{equation}
which completes the proof.
\end{proof}

\section {Conclusion}
\label{5}
The core of this paper is to prove that there is a polarization phenomenon in quantum channels similar to classical channel polarization. To prove this, we first define BTPM, and show how to use BTPM to determine a set of operation elements of a quantum channel. Then we use BTPM to define QSC and QQSC, and prove that the MSLCI of two-dimensional-input QQSC is its symmetric coherent information, which was not proved before our work. After this, we introduce the quantum channel combining and splitting, and obtain the quantum combined channel $\mathcal{E}_{N}$ and coordinate channels $\{ \mathcal{E}_{N}^{(i)}\}$. It has been proved in Sect. \ref{3} that if the primal channel $\mathcal{E}$ is a two-dimensional-input QSC, then $\mathcal{E}_{N}$ is a two-dimensional-input QSC and $\{ \mathcal{E}_{N}^{(i)}\}$ are two-dimensional-input QQSCs. Based on the above work, we prove that the MSLCI of the coordinate channels will polarize – some of them tend to 1 while the others tend to 0 with the increase of $N$, and the ratio of the former to $N$ equals to the MSLCI of the primal channel $\mathcal{E}$, which completes the proof that there is a polarization phenomenon in quantum channels. 

However, whether we can make use of this polarization phenomenon of quantum channels to design a quantum error correcting code which can achieve the MSLCI of QSC is still unknow.

\backmatter

%
%
%
%
%
%

\section*{Declarations}

\begin{itemize}
\item Funding

Not applicable.
\item Competing interests

The authors have no competing interests to declare that are relevant to the content of this article.
\item Ethics approval 

Not applicable.
\item Consent to participate

Not applicable.
\item Consent for publication

Not applicable.
\item Availability of data and materials

Data sharing not applicable to this article as no datasets were generated or analysed during the current study.
\item Code availability 

Not applicable.
\item Authors' contributions

All authors conceived the work, analysed the results and wrote the manuscript.
\end{itemize}


\begin{appendices}
\section {A Particular Rule}
\label{A Particular Rule}

Before proving Theorem \ref{the quantum coordinate channel is QQSC}, we make a particular rule which will be used in the second step of the proof.

This rule is used to label the operator elements of a channel through a one-to-one relationship between operator elements and output states. First, we fixed the input state $\ket{Q_{1}^{N}}$ of the quantum combined channel $\mathcal{E}_{N}$ to $\ket{0_{1}^{N}}$, then arbitrary operator element $F_{k} \in\left\{F_{k}\right\}_{k=0, \cdots, 2^{N}-1}$ of the $N$-copy channel $\mathcal{E}^{\otimes N}$ uniquely corresponds to a output state $\ket{V_{1}^{N}}$, $V_{1}^{N}\in \mathcal{Y}^{N}$, namely,
\begin{equation}
\label{regulation 1}
F_{k}\ket{0_{1}^{N}G_{N}}=\sqrt{Pr_{N}\left(\ket{V_{1}^{N}}\lvert \ket{0_{1}^{N}}\right)}\ket{V_{1}^{N}}
\end{equation}
By Definition \ref{N-copy channel}, we have
\begin{equation}
\label{Fk}
F_{k}=E_{b_{1}}^{1} \otimes E_{b_{2}}^{2} \otimes \cdots \otimes E_{b_{N}}^{N}
\end{equation}
the subscript $k$ of $F_{k}$ is the decimal number of the binary sequence $b_{1} b_{2} \cdots b_{N}$.

To further understanding this rule, we take $2$-copy channel $\mathcal{E}^{\otimes 2}$ for example, and primal channel $\mathcal{E}$ is Bit flip channel whose operator elements are $\{ E_{0}=\sqrt{p}X,E_{1}=\sqrt{1-p}I\}$. It's easy to obtain that four operator elements of $\mathcal{E}^{\otimes 2}$ are $F_{0}=pX^{1}\otimes X^{2}$, $F_{1}=\sqrt{p(1-p)}X^{1}\otimes I^{2}$, $F_{2}=\sqrt{p(1-p)}I^{1}\otimes X^{2}$ and $F_{3}=(1-p)I^{1}\otimes I^{2}$, respectively. Assume that the input state of primal channel $\mathcal{E}$ will only take value from $\ket{0}=\left(\begin{array}{l}1 \\ 0\end{array}\right)$ or $\ket{1}=\left(\begin{array}{l}0 \\ 1\end{array}\right)$. Then the input space $\{\ket{Q_{1}^{2}}\}$ of the quantum combined channel $\mathcal{E}_{2}$ must be $\{\ket{Q_{1}^{2}}\}=\{\ket{00},\ket{01},\ket{10},\ket{11}\}$, and the output space $\{\ket{V_{1}^{2}}\}$ of the quantum combined channel $\mathcal{E}_{2}$ must be $\{\ket{V_{1}^{2}}\}=\{\ket{00},\ket{01},\ket{10},\ket{11}\}$, which means different operator element $F_{k},\ 0\leq k \leq 3$, will map the input space $\{\ket{Q_{1}^{2}}\}$ to the same output space $\{\ket{V_{1}^{2}}\}$. Thus we fixed the input state to $\ket{00}$, and a one-to-one relationship between operator element $F_{k}(0\leq k\leq 3)$ and output state $\ket{V_{1}^{2}}$ of the channel $\mathcal{E}^{\otimes 2}$ is established, namely, $F_{0}$ corresponds to $\ket{11}$, $F_{1}$ corresponds to $\ket{10}$, $F_{2}$ corresponds to $\ket{01}$ and $F_{3}$ corresponds to $\ket{00}$.

By using Theorem \ref{theorem3-1:the quantum combined channel is a QSC} and Eq.  (\ref{regulation 1}), we have
\begin{equation}
\begin{aligned}
\label{regulation 2}
F_{k}\ket{Q_{1}^{N}G_{N}}=\sqrt{Pr_{N}\left( \ket{Q_{1}^{N}G_{N} \cdot V_{1}^{N}}\lvert \ket{Q_{1}^{N}}\right)}\ket{Q_{1}^{N}G_{N} \cdot V_{1}^{N}}
\end{aligned}
\end{equation}
for all $Q_{1}^{N}\in \mathcal{X}^{N}$ and $V_{1}^{N}\in \mathcal{Y}^{N}$.

\section {Proof of Theorem 7}
\label{Proof of Theorem 6}
In this section, we prove Theorem \ref{the quantum coordinate channel is QQSC} that the quantum coordinate channels $\{\mathcal{E}_{N}^{(i)}\}$ are QQSCs. At the second step of the proof, we use the particular rule that we make in Appendix \ref{A Particular Rule}.

\begin{proof}
In subsection \ref{2.5}, we define quantum coordinate channel $\mathcal{E}_{N}^{(i)}$, $1\leq i \leq N$, whose input is $\rho^{Q_{i}}$ and output is $\rho^{V_{1}^{N},R_{1}^{i-1}}$.

\textbf{1. The first step of the proof: obtain the general form of density operator $\rho^{V_{1}^{N},R_{1}^{i-1}}$ of quantum joint system $V_{1}^{N},R_{1}^{i-1}$.}

Assume that each input state $\rho^{Q_{i}}$ of the quantum combined channel $\mathcal{E}_{N}$ is $\rho^{Q_{i}}=q\ket{0}\bra{0}+(1-q)\ket{1}\bra{1}$. Then we have
\begin{equation}
\begin{aligned}
\rho^{Q_{1}^{N}}&=\rho^{Q_{1}} \otimes \cdots \otimes \rho^{Q_{N}}\\
&=(q\ket{0}\bra{0}+(1-q)\ket{1}\bra{1})^{\otimes N}\\
&=\sum_{Q_{1}^{N} \in \mathcal{X}^{N}}Pr\left(\ket{Q_{1}^{N}}\bra{Q_{1}^{N}}\right)\ket{Q_{1}^{N}}\bra{Q_{1}^{N}}
\end{aligned}
\end{equation}
where $Pr\left(\ket{Q_{1}^{N}}\bra{Q_{1}^{N}}\right)=\prod_{i=1}^{N}Pr(\ket{Q_{i}}\bra{Q_{i}})$, alphabet $\mathcal{X}=\{0,1\}$ and $\mathcal{X}^{N}$ is the N-power extension alphabet of $\mathcal{X}$. Introduce reference system $\rho^{R_{1}^{N}}=\rho^{R_{1}} \otimes \cdots \otimes \rho^{R_{N}}$ to purify $\rho^{Q_{i}^{N}}$, 
where $\rho^{R_{1}} = \cdots = \rho^{R_{N}}=\rho^{Q_{1}} = \cdots = \rho^{Q_{N}}$. We have
\begin{equation}
\ket{\varphi_{Q_{1}^{N},R_{1}^{N}}}=\sum_{Q_{1}^{N}=R_{1}^{N} \in \mathcal{X}^{N}} \sqrt{Pr\left(\ket{Q_{1}^{N}}\bra{Q_{1}^{N}}\right)}\ket{Q_{1}^{N},R_{1}^{N}}
\end{equation}
Then the density operator $\rho^{Q_{1}^{N},R_{1}^{N}}$ of the joint system $Q_{1}^{N},R_{1}^{N}$ is
\begin{equation}
\begin{aligned}
&\rho^{Q_{1}^{N},R_{1}^{N}}\\
&=\ket{\varphi_{Q_{1}^{N},R_{1}^{N}}}\bra{\varphi_{Q_{1}^{N},R_{1}^{N}}}\\
&=\sum_{\substack{Q_{1}^{N}=R_{1}^{N} \in \mathcal{X}^{N}\\ \tilde{Q}_{1}^{N}=\tilde{R}_{1}^{N} \in \mathcal{X}^{N}}} \sqrt{Pr\left(\ket{Q_{1}^{N}}\bra{Q_{1}^{N}}\right)}\ket{Q_{1}^{N},R_{1}^{N}} \sqrt{Pr\left(\ket{\tilde{Q}_{1}^{N}}\bra{\tilde{Q}_{1}^{N}}\right)}\bra{\tilde{Q}_{1}^{N},\tilde{R}_{1}^{N}}
\end{aligned}
\end{equation}
We use a unitary operator $U_N$ which only acts on system $Q_{1}^{N}$ to represent the encoding process $\ket{Q_{1}^{N}}\rightarrow\ket{C_{1}^{N}}$, and we have
\begin{equation}
\begin{aligned}
&\rho^{C_{1}^{N},R_{1}^{N}}\\
&=U_{N} \rho^{Q_{1}^{N} R_{1}^{N}} U_{N}^{\dagger}\\
&=U_{N} \left(\sum_{\substack{Q_{1}^{N}=R_{1}^{N} \in \mathcal{X}^{N}\\ \tilde{Q}_{1}^{N}=\tilde{R}_{1}^{N} \in \mathcal{X}^{N}}} \sqrt{Pr\left(\ket{Q_{1}^{N}}\bra{Q_{1}^{N}}\right)}\ket{Q_{1}^{N},R_{1}^{N}} \sqrt{Pr\left(\ket{\tilde{Q}_{1}^{N}}\bra{\tilde{Q}_{1}^{N}}\right)}\bra{\tilde{Q}_{1}^{N},\tilde{R}_{1}^{N}}\right)U_{N}^{\dagger}\\
&=\sum_{\substack{Q_{1}^{N}=R_{1}^{N} \in \mathcal{X}^{N}\\\tilde{Q}_{1}^{N}=\tilde{R}_{1}^{N} \in \mathcal{X}^{N}}} \sqrt{Pr\left(\ket{Q_{1}^{N}}\bra{Q_{1}^{N}}\right)}\ket{Q_{1}^{N}G_{N},R_{1}^{N}} \sqrt{Pr\left(\ket{\tilde{Q}_{1}^{N}}\bra{\tilde{Q}_{1}^{N}}\right)}\bra{\tilde{Q}_{1}^{N}G_{N},\tilde{R}_{1}^{N}}\\
&=\sum_{R_{1}^{N} \in \mathcal{X}^{N}} \sqrt{Pr\left(\ket{Q_{1}^{N}}\bra{Q_{1}^{N}}\right)}\ket{C_{1}^{N},R_{1}^{N}}\sum_{\tilde{R}_{1}^{N} \in \mathcal{X}^{N}} \sqrt{Pr\left(\ket{\tilde{Q}_{1}^{N}}\bra{\tilde{Q}_{1}^{N}}\right)}\bra{\tilde{C}_{1}^{N},\tilde{R}_{1}^{N}}
\end{aligned}
\end{equation}
where $C_{1}^{N}=Q_{1}^{N} G_{N}$, $\tilde{C}_{1}^{N}=\tilde{Q}_{1}^{N} G_{N}$ and $G_{N}$ is generator matrix.

The channel $\mathcal{E}^{\otimes N}$, whose operator elements are $\{ F_{k}\}_{k=0, \cdots, 2^{N}-1}$, follows the encoding process $\ket{Q_{1}^{N}}\rightarrow\ket{C_{1}^{N}}$. Then the density operator $\rho^{V_{1}^{N},R_{1}^{N}}$ of the output of the channel $\mathcal{E}^{\otimes N}$ is
\begin{equation}
\begin{aligned}
\rho^{V_{1}^{N},R_{1}^{N}}&=\sum_{k=0}^{2^{N}-1}F_{k} \rho^{C_{1}^{N},R_{1}^{N}} F_{k}^{\dagger} \\
&=\sum_{k=0}^{2^{N}-1}F_{k}\sum_{Q_{1}^{N}=R_{1}^{N} \in \mathcal{X}^{N}} \sqrt{Pr\left(\ket{Q_{1}^{N}}\bra{Q_{1}^{N}}\right)}\ket{Q_{1}^{N}G_{N},R_{1}^{N}}\\
&\times \sum_{\tilde{Q}_{1}^{N}=\tilde{R}_{1}^{N} \in \mathcal{X}^{N}} \sqrt{Pr\left(\ket{\tilde{Q}_{1}^{N}}\bra{\tilde{Q}_{1}^{N}}\right)}\bra{\tilde{Q}_{1}^{N}G_{N},\tilde{R}_{1}^{N}} F_{k}^{\dagger}
\end{aligned}
\end{equation}

Notice that the channel $\mathcal{E}^{\otimes N}$ is the last layer of the channel $\mathcal{E}_{N}$, so the density operator $\rho^{V_{1}^{N},R_{1}^{N}}$ is also the output of the channel $\mathcal{E}_{N}$. Then we perform partial trace over the system
$R_{i}^{N}$ and obtain
\begin{equation}
\begin{aligned}
\begin{split}
\label{rho of V1N,R1i-1}
&\rho^{V_{1}^{N},R_{1}^{N}}\\
&=tr_{R_{i}^{N}}\left[ \sum_{k=0}^{2^{N}-1}F_{k}\sum_{Q_{1}^{N}=R_{1}^{N} \in \mathcal{X}^{N}} \sqrt{Pr\left(\ket{Q_{1}^{N}}\bra{Q_{1}^{N}}\right)}\ket{Q_{1}^{N}G_{N},R_{1}^{N}}\right. \\
&\times \left.\sum_{\tilde{Q}_{1}^{N}=\tilde{R}_{1}^{N} \in \mathcal{X}^{N}} \sqrt{Pr\left(\ket{\tilde{Q}_{1}^{N}}\bra{\tilde{Q}_{1}^{N}}\right)}\bra{\tilde{Q}_{1}^{N}G_{N},\tilde{R}_{1}^{N}} F_{k}^{\dagger} \right]\\
&=tr_{R_{i}^{N}}\left[ \sum_{k=0}^{2^{N}-1}F_{k}\sum_{Q_{i}^{N}=R_{i}^{N}\in\mathcal{X}^{N-i+1}}\right.\\
&\times \left. \sum_{Q_{1}^{i-1}=R_{1}^{i-1} \in \mathcal{X}^{i-1}} \sqrt{Pr\left(\ket{Q_{1}^{i-1}}\bra{Q_{1}^{i-1}}\right)Pr\left(\ket{Q_{i}^{N}}\bra{Q_{i}^{N}}\right)}\ket{Q_{1}^{N}G_{N},R_{1}^{i-1},R_{i}^{N}}\right.\\
&\times \left.\sum_{\tilde{Q}_{1}^{i-1}=\tilde{R}_{1}^{i-1} \in \mathcal{X}^{i-1}} \sqrt{Pr\left(\ket{\tilde{Q}_{1}^{i-1}}\bra{\tilde{Q}_{1}^{i-1}}\right)Pr\left(\ket{Q_{i}^{N}}\bra{Q_{i}^{N}}\right)}\bra{\tilde{Q}_{1}^{N}G_{N},\tilde{R}_{1}^{i-1},R_{i}^{N}} F_{k}^{\dagger}\right]\\
&=\sum_{k=0}^{2^{N}-1}F_{k}\left[\sum_{Q_{i}^{N}\in\mathcal{X}^{N-i+1}}Pr\left(\ket{Q_{i}^{N}}\bra{Q_{i}^{N}}\right)\right.\\
&\times \left.\sum_{Q_{1}^{i-1}=R_{1}^{i-1} \in \mathcal{X}^{i-1}} \sqrt{Pr\left(\ket{Q_{1}^{i-1}}\bra{Q_{1}^{i-1}}\right)}\ket{Q_{1}^{N}G_{N},R_{1}^{i-1}}\right.\\
&\left.\times \sum_{\tilde{Q}_{1}^{i-1}=\tilde{R}_{1}^{i-1} \in \mathcal{X}^{i-1}} \sqrt{Pr\left(\ket{\tilde{Q}_{1}^{i-1}}\bra{\tilde{Q}_{1}^{i-1}}\right)}\bra{\tilde{Q}_{1}^{N}G_{N},\tilde{R}_{1}^{i-1}}\right] F_{k}^{\dagger}
\end{split}
\end{aligned}
\end{equation}
Eq.  (\ref{rho of V1N,R1i-1}) guarantees that
\begin{equation}
\begin{aligned}
\sum_{Q_{1}^{i-1}=R_{1}^{i-1} \in \mathcal{X}^{i-1}} \sqrt{Pr\left(\ket{Q_{1}^{i-1}}\bra{Q_{1}^{i-1}}\right)}\ket{Q_{1}^{N}G_{N},R_{1}^{i-1}}
\end{aligned}
\end{equation}
must be a unit vector, since it is easy to verify $\sum_{Q_{1}^{i-1}\in\mathcal{X}^{i-1}}Pr\left(\ket{Q_{1}^{i-1}}\bra{Q_{1}^{i-1}}\right)=1$. Divide the Eq.  (\ref{rho of V1N,R1i-1}) into two parts: $Q_{i}=0$ and $Q_{i}=1$, we have
\begin{equation}
\rho^{V_{1}^{N},R_{1}^{i-1}}=\rho_{V_{1}^{N},R_{1}^{i-1}}^{(0)}+\rho_{V_{1}^{N},R_{1}^{i-1}}^{(1)}
\end{equation}
where
\begin{equation}
\begin{aligned}
\begin{split}
\rho_{V_{1}^{N},R_{1}^{i-1}}^{(0)}&=q\sum_{k=0}^{2^{N}-1}F_{k}\left[\sum_{Q_{i+1}^{N}\in\mathcal{X}^{N-i}}Pr\left(\ket{Q_{i+1}^{N}}\bra{Q_{i+1}^{N}}\right)\right.\\
&\times \left.\sum_{Q_{1}^{i-1}=R_{1}^{i-1} \in \mathcal{X}^{i-1}} \sqrt{Pr\left(\ket{Q_{1}^{i-1}}\bra{Q_{1}^{i-1}}\right)}\ket{(Q_{1}^{i-1},0,Q_{i+1}^{N})G_{N},R_{1}^{i-1}}\right.\\
&\times \left. \sum_{\tilde{Q}_{1}^{i-1}=\tilde{R}_{1}^{i-1} \in \mathcal{X}^{i-1}} \sqrt{Pr\left(\ket{\tilde{Q}_{1}^{i-1}}\bra{\tilde{Q}_{1}^{i-1}}\right)}\bra{(\tilde{Q}_{1}^{i-1},0,Q_{i+1}^{N})G_{N},\tilde{R}_{1}^{i-1}}\right] F_{k}^{\dagger}
\end{split}
\end{aligned}
\end{equation}
and
\begin{equation}
\begin{aligned}
\begin{split}
\rho_{V_{1}^{N},R_{1}^{i-1}}^{(1)}&=(1-q)\sum_{k=0}^{2^{N}-1}F_{k}\left[\sum_{Q_{i+1}^{N}\in\mathcal{X}^{N-i}}Pr\left(\ket{Q_{i+1}^{N}}\bra{Q_{i+1}^{N}}\right)\right.\\
&\times \left.\sum_{Q_{1}^{i-1}=R_{1}^{i-1} \in \mathcal{X}^{i-1}} \sqrt{Pr\left(\ket{Q_{1}^{i-1}}\bra{Q_{1}^{i-1}}\right)}\ket{(Q_{1}^{i-1},1,Q_{i+1}^{N})G_{N},R_{1}^{i-1}}\right.\\
&\times \left. \sum_{\tilde{Q}_{1}^{i-1}=\tilde{R}_{1}^{i-1} \in \mathcal{X}^{i-1}} \sqrt{Pr\left(\ket{\tilde{Q}_{1}^{i-1}}\bra{\tilde{Q}_{1}^{i-1}}\right)}\bra{(\tilde{Q}_{1}^{i-1},1,Q_{i+1}^{N})G_{N},\tilde{R}_{1}^{i-1}}\right] F_{k}^{\dagger}
\end{split}
\end{aligned}
\end{equation}

For $\rho_{V_{1}^{N},R_{1}^{i-1}}^{(0)}$ and $\rho_{V_{1}^{N},R_{1}^{i-1}}^{(1)}$, we exchange summation order, and obtain
\begin{equation}
\begin{aligned}
\begin{split}
\label{rho0}
&\rho_{V_{1}^{N},R_{1}^{i-1}}^{(0)}\\
&=q\sum_{Q_{i+1}^{N}\in\mathcal{X}^{N-i}}Pr\left(\ket{Q_{i+1}^{N}}\bra{Q_{i+1}^{N}}\right)\\
&\times \sum_{k=0}^{2^{N}-1}F_{k}\left[\sum_{Q_{1}^{i-1}=R_{1}^{i-1} \in \mathcal{X}^{i-1}} \sqrt{Pr\left(\ket{Q_{1}^{i-1}}\bra{Q_{1}^{i-1}}\right)} \ket{(Q_{1}^{i-1},0,Q_{i+1}^{N})G_{N},R_{1}^{i-1}}\right.\\
&\left.\times \sum_{\tilde{Q}_{1}^{i-1}=\tilde{R}_{1}^{i-1} \in \mathcal{X}^{i-1}} \sqrt{Pr\left(\ket{\tilde{Q}_{1}^{i-1}}\bra{\tilde{Q}_{1}^{i-1}}\right)}\bra{(\tilde{Q}_{1}^{i-1},0,Q_{i+1}^{N})G_{N},\tilde{R}_{1}^{i-1}}\right] F_{k}^{\dagger}
\end{split}
\end{aligned}
\end{equation}
and
\begin{equation}
\begin{aligned}
\begin{split}
\label{rho1}
&\rho_{V_{1}^{N},R_{1}^{i-1}}^{(1)}\\
&=(1-q)\sum_{Q_{i+1}^{N}\in\mathcal{X}^{N-i}}Pr\left(\ket{Q_{i+1}^{N}}\bra{Q_{i+1}^{N}}\right)\\
&\times \sum_{k=0}^{2^{N}-1}F_{k}\left[\sum_{Q_{1}^{i-1}=R_{1}^{i-1} \in \mathcal{X}^{i-1}} \sqrt{Pr\left(\ket{Q_{1}^{i-1}}\bra{Q_{1}^{i-1}}\right)} \ket{(Q_{1}^{i-1},1,Q_{i+1}^{N})G_{N},R_{1}^{i-1}}\right.\\
&\left.\times \sum_{\tilde{Q}_{1}^{i-1}=\tilde{R}_{1}^{i-1} \in \mathcal{X}^{i-1}} \sqrt{Pr\left(\ket{\tilde{Q}_{1}^{i-1}}\bra{\tilde{Q}_{1}^{i-1}}\right)} \bra{(\tilde{Q}_{1}^{i-1},1,Q_{i+1}^{N})G_{N},\tilde{R}_{1}^{i-1}}\right] F_{k}^{\dagger}
\end{split}
\end{aligned}
\end{equation}

\textbf{2. The second step of the proof: prove that the density operator $\rho^{V_{1}^{N},R_{1}^{i-1}}$ can be diagonalized with respect to a set of basis $\{ \ket{m^{'}}\}_{m=0,\cdots,2^{N}-1}$}.

We will prove that density operators $\rho_{V_{1}^{N},R_{1}^{i-1}}^{(0)}$ and $\rho_{V_{1}^{N},R_{1}^{i-1}}^{(1)}$ can be diagonalized with respect to a same set of basis $\{ \ket{m^{'}}\}_{m=0,\cdots,2^{N}-1}$, namely,
\begin{equation}
\rho_{V_{1}^{N},R_{1}^{i-1}}^{(0)}=q\sum_{m=0}^{2^{N}-1}Pr_{N}^{(i)}\left(\ket{m^{'}}\lvert \ket{0}\right)\ket{m^{'}}\bra{m^{'}}
\end{equation}
\begin{equation}
\label{rho1}
\rho_{V_{1}^{N},R_{1}^{i-1}}^{(1)}=(1-q)\sum_{m=0}^{2^{N}-1}Pr_{N}^{(i)}\left(\ket{m^{'}}\lvert \ket{1}\right)\ket{m^{'}}\bra{m^{'}}
\end{equation}

We consider $\rho_{V_{1}^{N},R_{1}^{i-1}}^{(0)}$ only, since the proof method of $\rho_{V_{1}^{N},R_{1}^{i-1}}^{(1)}$ is the same as that of $\rho_{V_{1}^{N},R_{1}^{i-1}}^{(0)}$.
We first prove that the vector $\ket{m^{'}}$ can be written as
\begin{equation}
\begin{aligned}
\label{m'}
\ket{m^{'}}&=\sum_{Q_{1}^{i-1}=R_{1}^{i-1} \in \mathcal{X}^{i-1}} \sqrt{Pr\left(\ket{Q_{1}^{i-1}}\bra{Q_{1}^{i-1}}\right)}\ket{(Q_{1}^{i-1},0,0_{i+1}^{N})G_{N}\cdot V_{1}^{N},R_{1}^{i-1}}
\end{aligned}
\end{equation}

Since for all $Q_{i+1}^{N}\in \mathcal{X}^{N-i}$, operation elements  $\{F_{k}\}_{k=0,\cdots,2^{N}-1}$ will map vector
\begin{equation}
\begin{aligned}
\sum_{Q_{1}^{i-1}=R_{1}^{i-1}  \in \mathcal{X}^{i-1}} \sqrt{Pr\left(\ket{Q_{1}^{i-1}}\bra{Q_{1}^{i-1}}\right)}\ket{(Q_{1}^{i-1},0,Q_{i+1}^{N})G_{N},R_{1}^{i-1}}
\end{aligned}
\end{equation}
to a same set of orthogonal basis $\{ \ket{m^{'}}\}_{m=0,\cdots,2^{N}-1}$, which contains $2^{N}$ basis vectors. Thus, without losing generality, we can let $Q_{i+1}^{N}=0_{i+1}^{N}$. Using Eq.  (\ref{regulation 1}), Eq.  (\ref{regulation 2}) and Theorem \ref{theorem3-1:the quantum combined channel is a QSC}, we have
\begin{equation}
\begin{aligned}
\label{proof of m'}
&F_{k}\sum_{Q_{1}^{i-1}=R_{1}^{i-1} \in \mathcal{X}^{i-1}} \sqrt{Pr\left(\ket{Q_{1}^{i-1}}\bra{Q_{1}^{i-1}}\right)}\ket{(Q_{1}^{i-1},0,Q_{i+1}^{N})G_{N},R_{1}^{i-1}}\\
&=\sum_{Q_{1}^{i-1}=R_{1}^{i-1} \in \mathcal{X}^{i-1}}\sqrt{Pr_{N}\left(\ket{\left(Q_{1}^{i-1},0,0_{i+1}^{N}\right)G_{N}\cdot V_{1}^{N}}\lvert \ket{Q_{1}^{i-1},0,0_{i+1}^{N}}\right)} \\
&\times\sqrt{Pr\left(\ket{Q_{1}^{i-1}}\bra{Q_{1}^{i-1}}\right)}\ket{(Q_{1}^{i-1},0,0_{i+1}^{N})G_{N}\cdot V_{1}^{N},R_{1}^{i-1}}\\
&=\sqrt{Pr_{N}\left(\ket{V_{1}^{N}}\lvert \ket{0_{1}^{N}}\right)}\\
&\times\sum_{Q_{1}^{i-1}=R_{1}^{i-1} \in \mathcal{X}^{i-1}} \sqrt{Pr\left(\ket{Q_{1}^{i-1}}\bra{Q_{1}^{i-1}}\right)}\ket{(Q_{1}^{i-1},0,0_{i+1}^{N})G_{N}\cdot V_{1}^{N},R_{1}^{i-1}}\\
&=\sqrt{Pr_{N}\left(\ket{V_{1}^{N}}\lvert \ket{0_{1}^{N}}\right)}\ket{m^{'}}
\end{aligned}
\end{equation}
which proves the Eq.  (\ref{m'}).

Observe Eq.  (\ref{proof of m'}), there is a one-to-noe relationship between $F_{k}$ and $V_{1}^{N}$, thus sum over all $F_{k}$ is sum over all $V_{1}^{N}$ and Eq.  (\ref{rho0}) can be rewritten as
\begin{equation}
\begin{aligned}
\begin{split}
\label{new rho0}
&\rho_{V_{1}^{N},R_{1}^{i-1}}^{(0)}=q\sum_{Q_{i+1}^{N}\in\mathcal{X}^{N-i}}Pr\left(\ket{Q_{i+1}^{N}}\bra{Q_{i+1}^{N}}\right)\sum_{V_{1}^{N}\in \mathcal{Y}^{N}}\\
&\times \sum_{Q_{1}^{i-1}=R_{1}^{i-1} \in \mathcal{X}^{i-1}} Pr_{N}\left(\ket{\left(Q_{1}^{i-1},0,0_{i+1}^{N}\right)G_{N}\cdot V_{1}^{N}}\lvert \ket{Q_{1}^{i-1},0,Q_{i+1}^{N}}\right)\\
&\times \sqrt{Pr\left(\ket{Q_{1}^{i-1}}\bra{Q_{1}^{i-1}}\right)}\ket{(Q_{1}^{i-1},0,0_{i+1}^{N})G_{N}\cdot V_{1}^{N},R_{1}^{i-1}}\\
&\times \sum_{ \tilde{Q}_{1}^{i-1} =\tilde{R}_{1}^{i-1} \in \mathcal{X}^{i-1}} \sqrt{Pr\left(\ket{\tilde{Q}_{1}^{i-1}}\bra{\tilde{Q}_{1}^{i-1}}\right)}\bra{(\tilde{Q}_{1}^{i-1},0,0_{i+1}^{N})G_{N}\cdot V_{1}^{N},\tilde{R}_{1}^{i-1}}\\
&=q\sum_{Q_{i+1}^{N}\in\mathcal{X}^{N-i}}Pr\left(\ket{Q_{i+1}^{N}}\bra{Q_{i+1}^{N}}\right)\sum_{V_{1}^{N}\in \mathcal{Y}^{N}}Pr_{N}\left(\ket{V_{1}^{N}}\lvert \ket{0_{1}^{i-1},0,Q_{i+1}^{N}}\right)\\&\times \sum_{Q_{1}^{i-1}=R_{1}^{i-1} \in \mathcal{X}^{i-1}}\sqrt{Pr\left(\ket{Q_{1}^{i-1}}\bra{Q_{1}^{i-1}}\right)}\ket{(Q_{1}^{i-1},0,0_{i+1}^{N})G_{N}\cdot V_{1}^{N},R_{1}^{i-1}}\\
&\times \sum_{\tilde{Q}_{1}^{i-1}=\tilde{R}_{1}^{i-1} \in \mathcal{X}^{i-1}} \sqrt{Pr\left(\ket{\tilde{Q}_{1}^{i-1}}\bra{\tilde{Q}_{1}^{i-1}}\right)}\bra{(\tilde{Q}_{1}^{i-1},0,0_{i+1}^{N})G_{N}\cdot V_{1}^{N},\tilde{R}_{1}^{i-1}}
\end{split}
\end{aligned}
\end{equation}

Here we use the fact that $Pr_{N}\left(\ket{V_{1}^{N}}\lvert \ket{Q_{1}^{N}}\right)=Pr_{N}\left(\ket{a_{1}^{N} G_{N} \cdot V_{1}^{N}}\lvert \ket{Q_{1}^{N} \oplus a_{1}^{N}}\right)$ which is according to Theorem \ref{theorem3-1:the quantum combined channel is a QSC}, so let $a_{1}^{N}=Q_{1}^{i-1},0,0_{i+1}^{N}$, we have
\begin{equation}
\begin{aligned}
\label{theorem 3.1}
&Pr_{N}\left(\ket{V_{1}^{N}}\lvert \ket{0_{1}^{i-1},0,Q_{i+1}^{N}}\right)=Pr_{N}\left(\ket{\left(Q_{1}^{i-1},0,0_{i+1}^{N}\right)G_{N}\cdot V_{1}^{N}}\lvert \ket{Q_{1}^{i-1},0,Q_{i+1}^{N}}\right)
\end{aligned}
\end{equation}

For Eq.  (\ref{new rho0}), we exchange summation order and obtain
\begin{equation}
\begin{aligned}
\begin{split}
&\rho_{V_{1}^{N},R_{1}^{i-1}}^{(0)}\\
&=q\sum_{V_{1}^{N}\in\mathcal{Y}^{N}}\left[\sum_{Q_{i+1}^{N}\in\mathcal{X}^{N-i}}Pr\left(\ket{Q_{i+1}^{N}}\bra{Q_{i+1}^{N}}\right)Pr_{N}\left(\ket{V_{1}^{N}}\lvert \ket{0_{1}^{i-1},0,Q_{i+1}^{N}}\right)\right. \\
&\times \left. \sum_{Q_{1}^{i-1}=R_{1}^{i-1} \in \mathcal{X}^{i-1}}\sqrt{Pr\left(\ket{Q_{1}^{i-1}}\bra{Q_{1}^{i-1}}\right)} \ket{(Q_{1}^{i-1},0,0_{i+1}^{N})G_{N}\cdot V_{1}^{N},R_{1}^{i-1}}\right. \\
&\times \left.\sum_{\tilde{Q}_{1}^{i-1} =\tilde{R}_{1}^{i-1} \in \mathcal{X}^{i-1}} \sqrt{Pr\left(\ket{\tilde{Q}_{1}^{i-1}}\bra{\tilde{Q}_{1}^{i-1}}\right)} \bra{(\tilde{Q}_{1}^{i-1},0,0_{i+1}^{N})G_{N}\cdot V_{1}^{N},\tilde{R}_{1}^{i-1}}\right]\\
&=q\sum_{m=0}^{2^{N}-1}Pr_{N}^{(i)}\left(\ket{m^{'}}\lvert \ket{0}\right)\ket{m^{'}}\bra{m^{'}}
\end{split}
\end{aligned}
\end{equation}
where
\begin{equation}
\begin{aligned}
\ket{m^{'}}=\sum_{Q_{1}^{i-1}=R_{1}^{i-1} \in \mathcal{X}^{i-1}} \sqrt{Pr\left(\ket{Q_{1}^{i-1}}\bra{Q_{1}^{i-1}}\right)}\ket{(Q_{1}^{i-1},0,0_{i+1}^{N})G_{N}\cdot V_{1}^{N},R_{1}^{i-1}}
\end{aligned}
\end{equation}
and
\begin{equation}
\begin{aligned}
\label{PrNi0}
Pr_{N}^{(i)}\left(\ket{m^{'}}\lvert \ket{0}\right)=\sum_{Q_{i+1}^{N}\in \mathcal{X}^{N-i}}Pr\left(\ket{Q_{i+1}^{N}}\bra{Q_{i+1}^{N}}\right)Pr_{N}\left(\ket{V_{1}^{N}}\lvert \ket{0_{1}^{i-1},0,Q_{i+1}^{N}}\right)
\end{aligned}
\end{equation}

$Pr_{N}^{(i)}\left(\ket{m^{'}}\lvert \ket{0}\right)$ is the transition probability which means the probability of the input state $\ket{0}\bra{0}$ changing into $\ket{m^{'}}\bra{m^{'}}$. Using Eq.  (\ref{theorem 3.1}), then Eq.  (\ref{PrNi0}) can be rewritten as
\begin{equation}
\begin{aligned}
Pr_{N}^{(i)}\left(\ket{m^{'}}\lvert \ket{0}\right)&=\sum_{Q_{1}^{i-1}\in\mathcal{X}^{i-1}}\frac{1}{2^{i-1}}\sum_{Q_{i+1}^{N}\in \mathcal{X}^{N-i}}Pr\left(\ket{Q_{i+1}^{N}}\bra{Q_{i+1}^{N}}\right)\\
&\times Pr_{N}\left(\ket{\left(Q_{1}^{i-1},0,0_{i+1}^{N}\right)G_{N}\cdot V_{1}^{N}}\lvert \ket{Q_{1}^{i-1},0,Q_{i+1}^{N}}\right)
\end{aligned}
\end{equation}

Using the same method, Eq.  (\ref{rho1}) can be easily proved, and we have
\begin{equation}
\begin{aligned}
\label{PrNi1}
Pr_{N}^{(i)}\left(\ket{m^{'}}\lvert \ket{1}\right)&=\sum_{Q_{1}^{i-1}\in\mathcal{X}^{i-1}}\frac{1}{2^{i-1}}\sum_{Q_{i+1}^{N}\in \mathcal{X}^{N-i}}Pr\left(\ket{Q_{i+1}^{N}}\bra{Q_{i+1}^{N}}\right)\\
&\times Pr_{N}\left(\ket{\left(Q_{1}^{i-1},0,0_{i+1}^{N}\right)G_{N}\cdot V_{1}^{N}}\lvert\ket{Q_{1}^{i-1},1,Q_{i+1}^{N}}\right)\\
&=\sum_{Q_{i+1}^{N}\in \mathcal{X}^{N-i}}Pr\left(\ket{Q_{i+1}^{N}}\bra{Q_{i+1}^{N}}\right)Pr_{N}\left(\ket{V_{1}^{N}}\lvert\ket{0_{1}^{i-1},1,Q_{i+1}^{N}}\right)
\end{aligned}
\end{equation}

Thus, the basis transition probabilities can be uniformly expressed as
\begin{equation}
\begin{aligned}
\label{PrNi}
Pr_{N}^{(i)}\left(\ket{m^{'}}\lvert \ket{Q_{i}}\right)&=\sum_{Q_{1}^{i-1}\in\mathcal{X}^{i-1}}\frac{1}{2^{i-1}}\sum_{Q_{i+1}^{N}\in \mathcal{X}^{N-i}}Pr\left(\ket{Q_{i+1}^{N}}\bra{Q_{i+1}^{N}}\right)\\
&\times Pr_{N}\left(\ket{\left(Q_{1}^{i-1},0,0_{i+1}^{N}\right)G_{N}\cdot V_{1}^{N}}\lvert\ket{Q_{1}^{i-1},Q_{i},Q_{i+1}^{N}}\right)\\
&=\sum_{Q_{i+1}^{N} \in \mathcal{X}^{N-i}}Pr\left(\ket{Q_{i+1}^{N}}\bra{Q_{i+1}^{N}}\right)Pr_{N}\left(\ket{V_{1}^{N}}\lvert\ket{0_{1}^{i-1},Q_{i},Q_{i+1}^{N}}\right)
\end{aligned}
\end{equation}

\textbf{3. The third step of the proof: use Arikan’s method to prove the basis transition probability matrix is symmetric.}

Next, we will prove that the basis transition probability matrix is symmetric. We will refer to the proof method which Arikan used to prove that classical coordinate channels $\{W_{N}^{(i)}\}$ are symmetric if the primal binary-input discrete memoryless channel $W$ is symmetric.

By Theorem \ref{theorem3-1:the quantum combined channel is a QSC}, we have 
\begin{equation}
\begin{aligned}
&Pr_{N}\left(\ket{\left(Q_{1}^{i-1},0,0_{i+1}^{N}\right)G_{N}\cdot V_{1}^{N}}\lvert\ket{Q_{1}^{i-1},Q_{i},Q_{i+1}^{N}}\right)\\
&=Pr_{N}\left(\ket{(a_{1}^{i-1},1,a_{i+1}^{N})G_{N}\cdot(Q_{1}^{i-1},0,0_{i+1}^{N})G_{N}\cdot V_{1}^{N}}\lvert \right.\\
&\left. \ket{(Q_{1}^{i-1},Q_{i},Q_{i+1}^{N})\oplus (a_{1}^{i-1},1,a_{i+1}^{N})}\right)
\end{aligned}
\end{equation}
for arbitrary $(a_{1}^{i-1},1,a_{i+1}^{N})\in \mathcal{X}^{N}$, thus the Eq.  (\ref{PrNi}) can be rewritten as
\begin{equation}
\begin{aligned}
\label{new PrNi}
&Pr_{N}^{(i)}\left(\ket{m^{'}}\lvert \ket{Q_{i}}\right)\\
&=\sum_{Q_{1}^{i-1}\in\mathcal{X}^{i-1}}\frac{1}{2^{i-1}}\sum_{Q_{i+1}^{N}\in \mathcal{X}^{N-i}}Pr\left(\ket{Q_{i+1}^{N}}\bra{Q_{i+1}^{N}}\right)\\
&\times Pr_{N}\left(\ket{(Q_{1}^{i-1},0,0_{i+1}^{N})G_{N}\cdot V_{1}^{N}}\lvert \ket{Q_{1}^{i-1},Q_{i},Q_{i+1}^{N}}\right)\\
&=\sum_{Q_{1}^{i-1}\in\mathcal{X}^{i-1}}\frac{1}{2^{i-1}}\sum_{Q_{i+1}^{N}\in \mathcal{X}^{N-i}}Pr\left(\ket{Q_{i+1}^{N}}\bra{Q_{i+1}^{N}}\right)\\
&\times Pr_{N}\left(\ket{(a_{1}^{i-1},1,a_{i+1}^{N})G_{N}\cdot(Q_{1}^{i-1},0,0_{i+1}^{N})G_{N}\cdot V_{1}^{N}}\lvert \right.\\
&\left.\ket{(Q_{1}^{i-1},Q_{i},Q_{i+1}^{N})\oplus (a_{1}^{i-1},1,a_{i+1}^{N})}\right)\\
&=\sum_{Q_{1}^{i-1}\in\mathcal{X}^{i-1}}\frac{1}{2^{i-1}}\sum_{Q_{i+1}^{N}\in \mathcal{X}^{N-i}}Pr\left(\ket{Q_{i+1}^{N}}\bra{Q_{i+1}^{N}}\right)\\
&\times Pr_{N}\left(\ket{(a_{1}^{i-1},1,a_{i+1}^{N}\oplus Q_{1}^{i-1},0,0_{i+1}^{N})G_{N}\cdot V_{1}^{N}}\lvert\right.\\
&\left. \ket{(Q_{1}^{i-1},Q_{i},Q_{i+1}^{N})\oplus (a_{1}^{i-1},1,a_{i+1}^{N})}\right)\\
&=Pr_{N}^{(i)}\left(\sum_{Q_{1}^{i-1} =R_{1}^{i-1}\in\mathcal{X}^{i-1}}\sqrt{Pr\left(\ket{Q_{1}^{i-1}}\bra{Q_{1}^{i-1}}\right)}\right. \\
&\times \left. \ket{(a_{1}^{i-1},1,a_{i+1}^{N}\oplus Q_{1}^{i-1},0,0_{i+1}^{N})G_{N}\cdot V_{1}^{N},R_{1}^{i-1}\oplus a_{1}^{i-1}}\lvert\ket{Q_{i}\oplus 1} \right)
\end{aligned}
\end{equation}

Substitute Eq.  (\ref{m'}) into $Pr_{N}^{(i)}\left(\ket{m^{'}}\lvert \ket{Q_{i}}\right)$, and connect with Eq.  (\ref{new PrNi}), we have
\begin{equation}
\begin{aligned}
\label{PrNi is symmetric}
&Pr_{N}^{(i)}\left(\ket{m^{'}}\lvert \ket{Q_{i}}\right)\\
&=Pr_{N}^{(i)}\left(\sum_{\substack{Q_{1}^{i-1}\\ =R_{1}^{i-1}\\ \in \mathcal{X}^{i-1}}} \sqrt{Pr\left(\ket{Q_{1}^{i-1}}\bra{Q_{1}^{i-1}}\right)}\ket{\left(Q_{1}^{i-1}, 0,0_{i+1}^{N}\right) G_{N} \cdot V_{1}^{N}, R_{1}^{i-1}}\lvert \ket{Q_{i}}\right)\\
&=Pr_{N}^{(i)}\left(\sum_{Q_{1}^{i-1}=R_{1}^{i-1}\in\mathcal{X}^{i-1}}\sqrt{Pr\left(\ket{Q_{1}^{i-1}}\bra{Q_{1}^{i-1}}\right)}\right. \\
&\times \left.\ket{(a_{1}^{i-1},1,a_{i+1}^{N}\oplus Q_{1}^{i-1},0,0_{i+1}^{N})G_{N}\cdot V_{1}^{N},R_{1}^{i-1}\oplus a_{1}^{i-1}}\lvert\ket{Q_{i}\oplus 1} \right)
\end{aligned}
\end{equation}

Here we take $a_{1}^{N}= (a_{1}^{i-1},1,a_{i+1}^{N})$, and the proof is completed. The Eq.  (\ref{PrNi is symmetric}) means arbitrary row of the BTPM of the quantum coordinate channel $\mathcal{E}_{N}^{(i)}$ is a permutation of another row.
\end{proof}

\section {Proof of Proposition 9}
\label{Proof of Proposition 3}
In this section, we prove Proposition \ref{BTMP can be derived from TPM} that we can derive $Pr_{N}^{(i)}\left(\ket{m}\lvert \ket{Q_{i}}\right)$ from the TPM of classical coordinate channels $W_{N}^{(i)}$.
\begin{proof}
According to Arikan’s theorem\cite{5075875}, the transition probabilities of classical coordinate channels $\{W_{N}^{(i)}\}$ are 
\begin{equation}
\begin{aligned}
&W_{N}^{(i)}\left(y_{1}^{N}, u_{1}^{i-1} \mid u_{i}\right)\\
&=\sum_{u_{i+1}^{N} \in \mathcal{X}^{N-i}} \frac{1}{2^{N-1}} W_{N}\left(y_{1}^{N}\lvert u_{1}^{N}\right)\\
&=\sum_{u_{i+1}^{N} \in \mathcal{X}^{N-i}} \frac{1}{2^{N-1}} W_{N}\left( a_{1}^{N}G_{N}\cdot y_{1}^{N}\lvert u_{1}^{N}\oplus a_{1}^{N}\right)\\
&=W_{N}^{(i)}\left(a_{1}^{N}G_{N}\cdot y_{1}^{N}, u_{1}^{i-1}\oplus a_{1}^{i-1}\lvert u_{i}\oplus a_{i}\right)
\end{aligned}
\end{equation}
and Arikan has proved that classical combined channel $W_{N}$ and classical coordinate channels $\{W_{N}^{(i)}\}$ are
all symmetric, which satisfies
\begin{equation}
\begin{aligned}
\label{WN is symmetric}
&W_{N}\left(y_{1}^{N} \mid 0_{1}^{i-1},u_{i},u_{i+1}^{N}\right)=W_{N}\left(\left(u_{1}^{i-1},0,0_{i+1}^{N}\right) G_{N} \cdot y_{1}^{N} \mid u_{1}^{i-1},u_{i},u_{i+1}^{N}\right)
\end{aligned}
\end{equation}
and
\begin{equation}
\begin{aligned}
\label{WNi is symmetric}
&W_{N}^{(i)}\left(y_{1}^{N}, 0_{1}^{i-1} \mid u_{i}\right)=W_{N}^{(i)}\left(\left(u_{1}^{i-1},0,0_{i+1}^{N}\right) G_{N} \cdot y_{1}^{N}, 0_{1}^{i-1} \oplus u_{1}^{i-1} \mid u_{i}\right)
\end{aligned}
\end{equation}
for all $u_{1}^{i-1}\in \mathcal{X}^{i-1}$.

Using Theorem \ref{theorem3-1:the quantum combined channel is a QSC}, Proposition \ref{BTPM and TPM are the same} and Eq (\ref{WN is symmetric}), we have
\begin{equation}
\begin{aligned}
\label{PrN and WN}
Pr_{N}\left(\ket{V_{1}^{N}}\lvert \ket{0_{1}^{i-1}, Q_{i}, Q_{i+1}^{N}}\right)&=W_{N}\left(y_{1}^{N} \mid 0_{1}^{i-1},u_{i},u_{i+1}^{N}\right)\\
&=W_{N}\left(\left(u_{1}^{i-1},0,0_{i+1}^{N}\right) G_{N} \cdot y_{1}^{N} \mid u_{1}^{i-1},u_{i},u_{i+1}^{N}\right)
\\
&=Pr_{N}\left(\ket{(Q_{1}^{i-1}, 0, 0_{i+1}^{N}) G_{N}\cdot V_{1}^{N}}\lvert \ket{Q_{1}^{i-1}, Q_{i}, Q_{i+1}^{N}}\right)
\end{aligned}
\end{equation}
for all $V_{1}^{N}=y_{1}^{N} \in \mathcal{Y}^{N}$ and $Q_{1}^{N}=u_{1}^{N} \in \mathcal{X}^{N}$.

Substitute Eq. (\ref{PrN and WN}) and Eq. (\ref{WNi is symmetric}) into Eq. (\ref{new BTPM of rho of VIN,R1i-1}), we have
\begin{equation}
\begin{aligned}
\label{PrNi and WNi}
Pr_{N}^{(i)}\left(\ket{m}\lvert \ket{Q_{i}}\right)&=\sum_{u_{1}^{i-1}\in \mathcal{X}^{i-1}}\frac{1}{2^{N-1}}\\
&\times \sum_{u_{i+1}^{N}\in \mathcal{X}^{N-i}}W_{N}\left( (u_{1}^{i-1},0,0_{i+1}^{N})G_{N}\cdot y_{1}^{N}\lvert u_{1}^{i-1},u_{i},u_{i+1}^{N}\right)\\
&=\frac{2^{i-1}}{2^{N-1}}\sum_{u_{i+1}^{N}\in \mathcal{X}^{N-i}}W_{N}\left(y_{1}^{N}\lvert 0_{1}^{i-1},u_{i},u_{i+1}^{N}\right)\\
&=2^{i-1}W_{N}^{(i)}\left(y_{1}^{N}, 0_{1}^{i-1}\lvert u_{i}\right)
\end{aligned}
\end{equation}
The Eq. (\ref{PrNi and WNi}) means we can derive $Pr_{N}^{(i)}\left(\ket{m}\lvert \ket{Q_{i}}\right)$ from the TPM of classical coordinate channels $W_{N}^{(i)}$, which completes the proof.
\end{proof}

\end{appendices}





\bibliography{sn-bibliography}


\begin{thebibliography}{67}
\ifx \bisbn   \undefined \def \bisbn  #1{ISBN #1}\fi
\ifx \binits  \undefined \def \binits#1{#1}\fi
\ifx \bauthor  \undefined \def \bauthor#1{#1}\fi
\ifx \batitle  \undefined \def \batitle#1{#1}\fi
\ifx \bjtitle  \undefined \def \bjtitle#1{#1}\fi
\ifx \bvolume  \undefined \def \bvolume#1{\textbf{#1}}\fi
\ifx \byear  \undefined \def \byear#1{#1}\fi
\ifx \bissue  \undefined \def \bissue#1{#1}\fi
\ifx \bfpage  \undefined \def \bfpage#1{#1}\fi
\ifx \blpage  \undefined \def \blpage #1{#1}\fi
\ifx \burl  \undefined \def \burl#1{\textsf{#1}}\fi
\ifx \doiurl  \undefined \def \doiurl#1{\url{https://doi.org/#1}}\fi
\ifx \betal  \undefined \def \betal{\textit{et al.}}\fi
\ifx \binstitute  \undefined \def \binstitute#1{#1}\fi
\ifx \binstitutionaled  \undefined \def \binstitutionaled#1{#1}\fi
\ifx \bctitle  \undefined \def \bctitle#1{#1}\fi
\ifx \beditor  \undefined \def \beditor#1{#1}\fi
\ifx \bpublisher  \undefined \def \bpublisher#1{#1}\fi
\ifx \bbtitle  \undefined \def \bbtitle#1{#1}\fi
\ifx \bedition  \undefined \def \bedition#1{#1}\fi
\ifx \bseriesno  \undefined \def \bseriesno#1{#1}\fi
\ifx \blocation  \undefined \def \blocation#1{#1}\fi
\ifx \bsertitle  \undefined \def \bsertitle#1{#1}\fi
\ifx \bsnm \undefined \def \bsnm#1{#1}\fi
\ifx \bsuffix \undefined \def \bsuffix#1{#1}\fi
\ifx \bparticle \undefined \def \bparticle#1{#1}\fi
\ifx \barticle \undefined \def \barticle#1{#1}\fi
\bibcommenthead
\ifx \bconfdate \undefined \def \bconfdate #1{#1}\fi
\ifx \botherref \undefined \def \botherref #1{#1}\fi
\ifx \url \undefined \def \url#1{\textsf{#1}}\fi
\ifx \bchapter \undefined \def \bchapter#1{#1}\fi
\ifx \bbook \undefined \def \bbook#1{#1}\fi
\ifx \bcomment \undefined \def \bcomment#1{#1}\fi
\ifx \oauthor \undefined \def \oauthor#1{#1}\fi
\ifx \citeauthoryear \undefined \def \citeauthoryear#1{#1}\fi
\ifx \endbibitem  \undefined \def \endbibitem {}\fi
\ifx \bconflocation  \undefined \def \bconflocation#1{#1}\fi
\ifx \arxivurl  \undefined \def \arxivurl#1{\textsf{#1}}\fi
\csname PreBibitemsHook\endcsname

\bibitem{PhysRevA.52.R2493}
\begin{barticle}
\bauthor{\bsnm{Shor}, \binits{P.W.}}:
\batitle{Scheme for reducing decoherence in quantum computer memory}.
\bjtitle{Phys. Rev. A}
\bvolume{52},
\bfpage{2493}--\blpage{2496}
(\byear{1995}).
\doiurl{10.1103/PhysRevA.52.R2493}
\end{barticle}
\endbibitem

\bibitem{PhysRevLett.77.793}
\begin{barticle}
\bauthor{\bsnm{Steane}, \binits{A.M.}}:
\batitle{Error correcting codes in quantum theory}.
\bjtitle{Phys. Rev. Lett.}
\bvolume{77},
\bfpage{793}--\blpage{797}
(\byear{1996}).
\doiurl{10.1103/PhysRevLett.77.793}
\end{barticle}
\endbibitem

\bibitem{1057683}
\begin{barticle}
\bauthor{\bsnm{Gallager}, \binits{R.}}:
\batitle{Low-density parity-check codes}.
\bjtitle{IRE Transactions on Information Theory}
\bvolume{8}(\bissue{1}),
\bfpage{21}--\blpage{28}
(\byear{1962}).
\doiurl{10.1109/TIT.1962.1057683}
\end{barticle}
\endbibitem

\bibitem{748992}
\begin{barticle}
\bauthor{\bsnm{MacKay}, \binits{D.J.C.}}:
\batitle{Good error-correcting codes based on very sparse matrices}.
\bjtitle{IEEE Transactions on Information Theory}
\bvolume{45}(\bissue{2}),
\bfpage{399}--\blpage{431}
(\byear{1999}).
\doiurl{10.1109/18.748992}
\end{barticle}
\endbibitem

\bibitem{mackay1996near}
\begin{barticle}
\bauthor{\bsnm{MacKay}, \binits{D.J.}},
\bauthor{\bsnm{Neal}, \binits{R.M.}}:
\batitle{Near shannon limit performance of low density parity check codes}.
\bjtitle{Electronics letters}
\bvolume{32}(\bissue{18}),
\bfpage{1645}
(\byear{1996})
\end{barticle}
\endbibitem

\bibitem{5075875}
\begin{barticle}
\bauthor{\bsnm{Arikan}, \binits{E.}}:
\batitle{Channel polarization: A method for constructing capacity-achieving
  codes for symmetric binary-input memoryless channels}.
\bjtitle{IEEE Transactions on Information Theory}
\bvolume{55}(\bissue{7}),
\bfpage{3051}--\blpage{3073}
(\byear{2009}).
\doiurl{10.1109/TIT.2009.2021379}
\end{barticle}
\endbibitem

\bibitem{bravyi1998quantum}
\begin{botherref}
\oauthor{\bsnm{Bravyi}, \binits{S.B.}},
\oauthor{\bsnm{Kitaev}, \binits{A.Y.}}:
Quantum codes on a lattice with boundary.
arXiv preprint quant-ph/9811052
(1998)
\end{botherref}
\endbibitem

\bibitem{PhysRevA.89.022321}
\begin{barticle}
\bauthor{\bsnm{Stephens}, \binits{A.M.}}:
\batitle{Fault-tolerant thresholds for quantum error correction with the
  surface code}.
\bjtitle{Phys. Rev. A}
\bvolume{89},
\bfpage{022321}
(\byear{2014}).
\doiurl{10.1103/PhysRevA.89.022321}
\end{barticle}
\endbibitem

\bibitem{bullock2007qudit}
\begin{barticle}
\bauthor{\bsnm{Bullock}, \binits{S.S.}},
\bauthor{\bsnm{Brennen}, \binits{G.K.}}:
\batitle{Qudit surface codes and gauge theory with finite cyclic groups}.
\bjtitle{Journal of Physics A: Mathematical and Theoretical}
\bvolume{40}(\bissue{13}),
\bfpage{3481}
(\byear{2007})
\end{barticle}
\endbibitem

\bibitem{zemor2009cayley}
\begin{bchapter}
\bauthor{\bsnm{Z{\'e}mor}, \binits{G.}}:
\bctitle{On cayley graphs, surface codes, and the limits of homological coding
  for quantum error correction}.
In: \bbtitle{International Conference on Coding and Cryptology},
pp. \bfpage{259}--\blpage{273}
(\byear{2009}).
\bcomment{Springer}
\end{bchapter}
\endbibitem

\bibitem{wang2009threshold}
\begin{botherref}
\oauthor{\bsnm{Wang}, \binits{D.S.}},
\oauthor{\bsnm{Fowler}, \binits{A.G.}},
\oauthor{\bsnm{Stephens}, \binits{A.M.}},
\oauthor{\bsnm{Hollenberg}, \binits{L.C.L.}}:
Threshold error rates for the toric and surface codes.
arXiv preprint arXiv:0905.0531
(2009)
\end{botherref}
\endbibitem

\bibitem{PhysRevA.80.052312}
\begin{barticle}
\bauthor{\bsnm{Fowler}, \binits{A.G.}},
\bauthor{\bsnm{Stephens}, \binits{A.M.}},
\bauthor{\bsnm{Groszkowski}, \binits{P.}}:
\batitle{High-threshold universal quantum computation on the surface code}.
\bjtitle{Phys. Rev. A}
\bvolume{80},
\bfpage{052312}
(\byear{2009}).
\doiurl{10.1103/PhysRevA.80.052312}
\end{barticle}
\endbibitem

\bibitem{bravyi2012subsystem}
\begin{botherref}
\oauthor{\bsnm{Bravyi}, \binits{S.}},
\oauthor{\bsnm{Duclos-Cianci}, \binits{G.}},
\oauthor{\bsnm{Poulin}, \binits{D.}},
\oauthor{\bsnm{Suchara}, \binits{M.}}:
Subsystem surface codes with three-qubit check operators.
arXiv preprint arXiv:1207.1443
(2012)
\end{botherref}
\endbibitem

\bibitem{ghosh2012surface}
\begin{barticle}
\bauthor{\bsnm{Ghosh}, \binits{J.}},
\bauthor{\bsnm{Fowler}, \binits{A.G.}},
\bauthor{\bsnm{Geller}, \binits{M.R.}}:
\batitle{Surface code with decoherence: An analysis of three superconducting
  architectures}.
\bjtitle{Physical Review A}
\bvolume{86}(\bissue{6}),
\bfpage{062318}
(\byear{2012})
\end{barticle}
\endbibitem

\bibitem{PhysRevLett.109.180502}
\begin{barticle}
\bauthor{\bsnm{Fowler}, \binits{A.G.}}:
\batitle{Proof of finite surface code threshold for matching}.
\bjtitle{Phys. Rev. Lett.}
\bvolume{109},
\bfpage{180502}
(\byear{2012}).
\doiurl{10.1103/PhysRevLett.109.180502}
\end{barticle}
\endbibitem

\bibitem{PhysRevLett.109.160503}
\begin{barticle}
\bauthor{\bsnm{Wootton}, \binits{J.R.}},
\bauthor{\bsnm{Loss}, \binits{D.}}:
\batitle{High threshold error correction for the surface code}.
\bjtitle{Phys. Rev. Lett.}
\bvolume{109},
\bfpage{160503}
(\byear{2012}).
\doiurl{10.1103/PhysRevLett.109.160503}
\end{barticle}
\endbibitem

\bibitem{PhysRevA.86.042313}
\begin{barticle}
\bauthor{\bsnm{Fowler}, \binits{A.G.}},
\bauthor{\bsnm{Whiteside}, \binits{A.C.}},
\bauthor{\bsnm{Hollenberg}, \binits{L.C.L.}}:
\batitle{Towards practical classical processing for the surface code: Timing
  analysis}.
\bjtitle{Phys. Rev. A}
\bvolume{86},
\bfpage{042313}
(\byear{2012}).
\doiurl{10.1103/PhysRevA.86.042313}
\end{barticle}
\endbibitem

\bibitem{PhysRevA.86.032324}
\begin{barticle}
\bauthor{\bsnm{Fowler}, \binits{A.G.}},
\bauthor{\bsnm{Mariantoni}, \binits{M.}},
\bauthor{\bsnm{Martinis}, \binits{J.M.}},
\bauthor{\bsnm{Cleland}, \binits{A.N.}}:
\batitle{Surface codes: Towards practical large-scale quantum computation}.
\bjtitle{Phys. Rev. A}
\bvolume{86},
\bfpage{032324}
(\byear{2012}).
\doiurl{10.1103/PhysRevA.86.032324}
\end{barticle}
\endbibitem

\bibitem{fowler2013optimal}
\begin{botherref}
\oauthor{\bsnm{Fowler}, \binits{A.G.}}:
Optimal complexity correction of correlated errors in the surface code.
arXiv preprint arXiv:1310.0863
(2013)
\end{botherref}
\endbibitem

\bibitem{barends2014superconducting}
\begin{barticle}
\bauthor{\bsnm{Barends}, \binits{R.}},
\bauthor{\bsnm{Kelly}, \binits{J.}},
\bauthor{\bsnm{Megrant}, \binits{A.}},
\bauthor{\bsnm{Veitia}, \binits{A.}},
\bauthor{\bsnm{Sank}, \binits{D.}},
\bauthor{\bsnm{Jeffrey}, \binits{E.}},
\bauthor{\bsnm{White}, \binits{T.C.}},
\bauthor{\bsnm{Mutus}, \binits{J.}},
\bauthor{\bsnm{Fowler}, \binits{A.G.}},
\bauthor{\bsnm{Campbell}, \binits{B.}}, \betal:
\batitle{Superconducting quantum circuits at the surface code threshold for
  fault tolerance}.
\bjtitle{Nature}
\bvolume{508}(\bissue{7497}),
\bfpage{500}--\blpage{503}
(\byear{2014})
\end{barticle}
\endbibitem

\bibitem{hill2015surface}
\begin{barticle}
\bauthor{\bsnm{Hill}, \binits{C.D.}},
\bauthor{\bsnm{Peretz}, \binits{E.}},
\bauthor{\bsnm{Hile}, \binits{S.J.}},
\bauthor{\bsnm{House}, \binits{M.G.}},
\bauthor{\bsnm{Fuechsle}, \binits{M.}},
\bauthor{\bsnm{Rogge}, \binits{S.}},
\bauthor{\bsnm{Simmons}, \binits{M.Y.}},
\bauthor{\bsnm{Hollenberg}, \binits{L.C.}}:
\batitle{A surface code quantum computer in silicon}.
\bjtitle{Science advances}
\bvolume{1}(\bissue{9}),
\bfpage{1500707}
(\byear{2015})
\end{barticle}
\endbibitem

\bibitem{delfosse2016linear}
\begin{botherref}
\oauthor{\bsnm{Delfosse}, \binits{N.}},
\oauthor{\bsnm{Iyer}, \binits{P.}},
\oauthor{\bsnm{Poulin}, \binits{D.}}:
A linear-time benchmarking tool for generalized surface codes.
arXiv preprint arXiv:1611.04256
(2016)
\end{botherref}
\endbibitem

\bibitem{PhysRevApplied.8.034021}
\begin{barticle}
\bauthor{\bsnm{Versluis}, \binits{R.}},
\bauthor{\bsnm{Poletto}, \binits{S.}},
\bauthor{\bsnm{Khammassi}, \binits{N.}},
\bauthor{\bsnm{Tarasinski}, \binits{B.}},
\bauthor{\bsnm{Haider}, \binits{N.}},
\bauthor{\bsnm{Michalak}, \binits{D.J.}},
\bauthor{\bsnm{Bruno}, \binits{A.}},
\bauthor{\bsnm{Bertels}, \binits{K.}},
\bauthor{\bsnm{DiCarlo}, \binits{L.}}:
\batitle{Scalable quantum circuit and control for a superconducting surface
  code}.
\bjtitle{Phys. Rev. Applied}
\bvolume{8},
\bfpage{034021}
(\byear{2017}).
\doiurl{10.1103/PhysRevApplied.8.034021}
\end{barticle}
\endbibitem

\bibitem{huang2020alibaba}
\begin{botherref}
\oauthor{\bsnm{Huang}, \binits{C.}},
\oauthor{\bsnm{Ni}, \binits{X.}},
\oauthor{\bsnm{Zhang}, \binits{F.}},
\oauthor{\bsnm{Newman}, \binits{M.}},
\oauthor{\bsnm{Ding}, \binits{D.}},
\oauthor{\bsnm{Gao}, \binits{X.}},
\oauthor{\bsnm{Wang}, \binits{T.}},
\oauthor{\bsnm{Zhao}, \binits{H.-H.}},
\oauthor{\bsnm{Wu}, \binits{F.}},
\oauthor{\bsnm{Zhang}, \binits{G.}}, et al.:
Alibaba cloud quantum development platform: Surface code simulations with
  crosstalk.
arXiv preprint arXiv:2002.08918
(2020)
\end{botherref}
\endbibitem

\bibitem{doi:10.1137/S0097539799359385}
\begin{barticle}
\bauthor{\bsnm{Aharonov}, \binits{D.}},
\bauthor{\bsnm{Ben-Or}, \binits{M.}}:
\batitle{Fault-tolerant quantum computation with constant error rate}.
\bjtitle{SIAM Journal on Computing}
\bvolume{38}(\bissue{4}),
\bfpage{1207}--\blpage{1282}
(\byear{2008}).
\doiurl{10.1137/S0097539799359385}
\end{barticle}
\endbibitem

\bibitem{knill1996concatenated}
\begin{botherref}
\oauthor{\bsnm{Knill}, \binits{E.}},
\oauthor{\bsnm{Laflamme}, \binits{R.}}:
Concatenated quantum codes.
arXiv preprint quant-ph/9608012
(1996)
\end{botherref}
\endbibitem

\bibitem{knill2005quantum}
\begin{barticle}
\bauthor{\bsnm{Knill}, \binits{E.}}:
\batitle{Quantum computing with realistically noisy devices}.
\bjtitle{Nature}
\bvolume{434}(\bissue{7029}),
\bfpage{39}--\blpage{44}
(\byear{2005})
\end{barticle}
\endbibitem

\bibitem{gottesman2013fault}
\begin{botherref}
\oauthor{\bsnm{Gottesman}, \binits{D.}}:
Fault-tolerant quantum computation with constant overhead.
arXiv preprint arXiv:1310.2984
(2013)
\end{botherref}
\endbibitem

\bibitem{tillich2013quantum}
\begin{barticle}
\bauthor{\bsnm{Tillich}, \binits{J.-P.}},
\bauthor{\bsnm{Z{\'e}mor}, \binits{G.}}:
\batitle{Quantum ldpc codes with positive rate and minimum distance
  proportional to the square root of the blocklength}.
\bjtitle{IEEE Transactions on Information Theory}
\bvolume{60}(\bissue{2}),
\bfpage{1193}--\blpage{1202}
(\byear{2013})
\end{barticle}
\endbibitem

\bibitem{freedman2013quantum}
\begin{botherref}
\oauthor{\bsnm{Freedman}, \binits{M.H.}},
\oauthor{\bsnm{Hastings}, \binits{M.B.}}:
Quantum systems on non-$ k $-hyperfinite complexes: A generalization of
  classical statistical mechanics on expander graphs.
arXiv preprint arXiv:1301.1363
(2013)
\end{botherref}
\endbibitem

\bibitem{guth2014quantum}
\begin{barticle}
\bauthor{\bsnm{Guth}, \binits{L.}},
\bauthor{\bsnm{Lubotzky}, \binits{A.}}:
\batitle{Quantum error correcting codes and 4-dimensional arithmetic hyperbolic
  manifolds}.
\bjtitle{Journal of Mathematical Physics}
\bvolume{55}(\bissue{8}),
\bfpage{082202}
(\byear{2014})
\end{barticle}
\endbibitem

\bibitem{kovalev2013fault}
\begin{barticle}
\bauthor{\bsnm{Kovalev}, \binits{A.A.}},
\bauthor{\bsnm{Pryadko}, \binits{L.P.}}:
\batitle{Fault tolerance of quantum low-density parity check codes with
  sublinear distance scaling}.
\bjtitle{Physical Review A}
\bvolume{87}(\bissue{2}),
\bfpage{020304}
(\byear{2013})
\end{barticle}
\endbibitem

\bibitem{hastings2013decoding}
\begin{botherref}
\oauthor{\bsnm{Hastings}, \binits{M.B.}}:
Decoding in hyperbolic spaces: Ldpc codes with linear rate and efficient error
  correction.
arXiv preprint arXiv:1312.2546
(2013)
\end{botherref}
\endbibitem

\bibitem{breuckmann2016constructions}
\begin{barticle}
\bauthor{\bsnm{Breuckmann}, \binits{N.P.}},
\bauthor{\bsnm{Terhal}, \binits{B.M.}}:
\batitle{Constructions and noise threshold of hyperbolic surface codes}.
\bjtitle{IEEE transactions on Information Theory}
\bvolume{62}(\bissue{6}),
\bfpage{3731}--\blpage{3744}
(\byear{2016})
\end{barticle}
\endbibitem

\bibitem{breuckmann2017hyperbolic}
\begin{barticle}
\bauthor{\bsnm{Breuckmann}, \binits{N.P.}},
\bauthor{\bsnm{Vuillot}, \binits{C.}},
\bauthor{\bsnm{Campbell}, \binits{E.}},
\bauthor{\bsnm{Krishna}, \binits{A.}},
\bauthor{\bsnm{Terhal}, \binits{B.M.}}:
\batitle{Hyperbolic and semi-hyperbolic surface codes for quantum storage}.
\bjtitle{Quantum Science and Technology}
\bvolume{2}(\bissue{3}),
\bfpage{035007}
(\byear{2017})
\end{barticle}
\endbibitem

\bibitem{breuckmann2021single}
\begin{barticle}
\bauthor{\bsnm{Breuckmann}, \binits{N.P.}},
\bauthor{\bsnm{Londe}, \binits{V.}}:
\batitle{Single-shot decoding of linear rate ldpc quantum codes with high
  performance}.
\bjtitle{IEEE Transactions on Information Theory}
\bvolume{68}(\bissue{1}),
\bfpage{272}--\blpage{286}
(\byear{2021})
\end{barticle}
\endbibitem

\bibitem{grospellier2021combining}
\begin{barticle}
\bauthor{\bsnm{Grospellier}, \binits{A.}},
\bauthor{\bsnm{Grou{\`e}s}, \binits{L.}},
\bauthor{\bsnm{Krishna}, \binits{A.}},
\bauthor{\bsnm{Leverrier}, \binits{A.}}:
\batitle{Combining hard and soft decoders for hypergraph product codes}.
\bjtitle{Quantum}
\bvolume{5},
\bfpage{432}
(\byear{2021})
\end{barticle}
\endbibitem

\bibitem{guo2013polar}
\begin{barticle}
\bauthor{\bsnm{Guo}, \binits{Y.}},
\bauthor{\bsnm{Lee}, \binits{M.H.}},
\bauthor{\bsnm{Zeng}, \binits{G.}}:
\batitle{Polar quantum channel coding with optical multi-qubit entangling gates
  for capacity-achieving channels}.
\bjtitle{Quantum information processing}
\bvolume{12}(\bissue{4}),
\bfpage{1659}--\blpage{1676}
(\byear{2013})
\end{barticle}
\endbibitem

\bibitem{renes2012efficient}
\begin{barticle}
\bauthor{\bsnm{Renes}, \binits{J.M.}},
\bauthor{\bsnm{Dupuis}, \binits{F.}},
\bauthor{\bsnm{Renner}, \binits{R.}}:
\batitle{Efficient polar coding of quantum information}.
\bjtitle{Physical Review Letters}
\bvolume{109}(\bissue{5}),
\bfpage{050504}
(\byear{2012})
\end{barticle}
\endbibitem

\bibitem{wilde2013polar}
\begin{barticle}
\bauthor{\bsnm{Wilde}, \binits{M.M.}},
\bauthor{\bsnm{Guha}, \binits{S.}}:
\batitle{Polar codes for degradable quantum channels}.
\bjtitle{IEEE Transactions on Information Theory}
\bvolume{59}(\bissue{7}),
\bfpage{4718}--\blpage{4729}
(\byear{2013})
\end{barticle}
\endbibitem

\bibitem{hirche2015polar}
\begin{botherref}
\oauthor{\bsnm{Hirche}, \binits{C.}}:
Polar codes in quantum information theory.
arXiv preprint arXiv:1501.03737
(2015)
\end{botherref}
\endbibitem

\bibitem{7208851}
\begin{barticle}
\bauthor{\bsnm{Renes}, \binits{J.M.}},
\bauthor{\bsnm{Sutter}, \binits{D.}},
\bauthor{\bsnm{Dupuis}, \binits{F.}},
\bauthor{\bsnm{Renner}, \binits{R.}}:
\batitle{Efficient quantum polar codes requiring no preshared entanglement}.
\bjtitle{IEEE Transactions on Information Theory}
\bvolume{61}(\bissue{11}),
\bfpage{6395}--\blpage{6414}
(\byear{2015}).
\doiurl{10.1109/TIT.2015.2468084}
\end{barticle}
\endbibitem

\bibitem{7370934}
\begin{barticle}
\bauthor{\bsnm{Hirche}, \binits{C.}},
\bauthor{\bsnm{Morgan}, \binits{C.}},
\bauthor{\bsnm{Wilde}, \binits{M.M.}}:
\batitle{Polar codes in network quantum information theory}.
\bjtitle{IEEE Transactions on Information Theory}
\bvolume{62}(\bissue{2}),
\bfpage{915}--\blpage{924}
(\byear{2016}).
\doiurl{10.1109/TIT.2016.2514319}
\end{barticle}
\endbibitem

\bibitem{8989387}
\begin{bchapter}
\bauthor{\bsnm{Dupuis}, \binits{F.}},
\bauthor{\bsnm{Goswami}, \binits{A.}},
\bauthor{\bsnm{Mhalla}, \binits{M.}},
\bauthor{\bsnm{Savin}, \binits{V.}}:
\bctitle{Purely quantum polar codes}.
In: \bbtitle{2019 IEEE Information Theory Workshop (ITW)},
pp. \bfpage{1}--\blpage{5}
(\byear{2019}).
\doiurl{10.1109/ITW44776.2019.8989387}
\end{bchapter}
\endbibitem

\bibitem{8815775}
\begin{barticle}
\bauthor{\bsnm{Babar}, \binits{Z.}},
\bauthor{\bsnm{Kaykac~Egilmez}, \binits{Z.B.}},
\bauthor{\bsnm{Xiang}, \binits{L.}},
\bauthor{\bsnm{Chandra}, \binits{D.}},
\bauthor{\bsnm{Maunder}, \binits{R.G.}},
\bauthor{\bsnm{Ng}, \binits{S.X.}},
\bauthor{\bsnm{Hanzo}, \binits{L.}}:
\batitle{Polar codes and their quantum-domain counterparts}.
\bjtitle{IEEE Communications Surveys Tutorials}
\bvolume{22}(\bissue{1}),
\bfpage{123}--\blpage{155}
(\byear{2020}).
\doiurl{10.1109/COMST.2019.2937923}
\end{barticle}
\endbibitem

\bibitem{6302198}
\begin{barticle}
\bauthor{\bsnm{Wilde}, \binits{M.M.}},
\bauthor{\bsnm{Guha}, \binits{S.}}:
\batitle{Polar codes for classical-quantum channels}.
\bjtitle{IEEE Transactions on Information Theory}
\bvolume{59}(\bissue{2}),
\bfpage{1175}--\blpage{1187}
(\byear{2013}).
\doiurl{10.1109/TIT.2012.2218792}
\end{barticle}
\endbibitem

\bibitem{6284203}
\begin{bchapter}
\bauthor{\bsnm{Wilde}, \binits{M.M.}},
\bauthor{\bsnm{Renes}, \binits{J.M.}}:
\bctitle{Quantum polar codes for arbitrary channels}.
In: \bbtitle{2012 IEEE International Symposium on Information Theory
  Proceedings},
pp. \bfpage{334}--\blpage{338}
(\byear{2012}).
\doiurl{10.1109/ISIT.2012.6284203}
\end{bchapter}
\endbibitem

\bibitem{goswami2021quantum}
\begin{botherref}
\oauthor{\bsnm{Goswami}, \binits{A.}},
\oauthor{\bsnm{Mhalla}, \binits{M.}},
\oauthor{\bsnm{Savin}, \binits{V.}}:
Quantum polarization of qudit channels.
arXiv preprint arXiv:2101.10194
(2021)
\end{botherref}
\endbibitem

\bibitem{9241807}
\begin{barticle}
\bauthor{\bsnm{Ramakrishnan}, \binits{N.}},
\bauthor{\bsnm{Iten}, \binits{R.}},
\bauthor{\bsnm{Scholz}, \binits{V.B.}},
\bauthor{\bsnm{Berta}, \binits{M.}}:
\batitle{Computing quantum channel capacities}.
\bjtitle{IEEE Transactions on Information Theory}
\bvolume{67}(\bissue{2}),
\bfpage{946}--\blpage{960}
(\byear{2021}).
\doiurl{10.1109/TIT.2020.3034471}
\end{barticle}
\endbibitem

\bibitem{8242350}
\begin{barticle}
\bauthor{\bsnm{Gyongyosi}, \binits{L.}},
\bauthor{\bsnm{Imre}, \binits{S.}},
\bauthor{\bsnm{Nguyen}, \binits{H.V.}}:
\batitle{A survey on quantum channel capacities}.
\bjtitle{IEEE Communications Surveys Tutorials}
\bvolume{20}(\bissue{2}),
\bfpage{1149}--\blpage{1205}
(\byear{2018}).
\doiurl{10.1109/COMST.2017.2786748}
\end{barticle}
\endbibitem

\bibitem{holevo2020quantum}
\begin{barticle}
\bauthor{\bsnm{Holevo}, \binits{A.S.}}:
\batitle{Quantum channel capacities}.
\bjtitle{Quantum Electronics}
\bvolume{50}(\bissue{5}),
\bfpage{440}
(\byear{2020})
\end{barticle}
\endbibitem

\bibitem{5592851}
\begin{bchapter}
\bauthor{\bsnm{Smith}, \binits{G.}}:
\bctitle{Quantum channel capacities}.
In: \bbtitle{2010 IEEE Information Theory Workshop},
pp. \bfpage{1}--\blpage{5}
(\byear{2010}).
\doiurl{10.1109/CIG.2010.5592851}
\end{bchapter}
\endbibitem

\bibitem{holevo2010mutual}
\begin{barticle}
\bauthor{\bsnm{Holevo}, \binits{A.S.}},
\bauthor{\bsnm{Shirokov}, \binits{M.E.}}:
\batitle{Mutual and coherent information for infinite-dimensional quantum
  channels}.
\bjtitle{Problems of information transmission}
\bvolume{46}(\bissue{3}),
\bfpage{201}--\blpage{218}
(\byear{2010})
\end{barticle}
\endbibitem

\bibitem{bennett2004quantum}
\begin{barticle}
\bauthor{\bsnm{Bennett}, \binits{C.H.}},
\bauthor{\bsnm{Shor}, \binits{P.W.}}:
\batitle{Quantum channel capacities}.
\bjtitle{Science}
\bvolume{303}(\bissue{5665}),
\bfpage{1784}--\blpage{1787}
(\byear{2004})
\end{barticle}
\endbibitem

\bibitem{PhysRevA.57.4153}
\begin{barticle}
\bauthor{\bsnm{Barnum}, \binits{H.}},
\bauthor{\bsnm{Nielsen}, \binits{M.A.}},
\bauthor{\bsnm{Schumacher}, \binits{B.}}:
\batitle{Information transmission through a noisy quantum channel}.
\bjtitle{Phys. Rev. A}
\bvolume{57},
\bfpage{4153}--\blpage{4175}
(\byear{1998}).
\doiurl{10.1103/PhysRevA.57.4153}
\end{barticle}
\endbibitem

\bibitem{PhysRevA.55.1613}
\begin{barticle}
\bauthor{\bsnm{Lloyd}, \binits{S.}}:
\batitle{Capacity of the noisy quantum channel}.
\bjtitle{Phys. Rev. A}
\bvolume{55},
\bfpage{1613}--\blpage{1622}
(\byear{1997}).
\doiurl{10.1103/PhysRevA.55.1613}
\end{barticle}
\endbibitem

\bibitem{kretschmann2004tema}
\begin{barticle}
\bauthor{\bsnm{Kretschmann}, \binits{D.}},
\bauthor{\bsnm{Werner}, \binits{R.F.}}:
\batitle{Tema con variazioni: quantum channel capacity}.
\bjtitle{New Journal of Physics}
\bvolume{6}(\bissue{1}),
\bfpage{26}
(\byear{2004})
\end{barticle}
\endbibitem

\bibitem{shor2003capacities}
\begin{botherref}
\oauthor{\bsnm{Shor}, \binits{P.W.}}:
Capacities of quantum channels and how to find them.
arXiv preprint quant-ph/0304102
(2003)
\end{botherref}
\endbibitem

\bibitem{javidian2021quantum}
\begin{botherref}
\oauthor{\bsnm{Javidian}, \binits{M.A.}},
\oauthor{\bsnm{Aggarwal}, \binits{V.}},
\oauthor{\bsnm{Bao}, \binits{F.}},
\oauthor{\bsnm{Jacob}, \binits{Z.}}:
Quantum entropic causal inference.
arXiv preprint arXiv:2102.11764
(2021)
\end{botherref}
\endbibitem

\bibitem{PhysRevA.54.2629}
\begin{barticle}
\bauthor{\bsnm{Schumacher}, \binits{B.}},
\bauthor{\bsnm{Nielsen}, \binits{M.A.}}:
\batitle{Quantum data processing and error correction}.
\bjtitle{Phys. Rev. A}
\bvolume{54},
\bfpage{2629}--\blpage{2635}
(\byear{1996}).
\doiurl{10.1103/PhysRevA.54.2629}
\end{barticle}
\endbibitem

\bibitem{nielsen2002quantum}
\begin{botherref}
\oauthor{\bsnm{Nielsen}, \binits{M.A.}},
\oauthor{\bsnm{Chuang}, \binits{I.}}:
Quantum computation and quantum information.
American Association of Physics Teachers
(2002)
\end{botherref}
\endbibitem

\bibitem{PhysRevA.54.2614}
\begin{barticle}
\bauthor{\bsnm{Schumacher}, \binits{B.}}:
\batitle{Sending entanglement through noisy quantum channels}.
\bjtitle{Phys. Rev. A}
\bvolume{54},
\bfpage{2614}--\blpage{2628}
(\byear{1996}).
\doiurl{10.1103/PhysRevA.54.2614}
\end{barticle}
\endbibitem

\bibitem{hastings2009superadditivity}
\begin{barticle}
\bauthor{\bsnm{Hastings}, \binits{M.B.}}:
\batitle{Superadditivity of communication capacity using entangled inputs}.
\bjtitle{Nature Physics}
\bvolume{5}(\bissue{4}),
\bfpage{255}--\blpage{257}
(\byear{2009})
\end{barticle}
\endbibitem

\bibitem{cubitt2015unbounded}
\begin{barticle}
\bauthor{\bsnm{Cubitt}, \binits{T.}},
\bauthor{\bsnm{Elkouss}, \binits{D.}},
\bauthor{\bsnm{Matthews}, \binits{W.}},
\bauthor{\bsnm{Ozols}, \binits{M.}},
\bauthor{\bsnm{P{\'e}rez-Garc{\'\i}a}, \binits{D.}},
\bauthor{\bsnm{Strelchuk}, \binits{S.}}:
\batitle{Unbounded number of channel uses may be required to detect quantum
  capacity}.
\bjtitle{Nature communications}
\bvolume{6}(\bissue{1}),
\bfpage{1}--\blpage{4}
(\byear{2015})
\end{barticle}
\endbibitem

\bibitem{smith2008quantum}
\begin{barticle}
\bauthor{\bsnm{Smith}, \binits{G.}},
\bauthor{\bsnm{Yard}, \binits{J.}}:
\batitle{Quantum communication with zero-capacity channels}.
\bjtitle{Science}
\bvolume{321}(\bissue{5897}),
\bfpage{1812}--\blpage{1815}
(\byear{2008})
\end{barticle}
\endbibitem

\bibitem{PhysRevLett.113.140401}
\begin{barticle}
\bauthor{\bsnm{Baumgratz}, \binits{T.}},
\bauthor{\bsnm{Cramer}, \binits{M.}},
\bauthor{\bsnm{Plenio}, \binits{M.B.}}:
\batitle{Quantifying coherence}.
\bjtitle{Phys. Rev. Lett.}
\bvolume{113},
\bfpage{140401}
(\byear{2014}).
\doiurl{10.1103/PhysRevLett.113.140401}
\end{barticle}
\endbibitem

\bibitem{PhysRevLett.115.020403}
\begin{barticle}
\bauthor{\bsnm{Streltsov}, \binits{A.}},
\bauthor{\bsnm{Singh}, \binits{U.}},
\bauthor{\bsnm{Dhar}, \binits{H.S.}},
\bauthor{\bsnm{Bera}, \binits{M.N.}},
\bauthor{\bsnm{Adesso}, \binits{G.}}:
\batitle{Measuring quantum coherence with entanglement}.
\bjtitle{Phys. Rev. Lett.}
\bvolume{115},
\bfpage{020403}
(\byear{2015}).
\doiurl{10.1103/PhysRevLett.115.020403}
\end{barticle}
\endbibitem

\end{thebibliography}


\end{document}